\keywords{hyperdoctrines, graded modalities, linear logic, quantitative reasoning}
\def\eg{{\em e.g.}\xspace} \def\ie{{\slshape i.e.}\xspace}
\def\cf{cf.~} 
\def\dfn#1{\emph{#1}} 
\newcommand{\refToSect}[1]{Section~{\ref{sect:#1}}}
\newcommand{\refToFig}[1]{Figure~{\ref{fig:#1}}}
\newcommand{\refToDef}[1]{Definition~{\ref{def:#1}}}
\newcommand{\refToDefItem}[2]{\refToDef{#1}({\ref{def:#1:#2}})}
\newcommand{\refToThm}[1]{Theorem~{\ref{thm:#1}}}
\newcommand{\refToProp}[1]{Proposition~{\ref{prop:#1}}}
\newcommand{\refToLem}[1]{Lemma~{\ref{lem:#1}}}
\newcommand{\refToCor}[1]{Corollary~{\ref{cor:#1}}}
\newcommand{\refToEx}[1]{Example~{\ref{ex:#1}}}
\newcommand{\refToExItem}[2]{\refToEx{#1}(\ref{ex:#1:#2})} 
\newcommand{\refToRem}[1]{Remark~{\ref{rem:#1}}}
\newcommand{\refToDia}[1]{Diagram~{\ref{dia:#1}}}
\newcommand{\Tuple}[1]{ {\langle {#1} \rangle} }
\def\ple#1{\ensuremath{\Tuple{{#1}}}}
\def\vuoto{}
\newcommand{\FUN}[4]{\ensuremath{{#2}#1{#3}\rightarrow{#4}}}
\newcommand{\fun}[3]{\relax\def\testa{#1}\relax\ifx\testa\vuoto
  \relax\FUN{}{}{{#2}}{{#3}}\else\relax\FUN{:}{{#1}}{{#2}}{{#3}}\fi}
\newcommand{\PW}[1]{\ensuremath{\mathop{\mathscr{P}{#1}}}}
\newcommand{\pw}[1]{\relax\def\testa{#1}\relax\ifx\testa\vuoto
  \relax\PW{}\else\relax\PW{\left(#1\right)}\fi}
\newcommand{\N}{\mathbb{N}}
\newcommand{\RPos}{[0,\infty)}
\def\tt{\ensuremath{\textsf{t\kern-.3ex t}}}
\def\ff{\ensuremath{\textsf{f\kern-.3ex f}}}
\let\Land\wedge
\let\ForalL\forall \def\Forall#1.{\ForalL_{#1}}
\let\ExistS\exists \def\Exists#1.{\ExistS_{#1}}
\newcommand{\Lbang}{\mathsf{!}} 
\newcommand{\Ltensor}{\otimes}
\newcommand{\Lone}{\mathbf{1}}
\newcommand{\Lzero}{\mathbf{0}}
\newcommand{\LLand}{\mathbin{\&}} 
\newcommand{\Ltop}{\top}
\newcommand{\Leq}{\eqcirc} 
\DeclareFontFamily{OT1}{pzc}{}
\DeclareFontShape{OT1}{pzc}{m}{it}{<->s*[1.30]pzcmi7t}{}
\DeclareMathAlphabet{\mathpzc}{OT1}{pzc}{m}{it}
\def\ct#1{\ensuremath{\mathpzc{#1}}}
\def\Ct#1{\ensuremath{\mathbf{#1}}}
\def\id#1{\ensuremath{\mathrm{id}_{#1}}}
\def\Id#1{\ensuremath{\mathrm{Id}_{#1}}}
\def\op{^{\mbox{\normalfont\scriptsize op}}}
\def\blank{\mathchoice{\mbox{--}}{\mbox{--}}
{\mbox{\scriptsize--}}{\mbox{\tiny--}}}
\newcommand{\CC}{\ct{C}\xspace}
\newcommand{\D}{\ct{D}\xspace}
\newcommand{\oneAr}[3]{\ensuremath{{#1}:{#2}\rightarrow{#3}}}
\newcommand{\twoAr}[3]{\ensuremath{{#1}:{#2}\Rightarrow{#3}}}
\def\tdot{\textbf{.}}
\def\lsta{\vrule depth4pt width0pt}
\def\lstb{\vrule height5pt width0pt}
\def\arnta[#1]{\ar[#1]|-*=0[@]{\lsta\tdot}}
\def\arntb[#1]{\ar[#1]|-*=0[@]{\lstb\tdot}}
\newcommand{\TNat}[3]{\ensuremath{{#1}:{#2}
 \stackrel{\makebox{\kern-.3ex\tdot}}\rightarrow{#3}}}
\newcommand{\Sub}[2][{}]{\relax\def\testa{#2}\relax
  \ensuremath{\mathsf{Sub}_{#1}\relax\ifx\vuoto\testa\relax\else
    \left(#2\right)\fi}}
\newcommand{\Set}{\ct{Set}\xspace}
\newcommand{\Pos}{\ct{Pos}\xspace}
\newcommand{\DC}{\Ct{Dtn}\xspace} 
\newcommand{\PD}{\Ct{PD}\xspace} 
\newcommand{\FOD}{\Ct{FOD}\xspace} 
\newcommand{\HD}{\Ct{HD}\xspace} 
\newcommand{\ED}{\Ct{ED}\xspace}
\newcommand{\MD}{\Ct{LD}\xspace}
\newcommand{\GMD}[1]{\ensuremath{\MD_{#1}}\xspace } 
\newcommand{\EGMD}[1]{\ensuremath{\Ct{LLD}_{#1}}\xspace } 
\newcommand{\ElGMD}[1]{\ensuremath{\Ct{ELD}_{#1}}\xspace }
\newcommand{\ELLL}[1]{\ensuremath{\Ct{QHD}_{#1}}\xspace } 
\newcommand{\LLL}[1]{\ensuremath{\Ct{FLD}_{#1}}\xspace } 
\newcommand{\GMEFun}{\shopr{R_{Lip}}} 
\newcommand{\GEMFun}{\shopr{U_{Lip}}} 
\newcommand{\ELFun}{\shopr{I_{EL}}} 
\newcommand{\LEFun}{\shopr{R_{EL}}} 
\newcommand{\ee}{\epsilon}
\def\enab#1{\ensuremath{#1}\xspace}
\newcommand{\cmd}{\enab{\mathsf{K}}}
\newcommand{\counit}{\enab{\epsilon}}
\edef\cdrestoreat{
\noexpand\catcode\lq\noexpand\@=\the\catcode\lq\@}\catcode\lq\@=11
\def\unita{\@ifnextchar[{\UNita}{\UNita[\cmd]}}
\def\UNita[#1]{\enab{\varsigma^{#1}}}
\def\Forget{\@ifnextchar[{\FOrget}{\FOrget[\cmd]}}
\def\ForgetLift{\@ifnextchar[{\FOrgetlift}{\FOrgetlift[\cmd]}}
\def\FOrget[#1]{\ensuremath{\mathit{U}^{#1}}}
\def\FOrgetlift[#1]{\ensuremath{\iota^{#1}}}
\newcommand{\unit}{\enab{\eta}}
\newcommand{\Doc}{\ensuremath{\mathit{P}}\xspace}
\newcommand{\aDoc}{\ensuremath{\mathit{Q}}\xspace}
\newcommand{\reidx}[1]{_{#1}} 
\newcommand{\DReIdx}[2]{\ensuremath{{#1}\reidx{#2}}}
\newcommand{\terDoc}{\mathit{1}} 
\newcommand{\order}{\le}
\def\shopr#1{\ensuremath{\mathrm{#1}}\xspace}
\newcommand{\fmul}{\ast}
\newcommand{\funit}{\kappa}
\newcommand{\krpmul}{\circ}
\newcommand{\krpunit}{\epsilon}
\newcommand{\WW}{\mathbb{W}}
\newcommand{\RR}{\ensuremath{R}\xspace} 
\newcommand{\RSet}{|R|}
\newcommand{\rord}{\preceq} 
\newcommand{\rplus}{+}
\newcommand{\rmul}{\cdot}
\newcommand{\rzero}{0} 
\newcommand{\rone}{1} 
\newcommand{\res}{r}
\newcommand{\ares}{s} 
\newcommand{\bres}{q} 
\newcommand{\rmod}[1]{\flat_{#1}} 
\newcommand{\armod}[1]{\sharp_{#1}} 
\newcommand{\NN}{\mathbb{N}} 
\newcommand{\RRPos}{\mathbb{R}_{\geq0}} 
\newcommand{\delem}{\mathsf{d}} 
\newcommand{\adelem}{\mathsf{e}} 
\newcommand{\dist}{\rho}
\newcommand{\adist}{\sigma}
\newcommand{\bdist}{\tau} 
\newcommand{\dmul}{\boxtimes} 
\newcommand{\Met}[2]{\ensuremath{\ct{L}_{\ple{#1,#2}}}} 
\newcommand{\In}[2]{\ensuremath{#1_{!}}\xspace} 
\newcommand{\Des}[1]{\ensuremath{\ct{Des}_{#1}}} 
\newcommand{\DMet}[2]{\mathit{D}_{\ple{#1,#2}}}
\newcommand{\DIn}[1]{\ensuremath{#1_{!}}\xspace}
\newcommand{\lip}{Lipschitz\xspace} 
\newcommand{\pl}{primary linear\xspace} 
\newcommand{\Pl}{Primary linear\xspace} 
\newcommand{\LIP}{\ct{Met}_L}
\newcommand{\LIPDoc}{\ensuremath{\mathit{Lip}}\xspace}
\newcommand{\app}{.}
\def\tracks#1{\text{ tracks }}
\newif\ifcomments
\def\RB#1{\mathchoice
  {\rotatebox[origin=c]{180}{$#1$}}
  {\rotatebox[origin=c]{180}{$#1$}}
  {\rotatebox[origin=c]{180}{$\scriptstyle#1$}}
  {\rotatebox[origin=c]{180}{$\scriptscriptstyle#1$}}}
\def\Ex{\RB{E}\kern-.3ex}
\def\Al{\RB{A}\kern-.6ex}
\newcommand {\Infer} [3] [{}] {
  \inferrule*[%
    left={\rn{#1}},%
    vcenter%
  ]%
  {#2}{#3}
}
\newcommand{\rn}[1]{\textsc{[#1]}} 
\newcommand{\qrn}[1]{\textbf{\small (#1)}} 
\newcommand{\lcons}[3][]{{#2} \vdash_{#1} {#3}}
\newcommand{\form}{\phi} 
\newcommand{\aform}{\psi}
\newcommand{\bform}{\theta} 
\newcommand{\hps}{\Gamma}
\newcommand{\ahps}{\Delta} 
\newcommand{\trm}{t}
\newcommand{\atrm}{u} 
\newcommand{\btrm}{v} 
\newcommand{\var}{x}
\newcommand{\subst}[3]{#1[#2/#3]} 
\newcommand{\srt}{\sigma}
\newcommand{\asrt}{\tau}
\newcommand{\Var}{\mathit{V}}
\newcommand{\Trm}{\mathit{Trm}}
\newcommand{\Wff}{\mathit{Wff}} 
\newcommand{\sem}[1]{\llbracket #1 \rrbracket}  
\newcommand{\RPLL}{PLL$_\RR$\xspace}
\newcommand{\ERPLL}{L\RPLL\xspace} 
\newcommand{\ERLL}{LL$_\RR$\xspace} 
\newcommand{\IERLL}{ILL$_\RR$\xspace}
\newcommand{\gr}{\mathsf{gr}} 
\newcommand{\gar}[1]{|#1|}
\newcommand{\PP}{\pw{} } 
\newcommand{\Prop}{Prp} 
\newcommand{\TT}{\mathcal{T}}
\newcommand{\pr}{\mathrm{pr}} 
\newcommand{\qsplus}{+}
\newcommand{\qszero}{0} 
\newcommand{\mvplus}{+}
\newcommand{\mvzero}{0}
\newcommand{\mvmin}{-}
\newcommand{\mvscalar}{\lambda} 
\newcommand{\onepred}{\mathsf{u}}
\newcommand{\lcab}{\mathsf{b}}
\begin{document}

\title{Quantitative Equality in Substructural Logic via Lipschitz doctrines} 


\author[F.~Dagnino]{Francesco Dagnino\lmcsorcid{0000-0003-3599-3535}}[a]	
\author[F.~Pasquali]{Fabio Pasquali\lmcsorcid{0000-0002-3599-5525}}[b]	

\address{DIBRIS, Università di Genova, Italy}	
\email{francesco.dagnino@dibris.unige.it}  

\address{DIMA, Università di Genova, Italy}	
\email{pasquali@dima.unige.it}  





\begin{abstract}
Substructural logics naturally support a quantitative interpretation of formulas, as they are seen as consumable resources.  Distances are the quantitative counterpart of equivalence relations: they measure how much two objects are similar, rather than just saying whether they are equivalent or not. Hence, they provide the natural choice for modelling equality in a substructural setting.  In this paper, we develop this idea,  using the categorical language of Lawvere's doctrines.  We work in a minimal fragment of Linear Logic enriched by graded modalities, which are needed to write a resource sensitive substitution rule for equality, enabling its quantitative interpretation as a distance.  We introduce both a deductive calculus and the notion of Lipschitz doctrine to give it a sound and complete categorical semantics.  The study of 2-categorical properties of Lipschitz doctrines provides us with a universal construction, which generates examples based for instance on metric spaces and quantitative realisability. Finally, we show how to smoothly extend our results to richer substructural logics, up to full Linear Logic with quantifiers. 
\end{abstract}

\maketitle


\section{Introduction}
\label{sect:intro}

Equality is probably the most elementary and non-trivial predicate one can consider in logic. 
It enables a very basic task, that is,  reasoning about the identity of expressions 
and thus of the objects they denote. 
Equality in First Order Logic is nowadays fairly well understood from both a syntactic and a semantic perspective. 
On the syntactic side, 
equality can be described as a binary predicate $\trm\Leq\atrm$, relating two terms $\trm$ and $\atrm$, which has to satisfy a couple of properties: 
every term has to be equal to itself (\emph{reflexivity}) and, 
whenever we know that $\trm$ and $\atrm$ are equal and a formula $\form$ holds on $\trm$, then it holds on $\atrm$ as well (\emph{substitutivity}). 
In other words, 
the former establishes that some trivial identities always hold, 
while the latter  allows one to \emph{transport} a property $\form$ from $\trm$ to $\atrm$ when they are equal; 
indeed, substitutivity is also known as the transport property of equality. 
More formally,
the equality predicate has to satisfy the following entailments: 
\[
\lcons{\Ltop}{\trm\Leq\trm}\qquad\qquad
\lcons{\subst{\form}{\trm}{\var}\LLand\trm\Leq\atrm}{\subst{\form}{\atrm}{\var}} 
\]
where $\Ltop$ is the true predicate, $\LLand$ is the conjunction and $\subst\form\trm\var$ denotes the substitution of the term $\trm$ for the variable $\var$ in the formula $\form$. 

A simple semantic counterpart of the above description of equality  is given by \emph{equivalence relations}. 
Indeed, considering a set $X$ of objects/individuals one is interested in, 
a property/formula on $X$ can be seen as a function \fun\alpha{X}{\{0,1\}}, which says, for each object $x$ in $X$, whether or not it has the property $\alpha$. 
Then, equality can be safely modelled by an equivalence relation on $X$, namely, 
a binary function \fun{\rho}{X\times X}{\{0,1\}} enjoying 
the usual reflexivity, symmetry and transitivity properties,
provided that formulas are modelled by those $\alpha$ 
compatible with $\rho$. 
That is, 
if $x$ and $y$ are equivalent according to $\rho$ (\ie $\rho(x,y) = 1$), then either both satisfy $\alpha$ or both do not (\ie $\alpha(x) = \alpha(y)$). 

The understanding of equality becomes much less clear when moving to \emph{substructural} logics, such as, (fragments of) Linear Logic \cite{Girard87}. 
The core idea of these logics is to see formulas as consumable resources in a proof, which therefore cannot  be freely deleted or duplicated. 
To achieve this, some structural rules of usual deductive calculi are removed, specifically weakening and contraction rules, which are precisely those enabling deletion and duplication of formulas. 
As a consequence, usual true $\Ltop$ and conjunction $\LLand$ are replaced by $\Lone$ and $\Ltensor$, respectively, which behaves just as monoid-like operations (in particular, $\Ltensor$ is not idempotent and $\Lone$ is not a top element). 

Even though the  
model  sketched above cannot work in general for the substructural case, 
since it obviously validates both weakening and contraction, 
from a semantic perspective, 
we can still have a quite natural understanding of equality, by taking a \emph{quantitative} point of view.
Intuitively, a quantitative semantics
measures how much  a property holds rather than just saying whether it holds or not. 
As a paradigmatic example, let us model a quantitative property on a set $X$ by a function \fun{\alpha}{X}{[0,\infty]} taking values in the set of extended non-negative real numbers (usually considered with the reversed order). 
Intuitively, the value $\alpha(x) = \epsilon$ means that $x$ has the property $\alpha$ up to an error $\epsilon$, hence $\alpha$ surely holds on $x$ when $\alpha(x) = 0$.
Following this intuitive reading of quantitative properties, 
conjunction is given by the pointwise extension of addition of real numbers, which, being non-idempotent,  is capable of properly propagate errors. 
In this setting, 
the natural generalisation of equivalence relations are \emph{distances}, that is, binary functions 
\fun{\rho}{X\times X}{[0,\infty]} satisfying a quantitative version of reflexivity, symmetry and transitivity: 
\[
0\ge\rho(x,x) \qquad 
\rho(x,y)\ge \rho(y,x) \qquad 
\rho(x,y) + \rho(y,z) \ge \rho(x,z) 
\]
Note that the quantitative transitivity property is nothing but the usual triangular inequality. 

Now the question is: how can we introduce equality in substructural logics so that this quantitative semantics is supported? 
A first naive attempt 
could be to take the linear version of equality rules in First Order Logic, 
introducing a binary predicate $\trm\Leq\atrm$ which satisfies the linear reflexivity and substitutivity properties written below: 
\[
\lcons{\Lone}{\trm\Leq\trm}\qquad\qquad
\lcons{\subst{\form}{\trm}{\var}\Ltensor\trm\Leq\atrm}{\subst{\form}{\atrm}{\var}} 
\]
These entailments are precisely the usual ones where standard connectives $\Ltop$ and $\LLand$ are replaced by their linear versions $\Lone$ and $\Ltensor$. 
However smooth, this approach raises the unexpected issue that 
the equality predicate can be used an arbitrary number of times. 
Indeed, as already observed in \cite{Hodges1996LogicFF}, the following entailments are provable from the two properties above: 
\[
\lcons{\trm\Leq\atrm}{\Lone}\qquad\qquad
\lcons{\trm\Leq\atrm}{(\trm\Leq\atrm) \Ltensor (\trm\Leq\atrm)} 
\]
The former says that equality can be deleted, while the latter that it can be duplicated. 
In other words, this shows that weakening and contraction rules, which in general are banned from substructural logics, are instead admissible for equality.
This fact is undesirable per se in a substructural setting, as it is against its goal of controlling the use of formulas,
but it is even worse from the semantic perspective, 
as the quantitative interpretation of equality is broken. 
Indeed, given a distance
 \fun{\rho}{X\times X}{[0,\infty]} interpreting equality, the second entailment above forces it to satisfy the inequality
$\rho(x,y) \ge \rho(x,y) + \rho(x,y)$, 
 which is true only when $\rho(x,y)$ is either $0$ or $\infty$. This means that $\rho$ is, de facto, an equivalence relation.

In this paper we design a novel approach to equality in substructural logic, which is capable of preserve its natural quantitative interpretation as a (non-trivial) distance. 
On the one hand, this treatment of equality, enabling more control on its usage, is more in the spirit of substructural logics and, 
on the other, it could also shed a new light on the connection between such logics and \emph{quantitative reasoning}, defining a set of formal tools supporting it. 
Indeed, quantitative methods are increasingly used in many different domains, such as 
differential privacy \cite{ReedP10,BartheKOB13,XuCL14,BartheGHP16}, 
denotational semantics \cite{BaierM94,AmorimGHKC17}, 
program/behavioural metrics  \cite{DesharnaisGJP04,BreugelW05,ChatzikokolakisGPX14,Gavazzo18,DalLagoGY19,DalLagoG22}, 
and rewriting \cite{Gavazzo23}. With respect to usual techniques, quantitative methods better cope with the imprecision arising when one reasons about the behaviour of complex software systems also interacting with physical processes. 
The main formal system supporting these techniques is Quantitative Equational Logic, introduced and developed by 
Mardare et al. \cite{MardarePP16,MardarePP17,MardarePP21,Adamek22,BacciMPP18}. 
This is a logical system with all structural rules, specifically designed to reason about real-valued metrics, using ad-hoc symbols representing approximations of such metrics.
Nevertheless, 
quantitative methods are also naturally connected to substructural logics, however, such a connection is still less explored, hence, 
in this paper, we investigate it systematically, developing both syntax and (categorical) semantics of a (minimal) substructural logic with \emph{quantitative equality}.

Our analysis starts by observing that, 
in analogy with what happens in the specific case of program metrics \cite{CrubilleDL17,Gavazzo18}, 
the usual substitutive property 
 is not resource sensitive as 
it does not take into account resources needed to perform the substitution:
the equality $\trm\Leq\atrm$ has to be substitutive \emph{for any} formula $\form$, while the ``cost'' of substituting $\trm$ by $\atrm$ may be different for different formulas, since, for instance, it may depend on the number of occurrences of $\trm$. 
Hence, we need to rephrase the substitutive property to explicitly consider such resources. 
To this end, we work in (a fragment of) Linear Logic enriched with \emph{graded modalities} \cite{BreuvartP15}. 
Roughly, one adds to the logic a family of unary connectives $\Lbang_\res$, where $\res$ is 
taken from a structure of resources, which make explicit ``how much'' a formula can be used. 

Using graded modalities, we can rephrase the substitutive property as follows: 
\[
\lcons{\subst{\form}{\trm}{\var}\Ltensor\Lbang_\res (\trm\Leq\atrm)}{\subst{\form}{\atrm}{\var}}
\] 
where $\res$ depends on $\form$ (in a way that will be described in the paper) and encodes the amount of resources needed to derive the substitution. 
In this way, 
the equality predicate cannot be duplicated,
hence it admits a non-trivial quantitative interpretation. 

The main mathematical tool we will use throughout the paper are \emph{Lawvere's doctrines} \cite{LawvereF:adjif}, which provide an elegant categorical/algebraic understanding of logic. 
The key feature of this approach 
is that theories are 
written as functors, named doctrines, 
whose domain category
models contexts and terms, and the functor maps each object  to  a poset that models formulas on that object ordered by logical entailment. 
In this framework, logical phenomena can be explained algebraically, abstracting away from details and focusing only on essential features. 
For instance, replicability of standard equality in substructural logics has a neat algebraic explanation: 
such an equality is defined by a left adjoint, as pioneered by Lawvere~\cite{LawvereF:adjif,LawvereF:equhcs}, and, as we will show, predicates defined in this way are always replicable.  
This shows that a quantitative equality \emph{cannot be given by a left adjoint}, 
however, thanks to the language of doctrines, we manage to compare in a rigorous way  quantitative equality with the standard one, proving they share other fundamental structural properties. 
More precisely, as proved in \cite{EPR}, standard equality in non-linear doctrines has a coalgebraic nature and  quantitative equality, generalising the standard one, is coalgebraic as well. 

The rest of the paper is organised as follows. 
\refToSect{background} recalls known facts on equality in doctrines and presents their extension  to the linear setting,
discussing the non-quantitative nature of such an equality. 
\refToSect{graded} presents our framework for quantitative equality in a (graded) substructural logic. 
In \refToSect{graded-doc} we first define \RR-graded doctrines, 
which are  doctrines modelling the $(\Ltensor,\Lone)$-fragment of Linear Logic enriched by \RR-graded modalities, where \RR is an ordered  semiring of resources, 
and introduce  \RR-\lip doctrines, namely, \RR-graded doctrines with quantitative equality. 
Then, in \refToSect{grade-syntax}, we present a core deductive calculus for quantitative equality, based on that in \cite{BreuvartP15}, 
describing its sound and complete categorical semantics in \RR-\lip doctrines in \refToSect{grade-sem}. 
\refToSect{qet} compares our approach to Quantitative Equational Theories (QETs) \cite{MardarePP16,MardarePP17}, providing some examples of theories in our calculus, which cannot be expressed as QETs. 
\refToSect{categorical} analyses 2-categorical properties of \RR-\lip doctrines. 
In \refToSect{comonad} we show the coalgebraic nature of  \RR-\lip doctrines by proving they 
arise as coalgebras for a 2-comonad on the 2-category of \RR-graded doctrines. 
This provides us with a universal construction yielding an \RR-\lip doctrine  from an \RR-graded one, and we use it to generate semantics for the calculus. 
In \refToSect{rel-eq} we relate quantitative equality with the usual one  defined by left adoints, formally proving that the former indeed refines the latter. 
Then, in \refToSect{altredoc} we extend previous results to richer fragments of Linear Logic up to full LL with quantifiers. 
We conclude in \refToSect{future} discussing related and future work.\\

This paper is the journal version of \cite{DagninoP22}. 
Here, we give  more background on doctrines (\refToSect{doctrines}) and more details on standard equality in the substructural setting (\refToSect{monoidal}). 
Moreover, we provide a categorical comparison between our quantitative equality and the usual one by left adjoints (\refToSect{rel-eq}). 
Finally, we also present more examples of theories (\refToSect{qet}) and report all proofs of our results.


\section{Preliminaries on (linear) doctrines} 
\label{sect:background}

Doctrines provide a simple categorical framework to study several kinds of logics. 
In this section, we first recall basic notions about doctrines for standard (\ie non-linear) logics, then we introduce the class of doctrines modelling the fragment of linear logic we will be concerned with.

\subsection{Doctrines in a nutshell} 
\label{sect:doctrines} 
 
Denote by \Pos the category of posets and monotone functions. 
A left adjoint to a monotone function $\fun{g}{K}{H}$ is a monotone  function $\fun{f}{H}{K}$ such that for every $x$ in $K$ and $y$ in $H$, both $y\le gf(y)$ and $fg(x)\le x$ hold. Equivalently $y\le g(x)$ if and only if $f(y)\le x$. 

A \dfn{doctrine} is a pair \ple{\CC,\Doc} where \CC
is a category with finite products and $\fun{\Doc}{\CC\op}{\Pos}$ is a
functor. The category \CC is named the \dfn{base} of the
doctrine. For $X$ an object in \CC the poset $\Doc(X)$ is called
\dfn{fibre over} $X$. For \fun{f}{X}{Y} an arrow in \CC, the
monotone function \fun{\DReIdx{\Doc}{f}}{\Doc (Y)}{\Doc (X)} is called
\dfn{reindexing along} $f$.
We will often refer to a doctrine $\ple{\CC,\Doc}$ using only the functor $ \fun{\Doc}{\CC\op}{\Pos}$.

A doctrine \fun{\Doc}{\CC\op}{\Pos} is \dfn{primary} if all fibres have finite meets and these are preserved by reindexing. The top element of $\Doc(A)$ will be denoted as $\top_A$, while binary meets by $\wedge_A$. We shall drop the subscripts when these are clear.

There is a large variety of examples of primary doctrines (see \cite{JacobsB:catltt, OostenJ:reaait, PittsCL}), in this paper we will exemplify definitions considering the following.

\begin{exas}\label{ex:running}
\begin{enumerate}
\item\label{ex:running:3}
Let $L$ be a many-sorted signature with sorts $\srt,\asrt,...$, 
function symbols $f,g,...$ and 
predicate symbols $p,q,...$. 
Let $\TT$ be a theory in the $(\LLand,\Ltop)$-fragment of First Order Logic over  $L$. The category $\ct{Ctx}_L$ has 
contexts $\vec{\srt} = \ple{x_1:\srt_1,\ldots,x_n:\srt_n}$ as objects  
and lists of terms $ \ple{\trm_1,\ldots,\trm_k}:\vec{\srt}\to \ple{y_1:\asrt_1,\ldots,y_k:\asrt_k}$ as arrows, where each
$\trm_i$ has sort $\asrt_i$ in the context $\vec{\srt}$. Binary products are given by context concatenation, up to renaming of variables.
As detailed in \cite[Example~2.2]{MaiettiME:quofcm} the 
 \dfn{syntactic doctrine $\Prop_{\TT}$} based on  $\ct{Cxt}_L$ maps each context $\vec{\srt}$ to the poset reflection of the collection of  
well-formed formulas over $L$ in the $(\LLand,\Ltop)$-fragment with free variables in $\vec{\srt}$ preordered by the entailment in $\TT$.
That is, 
$[\form]\le [\aform]$ iff $\lcons[\vec{\srt}]{\form}{\aform}$ is provable in $\TT$ ( In other words, the poset $\Prop_{\TT}(\vec{\srt})$ is the Lindenbaum-Tarski algebra of well-formed formulas over $L$ in the $(\LLand,\Ltop)$-fragment with free variables in $\vec{\srt}$).
The doctrine $\Prop_{\TT}$ is primary where conjunctions give finite meets. 

\item\label{ex:running:1}
Suppose $H$ is a poset, the functor $\PP_H:\Set\op\to\Pos$ sends a set $A$ to the set of functions $H^A$ ordered pointwise. 
For a function $f:X\to A$ and $\alpha$ in $\PP_H(A)$ the function $\PP_H(f)(\alpha)$ is the composite $\alpha f$ in $\PP_H(X)$. 
The doctrine $\PP_H$ is primary when $H$ is a meet-semilattice.
For $H=\{0,1\}$ the doctrine $\PP_{H}$ is the contravariant powerset functor that will be denoted simply by $\PP$.

\item\label{ex:running:2}Suppose that $A=(|A|, \app , K, S)$ is a (partial) combinatory algebra as in \cite{OostenJ:reaait}. 
The doctrine $\mathcal{R}_A:\Set\op\to\Pos$ maps each set $X$ to the poset reflection of the preorder $\PP(|A|)^X$ where 
for $\alpha,\beta:X\to \PP(|A|)$, $\alpha\le \beta$  whenever there is $a$ in $|A|$ such that, for every $x$ in $X$ and  $b$ in $\alpha(x)$, the application $a\app b$ is defined and belongs to $\beta(x)$. One says that $a$ realises $\alpha\le \beta$.
The action of $\mathcal{R}_A$ on functions is given by pre-composition. 
$\mathcal{R}_A$ is primary \cite{OostenJ:reaait}.

\end{enumerate}
\end{exas}

Doctrines are the objects of the 2-category \DC where  
1-arrows 
\oneAr{\ple{F,f}}{\ple{\CC,\Doc}}{\ple{\D,\aDoc}} 
consist of 
a finite product preserving functor 
$\fun{F}{\CC}{\D}$ and 
a natural transformation 
$\TNat{f}{\Doc}{\aDoc F\op}$ as in the diagram 
\[
\xymatrix@C=7.5em@R=1em{
{\CC\op}\ar[rd]^(.4){P}_(.4){}="P"
\ar[dd]_{F\op}^{}="F"
&\\
 & {\ct{Pos}}\\
{\D\op}\ar[ru]_(.4){Q}^(.4){}="R"&
\ar"P";"R"_{f\kern.5ex\cdot\kern-.5ex}="b"
}
\]
the 2-arrows 
\twoAr{\theta}{\ple{F,f}}{\ple{F',f'}}
are natural transformations $\TNat{\theta}{F}{F'}$ such that 
$f_X\order_{FX}\DReIdx{\aDoc}{\theta_X}\circ f'_X$ for
every object $X$ in \CC, \ie
\[
\xymatrix@C=12em@R=1.5em{
{\CC\op}\ar[rd]^(.4){P}_(.4){}="P"
\ar@<-1ex>@/_/[dd]_{F\op}^{}="F"\ar@<1ex>@/^/[dd]^{(F')\op}_{}="G"&\\
 & {\ct{Pos}}\\
{\D\op}\ar[ru]_(.4){Q}^(.4){}="R"&
\ar@/_/"P";"R"_{f\kern.5ex\cdot\kern-.5ex}="b"
\ar@<1ex>@/^/"P";"R"^{\kern-.5ex\cdot\kern.5ex f'}="c"
\ar"G";"F"_{.}^{\theta\op}\ar@{}"b";"c"|{\le}}
\]
The composition 
of two consecutive 1-arrows \ple{F,f} and \ple{G,g} is given by 
$\ple{G,g}\circ \ple{F,f} = \ple{GF, (gF\op) f}$, while 
composition of 2-arrows is that of natural transformations. 

Primary doctrines are the objects of the 2-full 2-subcategory \PD of \DC, where 
a 1-arrow from $\Doc$ to $\aDoc$ is a 1-arrow $\ple{F,f}$ in \DC such that each component of $f$ preserves finite meets. 

\begin{rem}\label{rem:modelli}
The 2-categorical structure of doctrines has a relevant logical meaning: 
objects correspond to theories, 
1-arrows to models/translations between theories 
and 
2-arrows to homomorphisms of models/translations. 
Hence, the hom-category of two doctrines $\Doc$ and $\aDoc$ corresponds to the category of models of $\Doc$ into $\aDoc$. 
 
Under this interpretation, each 2-category of doctrines 
corresponds to a class of theories with their models. 
For instance, the 2-category \PD corresponds to  theories in the $(\Land,\top)$-fragment of First Order Logic. 
Thus, given a theory $\TT$ in this fragment over a signature $L$ as in \refToExItem{running}{3},  a categorical semantics of $\TT$ into a primary doctrine  $\Doc$ is an object of $\PD(\Prop_{\TT},\Doc)$.
\end{rem}

A categorical description of equality in terms of left adjoints goes back to Lawvere \cite{LawvereF:equhcs}. 
More recently, Maietti and Rosolini \cite{MaiettiME:eleqc}   
identify primary doctrines as the essential framework where equality can be studied. 
Primary doctrines with equality  are called elementary  \cite{MaiettiME:eleqc}
and are defined as follows.

\begin{defi}\label{def:pelementary}
A primary doctrine \fun{\Doc}{\CC\op}{\Pos} is \dfn{elementary} if 
for every $A$ in \CC, 
there is  an element $\delta_A$ in $\Doc(A\times A)$ such that 
\begin{enumerate}
\item $\top_A \order \DReIdx{\Doc}{\Delta_A}(\delta_A)$ 
\item for all $X$ in \CC and $\alpha$ in $\Doc(X\times A)$ it holds that 
$\DReIdx{\Doc}{\ple{\pi_1,\pi_2}}(\alpha) \land \DReIdx{\Doc}{\ple{\pi_2,\pi_3}}(\delta_A) \order \DReIdx{\Doc}{\ple{\pi_1,\pi_3}}(\alpha)$
\end{enumerate}
 \end{defi}

\begin{exas}\label{ex:erunning}
\begin{enumerate}
\item\label{ex:erunning:3}
Consider a theory $\TT$ over the $(\LLand,\Ltop,\Leq)$-fragment of First Order Logic. 
Then, the syntactic doctrine $\Prop_\TT$, defined as in \refToExItem{running}{3}, is elementary. 
Indeed, for every context $\vec{\srt} = \ple{x_1:\srt_1,\ldots,x_n:\srt_n}$ one can take as $\delta_{\vec{\srt}}$ the formula
$x_1\Leq y_1 \LLand\ldots \LLand x_n\Leq y_n$. 
over the context $\vec{\srt}\times \vec{\srt}= \ple{x_1:\srt_1,\ldots,x_n:\srt_n,y_1:\srt_1,\ldots,y_n:\srt_n}$; 
conditions (1) and (2) in \refToDef{pelementary} require the provability of entailments of the form
\[
\Ltop \vdash x_1\Leq x_1\LLand\ldots\LLand x_n\Leq x_n 
\qquad 
\form(\vec{z},x_1,\ldots, x_n) \LLand x_1\Leq y_1 \LLand \ldots \LLand x_n\Leq y_n\vdash \form(\vec{z},y_1,\ldots, y_n)
\]
which easly follows from reflexivity and substitutivity of $x\Leq y$. 

\item\label{ex:erunning:1} The primary doctrine $\PP_H$ of \refToExItem{running}{1} is elementary when $H$ has a bottom element. For a set $X$,  the equality predicate $\delta_X\in H^{X\times X}$ is the Kronecker delta mapping $(x,x')$ to $\top$ if $x=x'$ and to $\bot$ otherwise.

\item\label{ex:erunning:2} The primary doctrine $\mathcal{R}_A$ of \refToExItem{running}{2} is elementary where for a set $X$ the function $\delta_X$ maps $(x,x')$ to $|A|$ if $x=x'$ and to $\emptyset$ otherwise \cite{OostenJ:reaait}.
\end{enumerate}
\end{exas}

Elementary doctrines are the objects of the 2-category  \ED 
where 
a 1-arrow 
\oneAr{\ple{F,f}}{\Doc}{\aDoc} is a 1-arrow in \PD such that $f$ preserves equality predicates. The
 2-arrows are those of  \PD. 

As studied in \cite{EPR}, equality is coalgebraic in the sense that the obvious inclusion of \ED into \PD has a right 2-adjoint 
\begin{equation}\label{dia:grafico}
\xymatrix{
\ED\ar@{^(->}[rr]<-1ex> \ar@{}[rr]|-{\top}&&\ar[ll]<-1ex> \PD
}
\end{equation}
and $\ED$ is isomorphic to the 2-category of coalgebras  of the induced 2-comonad on $\PD$.

\begin{rem}\label{rem:modelli1}Similarly to what we have already observed in \refToRem{modelli}, given a theory $\TT$ in the $(\LLand,\Ltop,\Leq)$-fragment of First Order Logic, a categorical semantics of $\TT$ in an elementary doctrine $\Doc$ is an object of $\ED(\Prop_{\TT},\Doc)$. Therefore the adjoint situation in \refToDia{grafico} allows one to build semantics for theories with equality, from semantics of theories without equality. 
\end{rem}

A primary doctrine \fun{\Doc}{\CC\op}{\Pos} is a \dfn{first order doctrine} if $\Doc$ factors through the category of Heyting algebras and their homomorphisms and 
for every projection $\pi:A\times B\to B$ in $\ct{C}$ the map $\Doc(\pi)$ has a left adjoint $\Ex_\pi$ and a right adjoint $\Al_\pi$ and these adjoints are natural in $B$, that is for every square
\[
\xymatrix{
A\times X\ar[r]^{\pi}\ar[d]_-{\id{A}\times f}&X\ar[d]^-{f}\\
A\times B\ar[r]_-{\pi}&B
}
\]
it holds that $\DReIdx{\Doc}{f}\,\Ex_{\pi}=\Ex_\pi\,\DReIdx{\Doc}{\id{A}\times f}$ and $\DReIdx{\Doc}{f}\,\Al_{\pi}=\Al_\pi\,\DReIdx{\Doc}{\id{A}\times f}$.
The naturality of left and right adjoints is often called Beck-Chevalley condition. A primary doctrine \fun{\Doc}{\CC\op}{\Pos} is a \dfn{hyperdoctrine} if it is first order and elementary.  

\begin{exas}\label{ex:erunningfod}
\begin{enumerate}
\item\label{ex:erunningfod:3}The syntactic doctrine $\Prop_\TT$ over a theory $\TT$ in full First Order Logic is first order (recall also \refToExItem{running}{3}).  For every context $\vec{\srt}$ the logical connectives between formulas with free variables in $\vec{\srt}$ provide the operations of the Heyting algebra.
For a formula $\alpha$ over $\vec{\srt}\times \vec{\asrt}=\ple{x_1:\srt_1,\ldots,x_n:\srt_n\,,\, y_1:\asrt_1,\ldots,y_k:\asrt_k}$ the left adjoint along $\pi:\vec{\srt}\times \vec{\asrt}\to \vec{\asrt}$ is the  formula $\Ex_{\pi}\alpha$ with free variables over $\vec{\asrt}$ defined as
\[
\exists_{x_1:\srt_1}\ldots \exists_{x_n:\srt_n}\alpha
\]
The right adjoint $\Al_{\pi}\alpha$ is similar where the universal quantification is used in place of the existential one. The syntactic doctrine $\Prop_\TT$ over a theory $\TT$ (see also \refToExItem{erunning}{3}) in full First Order Logic with equality is a hyperdoctrine.

\item\label{ex:erunningfod:1} The primary doctrine $\PP_H$ of \refToExItem{running}{1} is an hyperdoctirne when $H$ is a complete Heyting algebra, \ie it has arbitrary suprema that distribute over meets \cite{PittsCL, OostenJ:reaait}.

\item\label{ex:erunningfod:2} The primary doctrine $\mathcal{R}_A$ of \refToExItem{running}{2} is an hyperdoctrine \cite{OostenJ:reaait}.
\end{enumerate}
\end{exas}

Denote by \FOD the 2-full 2-subcategory of \PD on first order doctrines and 
those 1-arrows of \PD that additionally commute with all connectives and quantifiers.
Similarly, 
\HD denotes the 2-full 2-subcategory of \ED on hyperdoctrines\footnote{One can see \HD as the pullback of $\ED\to \PD$ along $\FOD\to \PD$.}. The adjoint situation depicted in \refToDia{grafico} restricts to the inclusion of \FOD into \PD and co-restricts to the inclusion of \HD into \ED, leading to the following diagram

\begin{equation}\label{dia:grafico1}
\xymatrix{
\ED\ar@{^(->}[rr]<-1ex> \ar@{}[rr]|-{\top}&&\ar[ll]<-1ex> \PD 
\\
\HD\ar@{^(->}[u]\ar@{^(->}[rr]<-1ex>\ar@{}[rr]|-{\top}&&\ar[ll]<-1ex>\FOD\ar@{^(->}[u]
}
\end{equation}
\vspace{.5em}

Combining the observations made in \refToRem{modelli} and in  \refToRem{modelli1} it not surprising that, given a theory $\TT$ in  full First Order Logic over a signature $L$, a categorical semantics of $\TT$ in a first order doctrine $\Doc$ is an object of $\FOD(\Prop_{\TT},\Doc)$; and if  $\TT$ is with equality and $P$ is a hyperdoctrine (and not only a first order doctrine) models of $\TT$ in $\Doc$ are objects $\HD(\Prop_{\TT},\Doc)$.

\subsection{Primary and elementary linear  doctrines}\label{sect:monoidal} 
Primary doctrines, that models the $(\&,\top)$-fragment of First Order Logic, are the minimal setting where one can formulate rules of equality (\cf\refToDef{pelementary}).  
We shall call their linear counterparts \pl doctrines. They model the $(\Ltensor,\Lone)$-fragment of Linear Logic and provide the minimal structure needed to describe equality in a substructural setting. 
The following definition is carved out from Seely's definition of linear hyperdoctrines \cite{seely1989linear}. 

\begin{defi}\label{def:monoid-doc}
A \dfn{ \pl  doctrine} is a triple $\ple{\Doc,\fmul,\funit}$ where $\fun{\Doc}{\CC\op}{\Pos}$ 
is a doctrine and \TNat{\fmul}{\Doc\times\Doc}{\Doc} and \TNat{\funit}{\terDoc_\CC}{\Doc} are natural transformations such that, for all objects $X$ in \CC, 
\ple{\Doc(X),\fmul_X,\funit_X} is a commutative monoid. 
\end{defi}

In other words, a \pl doctrine is a doctrine \fun{\Doc}{\CC\op}{\Pos} that factors through the category of ordered commutative monoids. 
More explicitly, 
for each object $X$ in \CC, we have a monotone function \fun{\fmul_X}{\Doc (X) \times \Doc (X)}{\Doc (X)} and an element $\funit_X\in\Doc (X)$
such that $\fmul_X$ is associative and commutative and $\funit_X$ is the neutral element 
 for $\fmul_X$ and, 
for \fun{f}{X}{Y} in \CC and $\alpha,\beta$ in $\Doc(Y)$, we have  
$\DReIdx{\Doc}{f}(\alpha)\fmul_X\DReIdx{\Doc}{f}(\beta) = \DReIdx{\Doc}{f}(\alpha\fmul_Y\beta)$ and $\funit_X = \DReIdx{\Doc}{f}(\funit_Y)$\footnote{Concisely, a primary linear doctrine is a monoid in $[\ct{C}\op,\ct{Pos}]$.}. 
We shall often omit subscripts from $\fmul$ and $\funit$ when these are obvious.

\Pl doctrines  are the objects of the 2-category \MD where a 1-arrow 
from \ple{\Doc,\fmul^\Doc,\funit^\Doc}to \ple{\aDoc,\fmul^\aDoc,\funit^\aDoc} is a 1-arrow \oneAr{\ple{F,f}}{\Doc}{\aDoc} in \DC 
such that $f$ is a natural monoid homomorphism, and 
2-arrows are those of \DC.

\begin{exas}\label{ex:runningmon1}
\begin{enumerate}
\item\label{ex:runningmon1:22} 
For a theory $\TT$ in the $(\Ltensor,\Lone)$-fragment of Linear Logic 
 \ie in the fragment described by the rules in \refToFig{frammento} (see \cite{Girard87}) 
\begin{figure}[t] 
\begin{mathpar}
\Infer[Ax]{ }{ \lcons{\form}{\form} } 
\and 
\Infer[Cut]{
  \lcons{\hps}{\form} 
\quad  \lcons{\ahps,\form}{\aform} 
}{ \lcons{\hps,\ahps}{\aform} }\\
\Infer[$\Lone$L]{
  \lcons{\hps}{\form} 
}{ \lcons{\hps,\Lone}{\form} }
\and
\Infer[$\Lone$R]{ }{ \lcons{}{\Lone} } 
\and
\Infer[$\Ltensor$L]{
  \lcons{\hps,\aform,\bform}{\form} 
}{ \lcons{\hps,\aform\Ltensor\bform}{\form} } 
\and 
\Infer[$\Ltensor$R]{
  \lcons{\hps}{\form} 
  \quad 
  \lcons{\ahps}{\aform} 
}{ \lcons{\hps,\ahps}{\form\Ltensor\aform }} 
\end{mathpar} 
\caption{Rules for $(\Ltensor,\Lone)$-fragment of Linear Logic}\label{fig:frammento} 
\end{figure} the syntactic doctrine $\Prop_\TT$ as in \refToExItem{running}{3}  is  \pl. 

\item\label{ex:runningmon1:1}
If $\ple{H,\fmul,\funit}$ is an ordered commutative monoid, then  $\ple{\PP_H,\fmul,\funit}$ is \pl, 
where $\PP_H$ is as in \refToExItem{running}{1} and $\fmul$ and $\funit$ are defined pointwise. 
\item\label{ex:runningmon1:12} 
A BCI algebra $A=\ple{|A|,\app, B, C,I}$ can be seen as a linear version of a combinatory algebra where linear combinators $B, C,I$ are used in place of combinators $K, S$ (a precise definition is in \cite{AbramskyHS02}). Analogously to \refToExItem{running}{2} we define the realisability doctrine \fun{\mathcal{R}_A}{\Set\op}{\Pos} as the functor that maps each set $X$ to the poset reflection of the preorder $\PP(|A|)^X$ where 
for $\alpha,\beta\in\PP(|A|)^X$, it is $\alpha\le \beta$ whenever there is $a\in|A|$ such that, for every $x\in X$ and  $b\in\alpha(x)$, the application $a\app b$ is defined and belongs to $\beta(x)$. 
The action of $\mathcal{R}_A$ on functions is by pre-composition. 
Each $\mathcal{R}_A(X)$ is a poset as $I$ and $B$ realise reflexivity and transitivity of $\le$, respectively. 
Given $\alpha,\beta\in\PP(|A|)^X$, define $\alpha\fmul \beta = x \mapsto  \{\textbf{P}ab\in |A| \mid  a\in \alpha(x)\  \text{and}\ b \in \beta(x)\}$, where $\textbf{P}$ is the BCI pairing \cite{Hoshino}, and $\funit_X = x \mapsto \{I\}$. 
Then, \ple{\mathcal{R}_A,\fmul,\funit} is a \pl doctrine.

\item\label{ex:runningmon1:44}
Let $\WW=\ple{W,\le,\krpmul,\krpunit}$ be an ordered commutative monoid (a.k.a. monoidal Kripke frame). 
Adapting from \cite{DalLagoG22} the notion of $\WW$-relation, we define a doctrine 
$\fun{K_\WW}{\Set\op}{\Pos}$ as follows: 
for a set $X$, $K_\WW(X)$ is the collection of those subsets $U$ of $X\times W$ such that if $\ple{x,w}\in U$ and $w\le w'$ then $\ple{x,w'}\in U$ ordered by set inclusion, and 
for a function $\fun{f}{X}{Y}$, $K_\WW(f)$ maps $U\in K_\WW(Y)$ to $(f\times\id{W})^{-1}(U)$. 
This is actually a \pl doctrine, where, for every set $X$, 
$\funit_X = \{ \ple{x,w}\mid \krpunit\le w\}$ and 
$U\fmul_X V = \{\ple{x, w}\mid \exists_{w_1,w_2}.\, w_1\krpmul w_2\le w, \ple{x,w_1}\in U, \ple{x,w_2}\in V\}$. 
\end{enumerate}
\end{exas}

Recall that elementary doctrines can be understood as those primary doctrine endowed with equality predicates. The following definition introduces those \pl doctrines that are elementary as a direct linearisation of \refToDef{pelementary}   (\ie replacing of $\wedge, \top$ by $\fmul, \funit$).

\begin{defi}\label{def:lelementary}
A \pl doctrine \fun{\Doc}{\CC\op}{\Pos} is \dfn{elementary} if 
for every $A$ in \CC, 
there is  an element $\delta_A$ in $\Doc(A\times A)$ such that 
\begin{enumerate}
\item $\funit_A \order \DReIdx{\Doc}{\Delta_A}(\delta_A)$ 
\item for all $X$ in \CC and $\alpha$ in $\Doc(X\times A)$ it holds that 
$\DReIdx{\Doc}{\ple{\pi_1,\pi_2}}(\alpha) \fmul \DReIdx{\Doc}{\ple{\pi_2,\pi_3}}(\delta_A) \order \DReIdx{\Doc}{\ple{\pi_1,\pi_3}}(\alpha)$
\end{enumerate}
 \end{defi}

Lawvere first noticed that equality is a left adjoint \cite{LawvereF:equhcs}. This turns out to be true also in the linear setting.

\begin{prop}\label{prop:equivalenza} 
Let  \fun{\Doc}{\CC\op}{\Pos} be a \pl doctrine, the following are equivalent
\begin{enumerate}
\item $\Doc$ is elementary
\item
For every $A$ there is $\delta_A$ in $\Doc(A\times A)$ such that for every $X$ and every $\alpha$ in $\Doc(X\times A)$ the assignment 
\[
\xymatrix@R=0.2em{
\Doc(X\times A)\ar[r]^-{\Ex_{\id{X}\times \Delta_A}}&\Doc(X\times A\times A)\\
\alpha\ar@{|->}[r]&\DReIdx{\Doc}{\ple{\pi_1,\pi_2}}(\alpha) \fmul \DReIdx{\Doc}{\ple{\pi_2,\pi_3}}(\delta_A)
}\]
defines a left adjoint to $\DReIdx{\Doc}{\id{X}\times \Delta_A}$.
\end{enumerate}
\end{prop}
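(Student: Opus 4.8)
The plan is to establish the claimed adjunction by checking directly the definition of left adjoint between posets recalled in \refToSect{doctrines}. Reading $\Doc$ as a \pl doctrine so that $\fmul$ and $\funit$ are available and ``elementary'' as in \refToDef{lelementary}, fix objects $X$, $A$ of $\CC$, write $\Delta := \id{X}\times\Delta_A\colon X\times A\to X\times A\times A$, and let $L\colon\Doc(X\times A)\to\Doc(X\times A\times A)$ be the assignment $\alpha\mapsto\DReIdx{\Doc}{\ple{\pi_1,\pi_2}}(\alpha)\fmul\DReIdx{\Doc}{\ple{\pi_2,\pi_3}}(\delta_A)$ of item~(2). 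I would then verify: (a) $L$ is monotone --- immediate, since reindexings are monotone and $\fmul$ is monotone in each argument; (b) the unit $\alpha\order\DReIdx{\Doc}{\Delta}(L(\alpha))$; (c) the counit $L(\DReIdx{\Doc}{\Delta}(\beta))\order\beta$. Throughout I will use that reindexing is functorial and, being a monoid morphism on each fibre (\refToDef{monoid-doc}), commutes with $\fmul$ and $\funit$, together with the routine identities $\ple{\pi_1,\pi_2}\circ\Delta=\ple{\pi_1,\pi_3}\circ\Delta=\id{X\times A}$ and $\ple{\pi_2,\pi_3}\circ\Delta=\Delta_A\circ\pi$, where $\pi\colon X\times A\to A$ is the projection.

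For $(1)\Rightarrow(2)$: given witnesses $\delta_A$ making $\Doc$ elementary, I would compute $\DReIdx{\Doc}{\Delta}(L(\alpha))=\alpha\fmul\DReIdx{\Doc}{\pi}(\DReIdx{\Doc}{\Delta_A}(\delta_A))$ using the identities above; since $\funit_A\order\DReIdx{\Doc}{\Delta_A}(\delta_A)$ by condition~(1) of \refToDef{lelementary} and reindexing preserves $\funit$, applying $\DReIdx{\Doc}{\pi}$ and multiplying gives $\alpha=\alpha\fmul\funit\order\DReIdx{\Doc}{\Delta}(L(\alpha))$, which is the unit. The counit is the delicate step. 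I would read condition~(2) of \refToDef{lelementary} with the object $X\times A$ in the role of $X$ and $\beta\in\Doc((X\times A)\times A)$ in the role of $\alpha$, obtaining in $\Doc(X\times A\times A\times A)$ the inequality
\[
\DReIdx{\Doc}{\ple{p_1,p_2}}(\beta)\fmul\DReIdx{\Doc}{\ple{p_2,p_3}}(\delta_A)\ \order\ \DReIdx{\Doc}{\ple{p_1,p_3}}(\beta),
\]
where $p_1$, $p_2$, $p_3$ are the projections of $X\times A\times A\times A$ onto $X\times A$, onto the third factor, and onto the fourth factor. Reindexing this along the coordinate-duplicating map $\ple{\pi_1,\pi_2,\pi_2,\pi_3}\colon X\times A\times A\to X\times A\times A\times A$ and simplifying the composites of projections collapses the left-hand side to $\DReIdx{\Doc}{\ple{\pi_1,\pi_2,\pi_2}}(\beta)\fmul\DReIdx{\Doc}{\ple{\pi_2,\pi_3}}(\delta_A)=L(\DReIdx{\Doc}{\Delta}(\beta))$ and the right-hand side to $\beta$, which is exactly the counit.

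For $(2)\Rightarrow(1)$: suppose that for every $X$ the map $L$ built from the given family $\delta_A$ is left adjoint to $\DReIdx{\Doc}{\Delta}$. Condition~(2) of \refToDef{lelementary} is the inequality $L(\alpha)\order\DReIdx{\Doc}{\ple{\pi_1,\pi_3}}(\alpha)$; transposing along the adjunction it becomes $\alpha\order\DReIdx{\Doc}{\Delta}(\DReIdx{\Doc}{\ple{\pi_1,\pi_3}}(\alpha))=\DReIdx{\Doc}{\ple{\pi_1,\pi_3}\circ\Delta}(\alpha)=\alpha$, which holds trivially. For condition~(1), I would instantiate the hypothesis at a terminal object $\mathbf{1}$ of $\CC$ (which exists since $\CC$ has finite products): then $\id{\mathbf{1}}\times\Delta_A$ is, up to the iso $\mathbf{1}\times A\cong A$, just $\Delta_A$, and $L$ becomes $\alpha\mapsto\DReIdx{\Doc}{\pi'}(\alpha)\fmul\delta_A$ with $\pi'\colon A\times A\to A$ the first projection. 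Since $L(\funit_A)=\funit\fmul\delta_A=\delta_A$, the unit of this adjunction at $\funit_A$ reads precisely $\funit_A\order\DReIdx{\Doc}{\Delta_A}(\delta_A)$, which is condition~(1).

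The main obstacle is the counit in $(1)\Rightarrow(2)$: condition~(2) of \refToDef{lelementary} cannot be applied to $\beta$ as it stands, because $\beta$ lives over $X\times A\times A$ in a ``diagonalised'' position rather than over some $Y\times A$; the remedy is to lift condition~(2) to the larger object $X\times A\times A\times A$ (taking $X\times A$ as parameter) and then reindex along the map duplicating the relevant $A$-coordinate. Every other step is bookkeeping: functoriality of reindexing, its compatibility with $\fmul$ and $\funit$, and the identification of the various composites of product projections and diagonals.
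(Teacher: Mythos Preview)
Your proof is correct and follows essentially the same approach as the paper. You verify the adjunction via unit and counit while the paper checks the equivalent biconditional $L(\alpha)\order\beta\iff\alpha\order\DReIdx{\Doc}{\Delta}(\beta)$ directly, but the substantive steps coincide: both use reflexivity to get the unit, both apply \refToDef{lelementary}(2) with $X\times A$ as parameter and then reindex along $\ple{\pi_1,\pi_2,\pi_2,\pi_3}$ for the counit, and both derive reflexivity in the converse direction from $\delta_A=\Ex_{\Delta_A}(\funit_A)$ at the terminal object and substitutivity by transposing $L(\alpha)\order\DReIdx{\Doc}{\ple{\pi_1,\pi_3}}(\alpha)$.
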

\begin{proof} 
(1)$\Rightarrow$ (2). Suppose $\Ex_{\id{X}\times \Delta_A}(\alpha)\order \beta$ and apply $\DReIdx{\Doc}{\id{X}\times \Delta_A}$ to both side of the inequality to get $\alpha\order \DReIdx{\Doc}{\id{X}\times \Delta_A}(\beta)$. Conversely suppose $\alpha\order \DReIdx{\Doc}{\id{X}\times \Delta_A}(\beta)$. Consider $\beta$ as an element over  $(X\times A)\times A$ and let $p_1,p_2$ and $p_3$ denote projections from $(X\times A)\times A\times A$ and $\pi_1, \pi_2$ and $\pi_3$ denote projections from $X\times A\times A$. It holds 
$ 
\DReIdx{\Doc}{\ple{p_1,p_2}}(\beta)\fmul \DReIdx{\Doc}{\ple{p_2,p_3}}(\delta_A)\order \DReIdx{\Doc}{\ple{p_1,p_3}}(\beta)
$.
Evaluate both side of the inequality under \DReIdx{\Doc}{\ple{\pi_1,\pi_2,\pi_2,\pi_3}} to get
\[
\DReIdx{\Doc}{\ple{\pi_1,\pi_2}}\DReIdx{\Doc}{\id{X}\times \Delta_A}(\beta)\fmul \DReIdx{\Doc}{\ple{\pi_2,\pi_3}}(\delta_A)\order \beta\] and hence the claim.

(2)$\Rightarrow$ (1). Note that $\delta_A=\Ex_{\Delta_A}(\funit_A)$, so $\funit_A \order \DReIdx{\Doc}{\Delta_A}(\delta_A)$. Given $\alpha$ in $\Doc(X\times A)$, from $\alpha= \DReIdx{\Doc}{\ple{\id{X}\times \Delta_A}}\DReIdx{\Doc}{\ple{\pi_1,\pi_3}}(\alpha)$ follows $\Ex_{\id{X}\times \Delta_A}(\alpha)=\DReIdx{\Doc}{\ple{\pi_1,\pi_2}}(\alpha) \fmul \DReIdx{\Doc}{\ple{\pi_2,\pi_3}}(\delta_A) \order \DReIdx{\Doc}{\ple{\pi_1,\pi_3}}(\alpha)$.
\end{proof}

The characterisation  \ref{prop:equivalenza} will be helpful to prove some basic properties of equality predicates. Equality predicates in elementary doctrines (\ie in the non-linear setting) are proved to be equivalence relations. 
Their linear counterpart are distances\footnote{Binary products in base $\ct{C}$ of $\Doc$ play a crucial role in defining what a distance is, as one need diagonals and, more generally, tuples of arrows to formulate reflexivity and transitivity. This is clear from the syntactic point of view, as one can see from \refToExItem{distance}{1}, where to write conditions for reflexivity and transitivity one uses formulas where the same variables occurs multiple times.}, which can be defined in an abstract purely algebraic way in any \pl doctrine 
\ple{\Doc,\fmul,\funit} as follows:
a \dfn{\Doc-distance} on an object $A$ is an element  $\dist  \in \Doc(A\times A)$ such that
\begin{center}
\begin{tabular}{ll} 
$\funit \order \DReIdx{\Doc}{\Delta_A}(\dist )$& (reflexivity)\\
$\dist \order\DReIdx{\Doc}{\ple{\pi_2,\pi_1}}(\dist )$& (symmetry)\\
$\DReIdx{\Doc}{\ple{\pi_1,\pi_2}}(\dist ) \fmul \DReIdx{\Doc}{\ple{\pi_2,\pi_3}}(\dist ) \order \DReIdx{\Doc}{\ple{\pi_1,\pi_3}}(\dist )$& (transitivity)
\end{tabular} 
\end{center}

\begin{exas}\label{ex:distance}
\begin{enumerate}

\item\label{ex:distance:1} 
For the syntactic doctrine $\Prop_{\TT}$ as in \refToExItem{runningmon1}{22}, 
a $\Prop_{\TT}$-distance over a context $\vec{\srt}$ is a formula $\dist(\vec{x},\vec{y})$
in the context $\vec{\srt}\times \vec{\srt}$ 
such that 
the following entailments are derivable: 
$\Lone\vdash \dist(\vec{x},\vec{x})$ and 
$\dist(\vec{x},\vec{y})\vdash \dist(\vec{y},\vec{x})$ and 
$\dist(\vec{x},\vec{y})\Ltensor \dist(\vec{y},\vec{z})\vdash \dist(\vec{x},\vec{z})$.
 
\item\label{ex:distance:0}
For the \pl doctrine $\ple{\PP_{[0,\infty]},+,0}$ as in \refToExItem{runningmon1}{1} (where $[0,\infty]$ is the Lawvere quantale \cite{LawvereF:metsgl})
a $\PP_{[0,\infty]}$-distance $\dist$ over a set $X$ is a metric  $\dist:X\times X\to [0,\infty]$, \ie  
$0\ge \dist(x,x)$ and 
$\dist(x,y)\geq \dist(y,x)$ and 
$\dist(x,y)+\dist(y,z)\geq \dist(x,z)$. 

\item\label{ex:distance:13} 
For the realisability doctrine $\mathcal{R}_A$ as in  \refToExItem{runningmon1}{12}, 
a $\mathcal{R}_A$-distance $\dist$ over a set $X$ is function $\fun{\dist}{X\times X}{\PP(|A|)}$ such that there are three elements in $|A|$ realising reflexivity, symmetry, transitivity of $\dist$. 

\item\label{ex:distance:444} 
For the doctrine $K_\WW$  as in \refToExItem{runningmon1}{44}, 
a $K_\WW$-distance over a set $X$ is a ternary relation $\dist\subseteq X\times X\times W$ such that,  
for every $x,y,z$ in $X$  and every $w,w_1,w_2$ in $W$, we have 
$\krpunit\le w$ implies $\ple{x,x,w}\in \dist$ and 
$\ple{x,y,w}\in\dist$ implies $\ple{y,x,w}\in \dist$ and 
$\ple{x,y,w_1}\in \dist$ and $\ple{y,z,w_2}\in \dist$ implies $\ple{x,z,w_1\krpmul w_2}\in \dist$.
\end{enumerate}
\end{exas}

\begin{prop}\label{prop:elem-dist}
Let \ple{\Doc,\fmul,\funit} be an elementary \pl doctrine. 
For every obect $A$ in the base of $\Doc$, 
the equality predicate $\delta_A$ is a $\Doc$-distance. 
\end{prop}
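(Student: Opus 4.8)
The plan is to verify, directly from \refToDef{lelementary}, that $\dist:=\delta_A$ satisfies the three laws defining a \Doc-distance. Two of them are almost immediate. Reflexivity, $\funit_A\order\DReIdx{\Doc}{\Delta_A}(\delta_A)$, is literally the first condition of \refToDef{lelementary}. Transitivity, $\DReIdx{\Doc}{\ple{\pi_1,\pi_2}}(\delta_A)\fmul\DReIdx{\Doc}{\ple{\pi_2,\pi_3}}(\delta_A)\order\DReIdx{\Doc}{\ple{\pi_1,\pi_3}}(\delta_A)$ over $A\times A\times A$, I would obtain by instantiating the second condition of \refToDef{lelementary} with $X:=A$ and $\alpha:=\delta_A\in\Doc(A\times A)=\Doc(X\times A)$: the ambient object $X\times A\times A$ becomes $A\times A\times A$, and the displayed inequality is then verbatim the transitivity law.

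The only real work is symmetry, $\delta_A\order\DReIdx{\Doc}{\ple{\pi_2,\pi_1}}(\delta_A)$ in $\Doc(A\times A)$, which I would prove by the classical Lawvere-style transport of equality ``along itself''. Write $s:=\ple{\pi_2,\pi_1}\colon A\times A\to A\times A$ for the symmetry iso. First instantiate the second condition of \refToDef{lelementary} with $X:=A$ and $\alpha:=\DReIdx{\Doc}{s}(\delta_A)\in\Doc(A\times A)$; simplifying the composite reindexings $\DReIdx{\Doc}{\ple{\pi_1,\pi_2}}\circ\DReIdx{\Doc}{s}$ and $\DReIdx{\Doc}{\ple{\pi_1,\pi_3}}\circ\DReIdx{\Doc}{s}$ by functoriality gives, over $A\times A\times A$,
\[
\DReIdx{\Doc}{\ple{\pi_2,\pi_1}}(\delta_A)\fmul\DReIdx{\Doc}{\ple{\pi_2,\pi_3}}(\delta_A)\order\DReIdx{\Doc}{\ple{\pi_3,\pi_1}}(\delta_A).
\]

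Next I would reindex this inequality along the arrow $\ple{\pi_1,\pi_1,\pi_2}\colon A\times A\to A\times A\times A$ that duplicates the first coordinate, using that in a \pl doctrine reindexing is a monoid homomorphism and hence commutes with $\fmul$. The three terms collapse, respectively, to $\DReIdx{\Doc}{\pi_1}\DReIdx{\Doc}{\Delta_A}(\delta_A)$, to $\delta_A$, and to $\DReIdx{\Doc}{s}(\delta_A)$, so one is left with $\DReIdx{\Doc}{\pi_1}\DReIdx{\Doc}{\Delta_A}(\delta_A)\fmul\delta_A\order\DReIdx{\Doc}{s}(\delta_A)$. Finally, reindexing the first condition of \refToDef{lelementary} along $\pi_1$ and using that reindexing preserves $\funit$ yields $\funit_{A\times A}\order\DReIdx{\Doc}{\pi_1}\DReIdx{\Doc}{\Delta_A}(\delta_A)$; monotonicity of $\fmul$ together with neutrality of $\funit$ then gives $\delta_A=\funit_{A\times A}\fmul\delta_A\order\DReIdx{\Doc}{\pi_1}\DReIdx{\Doc}{\Delta_A}(\delta_A)\fmul\delta_A\order\DReIdx{\Doc}{s}(\delta_A)$, which is exactly symmetry.

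The main obstacle is not conceptual but a matter of bookkeeping: one must track the projections carefully and use the canonical identifications of iterated products (e.g.\ of $(X\times A)\times A$ with $A\times A\times A$ when $X=A$), and invoke at each step the fact, part of the definition of a \pl doctrine, that reindexing preserves both $\fmul$ and $\funit$. As an alternative, one could route the whole argument through the adjunction $\Ex_{\id{X}\times\Delta_A}\dashv\DReIdx{\Doc}{\id{X}\times\Delta_A}$ of \refToProp{equivalenza}, but the direct computation above is shorter.
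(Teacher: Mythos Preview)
Your proof is correct. Reflexivity and transitivity are handled exactly as in the paper. For symmetry you take a genuinely different route: you apply condition~(2) of \refToDef{lelementary} directly with $\alpha:=\DReIdx{\Doc}{\ple{\pi_2,\pi_1}}(\delta_A)$ and then pull back along $\ple{\pi_1,\pi_1,\pi_2}$ to collapse everything using reflexivity. The paper instead invokes \refToProp{equivalenza}: since $\ple{\pi_2,\pi_1}\circ\Delta_A=\Delta_A$, one has $\Ex_{\ple{\pi_2,\pi_1}}(\delta_A)=\Ex_{\ple{\pi_2,\pi_1}}\Ex_{\Delta_A}(\funit_A)=\Ex_{\Delta_A}(\funit_A)=\delta_A$, and the adjunction $\Ex_{\ple{\pi_2,\pi_1}}\dashv\DReIdx{\Doc}{\ple{\pi_2,\pi_1}}$ (available because the swap is an iso) yields $\delta_A\order\DReIdx{\Doc}{\ple{\pi_2,\pi_1}}(\delta_A)$. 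Your argument is more self-contained, needing only the raw axioms and bookkeeping with projections; the paper's is shorter on the page once the adjunction machinery is in hand, and foregrounds the idea that $\delta_A$ is the image of $\funit$ under a left adjoint. Amusingly, you mention the adjunction route as an alternative and call your direct computation shorter; the paper makes the opposite trade.
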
 
\begin{proof} 
The first point of \refToDef{lelementary} establishes reflexivity of $\delta_A$. 
The second point gives transitivity when $X$ is $A$ and $\alpha$ is $\delta_A$. 
To show symmetry, 
note that $\ple{\pi_2,\pi_1}\circ \Delta_A=\Delta_A$, hence, from reflexivity we deduce 
$\funit_A \order \Doc\reidx{\Delta_A}(\delta_A) = \Doc\reidx{\Delta_A}(\Doc\reidx{\ple{\pi_2,\pi_1}}(\delta_A))$ and 
and from the adjunction $\Ex\reidx{\Delta_A}\dashv\DReIdx{\Doc}{\Delta_A}$ and \refToProp{equivalenza} we conclude 
$\delta_A = \Ex\reidx{\Delta_A}(\funit_A) \order  \DReIdx{\Doc}{\ple{\pi_2,\pi_1}}(\delta_A)$.
\end{proof}

As already mentioned in the introduction, 
the direct rephrasing of rules for equality from First Order Logic to Linear Logic leads to a replicable equality predicate. 
This was already observed by Dozen~\cite{Hodges1996LogicFF} 
and it can be quickly seen in the following tree 
\[\begin{prooftree}
\[
  \[ \justifies \lcons{}{\trm\Leq\trm} \]
  \quad 
  \[ \justifies \lcons{}{\trm\Leq\trm} \] 
\justifies \lcons{}{\trm\Leq\trm \Ltensor \trm\Leq\trm} 
\]
\quad 
\[ \justifies \lcons{\trm\Leq\atrm}{\trm\Leq\atrm} \] 
\justifies \lcons{\trm\Leq\atrm}{\trm\Leq\atrm \Ltensor \trm\Leq\atrm} 
\end{prooftree} \]
by reflexivity of $\Leq$, we have $\lcons{}{\trm\Leq\trm}$, then, introducing $\Ltensor$ on the right, 
we get $\lcons{}{\trm\Leq\trm \Ltensor \trm\Leq\trm}$;
by the assumption $\lcons{\trm\Leq\atrm}{\trm\Leq\atrm}$, then, 
since $\trm\Leq\trm \Ltensor \trm\Leq\trm$ is equal to $\subst{(\trm\Leq\var \Ltensor \trm\Leq\var)}{\trm}{\var}$, using substitutivity we conclude 
$\lcons{\trm\Leq\atrm}{\trm\Leq\atrm \Ltensor \trm\Leq\atrm}$. 

Doctrines allows to prove this drawback in more generality. 
Given a \pl doctrine \ple{\Doc,\fmul,\funit}, an element $\alpha$ in $\Doc(X)$ is called \dfn{affine} if  $\alpha \order \funit$ and \dfn{replicable} if $\alpha\order \alpha\fmul\alpha$. 
Note that on affine and replicable elements the monoidal operation $\fmul$ behaves as a meet operation, hence, they can be deleted and duplicated. 

\begin{exas}\label{ex:affine}
\begin{enumerate}
\item\label{ex:affine:22} 
For the syntactic doctrine $\Prop_\TT$ over a theory $\TT$ in the $(\Ltensor,\Lone)$-fragment of Linear Logic, an element $\alpha$ in $\Prop_\TT(\vec{\srt})$ (\ie is a formula with free variables over the context $\vec{\srt}$) is affine when weakening ($\alpha\vdash\Lone$) is derivable in $\TT$, while $\alpha$ is replicable if contraction ($\alpha\vdash \alpha\Ltensor\alpha$) is derivable in $\TT$.

\item\label{ex:affine:1}
For the \pl doctrine $\ple{\PP_{[0,\infty]},+,0}$ as in \refToExItem{distance}{0} 
every element $\alpha:A\to [0,\infty]$ is affine as $0$ is also a top element, while $\alpha$ is replicable if and only if its values are either $0$ or $\infty$.
\end{enumerate}
\end{exas}

\begin{prop}\label{prop:tortellini} 
Let $\ple{\Doc,\fmul,\funit} $ be \pl and $\fun{f}{X}{Y}$ an arrow of the base. If $\DReIdx{\Doc}{f}$ has a left adjoint $\Ex_f$, then
$\Ex_f(\funit_X)$ is affine and replicable. 
\end{prop}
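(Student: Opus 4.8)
The plan is to rely solely on the adjunction $\Ex_f\dashv\DReIdx{\Doc}{f}$ together with the two naturality conditions built into the notion of \pl doctrine, namely that reindexing is a monoid homomorphism: $\DReIdx{\Doc}{f}(\alpha\fmul\beta)=\DReIdx{\Doc}{f}(\alpha)\fmul\DReIdx{\Doc}{f}(\beta)$ and $\DReIdx{\Doc}{f}(\funit_Y)=\funit_X$. No Frobenius-type reciprocity is needed, so the argument will be a short transposition exercise.

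\textbf{Affinity.} To prove $\Ex_f(\funit_X)\order\funit_Y$ I would transpose across the adjunction: this inequality is equivalent to $\funit_X\order\DReIdx{\Doc}{f}(\funit_Y)$. But by naturality of $\funit$ we have $\DReIdx{\Doc}{f}(\funit_Y)=\funit_X$, so the transposed inequality is just $\funit_X\order\funit_X$, which trivially holds. Hence $\Ex_f(\funit_X)$ is affine.

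\textbf{Replicability.} To prove $\Ex_f(\funit_X)\order\Ex_f(\funit_X)\fmul\Ex_f(\funit_X)$ I would again transpose: it suffices to show
\[
\funit_X\order\DReIdx{\Doc}{f}\bigl(\Ex_f(\funit_X)\fmul\Ex_f(\funit_X)\bigr).
\]
Since reindexing preserves $\fmul$, the right-hand side equals $\DReIdx{\Doc}{f}\Ex_f(\funit_X)\fmul\DReIdx{\Doc}{f}\Ex_f(\funit_X)$. Now the unit of the adjunction gives $\funit_X\order\DReIdx{\Doc}{f}\Ex_f(\funit_X)$, and because $\funit_X$ is the neutral element of the monoid $\Doc(X)$ it is idempotent, $\funit_X=\funit_X\fmul\funit_X$. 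Combining these with monotonicity of $\fmul$ yields
\[
\funit_X=\funit_X\fmul\funit_X\order\DReIdx{\Doc}{f}\Ex_f(\funit_X)\fmul\DReIdx{\Doc}{f}\Ex_f(\funit_X)=\DReIdx{\Doc}{f}\bigl(\Ex_f(\funit_X)\fmul\Ex_f(\funit_X)\bigr),
\]
which is exactly the transposed inequality. Transposing back gives replicability.

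There is no real obstacle here: the only point requiring care is to invoke the correct structural facts about \pl doctrines, i.e.\ that each $\DReIdx{\Doc}{f}$ is a homomorphism of commutative monoids (so that it commutes with $\fmul$ and sends $\funit_Y$ to $\funit_X$), which is precisely the naturality of $\fmul$ and $\funit$ from \refToDef{monoid-doc}. Everything else is formal manipulation of the adjunction unit and the idempotence of the monoidal unit.
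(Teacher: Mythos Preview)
Your proof is correct and essentially identical to the paper's: both establish affineness via $\DReIdx{\Doc}{f}(\funit_Y)=\funit_X$ and the adjunction, and replicability via $\funit_X=\funit_X\fmul\funit_X\order\DReIdx{\Doc}{f}\Ex_f(\funit_X)\fmul\DReIdx{\Doc}{f}\Ex_f(\funit_X)=\DReIdx{\Doc}{f}(\Ex_f(\funit_X)\fmul\Ex_f(\funit_X))$ followed by transposition. The only cosmetic difference is that the paper writes the affineness step using the counit ($\Ex_f\DReIdx{\Doc}{f}(\funit_Y)\le\funit_Y$) rather than explicitly transposing.
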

\begin{proof} 
We prove affineness by noting that 
$\Ex_{f}(\funit_X) = \Ex_{f}\DReIdx{\Doc}{f}(\funit_{Y})\le \funit_{Y}$. 
Replicability follows from 
$\funit_X = \funit_X\fmul\funit_X\order \DReIdx{\Doc}{f}(\Ex_{f}\funit_X)\fmul\DReIdx{\Doc}{f}(\Ex_{f}\funit_X) = \DReIdx{\Doc}{f}(\Ex_{f}(\funit_X)\fmul\Ex_{f}(\funit_X))$ and the adjunction between $\Ex_f$ and $\DReIdx{\Doc}{f}$. 
\end{proof}

 By \refToProp{equivalenza} it is $\delta_A=\Ex_{\Delta_A}(\funit_A)$. Affineness and replicability follows by \refToProp{tortellini}.

This shows that the standard approach based on left adjoints cannot support a quantitative notion of equality, which then needs to be defined in a different way.


\section{Quantitative equality via graded modalities} 
\label{sect:graded} 

Graded modalities indexed by a semiring \RR are used in many linear calculi and type theories to enable resource sensitive reasoning \cite{ReedP10,BrunelGMZ14,GaboardiKOBU16,OrchardLE19,AbelB20,DalLagoG22,MoonEO21}. 
We introduce graded modalities in the framework of \pl doctrines to formulate a quantitative variant of elementary doctrines that we call
 \lip doctrines. We motivate the terminology after \refToCor{graded-lip}.
Then, we define a core deductive calculus for quantitative equality and its categorical semantics in \lip doctrines, proving it is sound and complete. 

\subsection{Graded and \lip doctrines}
\label{sect:graded-doc} 

A \dfn{resource semiring} is a semiring in the category \Pos, 
that is, a tuple $\RR=\ple{\RSet,\rord,\rplus,\rmul,\rzero,\rone}$, where 
\ple{\RSet,\rord} is a partially ordered set and $\rplus$ and $\rmul$ are monotone binary operations on it such that 
\begin{itemize}
\item \ple{\RSet,\rplus,\rzero} is a commutative monoid and \ple{\RSet,\rmul,\rone} is a monoid, 
\item 
for all $\res,\ares,\bres \in\RSet$ we have 
$\res\rmul(\ares\rplus\bres) = \res\rmul\ares \rplus \res\rmul\bres$, 
$(\ares\rplus\bres)\rmul\res = \ares\rmul\res \rplus \bres\rmul\res$, 
$\res\rmul\rzero = \rzero$ and 
$\rzero\rmul\res = \rzero$, 
\item $\rplus$ and $\rmul$ are monotone w.r.t.\ $\rord$. 
\end{itemize}

To define \RR-graded modalities, we adapt the approach in \cite{DagninoR21}, 
following the definition of graded linear exponential comonads \cite{BrunelGMZ14,GaboardiKOBU16,Katsumata18}. 

\begin{defi}\label{def:graded}
An \dfn{\RR-graded (linear exponential) modality} on a \pl doctrine \ple{\Doc,\fmul.\funit}  is an $\RSet$-indexed family of  natural transformations 
\TNat{\rmod{\res}}{\Doc}{\Doc} satisfying the axioms below for all objects $X$, $\alpha,\beta$ in $\Doc (X)$ and $\res,\ares$ in $\RSet$: 
\begin{enumerate}
\item\label{def:graded:1} $\funit  \order \rmod{\res}\funit $ \hfill (lax-monoidality-1) 
\item\label{def:graded:2} $\rmod{\res}\alpha \fmul  \rmod{\res}\beta \order \rmod{\res}(\alpha\fmul \beta)$ \hfill  (lax-monoidality-2) 
\item\label{def:graded:3} $\rmod{\rzero}\alpha \order \funit $  \hfill (weakening) 
\item\label{def:graded:4} $\rmod{\res\rplus\ares}\alpha \order \rmod{\res}\alpha \fmul \rmod{\ares}\alpha$ \hfill (contraction) 
\item\label{def:graded:5} $\rmod{\rone}\alpha\order\alpha$   \hfill (counit) 
\item\label{def:graded:6} $\rmod{\res\rmul\ares}\alpha \order \rmod{\res}(\rmod{\ares}\alpha)$ \hfill  (comultiplication) 
\item\label{def:graded:7} $\rmod{\res} \alpha \order \rmod{\ares} \beta$ if $\ares\rord\res$  \hfill  (contravariance) 
\end{enumerate}
An \dfn{\RR-graded (\pl) doctrine} is a \pl doctrine together with an \RR-graded modality on it. 
\end{defi}
\noindent 

Naturality ensures that 
$\rmod{\res}$ is monotone and stable under reindexing, for all $\res$.  
making the graded modality a graded linear exponential comonad on each fibre. 

We collect below some examples adapted from 
\cite{BreuvartP15,OrchardLE19,DalLagoG22}. 

\begin{exas}\label{ex:graded}
\begin{enumerate}

\item\label{ex:graded:12} 
The singleton set $\{\infty\}$ with the trivial order and operations is a resource semiring. 
An $\{\infty\}$-graded modality on 
$\Doc$ satisfies the following
 inequalities fibre-wise: 
$\rmod{\infty}\alpha \order \funit$, 
$\rmod{\infty} \alpha \order \rmod{\infty}\alpha \fmul \rmod{\infty}\alpha$, 
$\rmod{\infty}\alpha \order \alpha$ and 
$\rmod{\infty}\alpha \order \rmod{\infty}\rmod{\infty}\alpha$. 
Hence, $\rmod{\infty}$ models the usual bang modality of Linear Logic \cite{Girard87}. 
Syntactic doctrines on the $(\Ltensor, \Lone, \Lbang)$-fragment of Linear Logic are $\{\infty\}$-graded.
 
\item\label{ex:graded:2} 
Let $\NN$ abbreviate the semiring $\ple{\N,+,\cdot,0,1}$ of natural numbers and $\NN_\infty$ its extension by a new point $\infty$ with 
$\infty+x = x+\infty = \infty\cdot x = x \cdot \infty  =\infty$.
Denote by $\NN^=$ and $\NN^=_\infty$ the semirings $\NN$ and $\NN_\infty$ ordered by the equality relation.  
If $\rmod{}$ is an $\NN^=$-graded modality on $\Doc$, since $n = 1 + \ldots + 1$ ($n$ times),  contraction and counit imply
$\rmod{n} \alpha \order \alpha \fmul \ldots\fmul\alpha$ ($n$ times). 
Hence, $\rmod{n}\alpha$ provides exactly $n$ copies of $\alpha$.
Syntactically, this corresponds to the \emph{exact usage} modality $\Lbang_n$ of Bounded Linear Logic \cite{GirardSS92}. Indeed, 
syntactic doctrines built out of it are $\NN^=$-graded. 
If we consider $\NN^=_\infty$, we get the additional modality $\rmod{\infty}$ which behaves as in Item \ref{ex:graded:12}, modelling unrestricted usage. 

\item\label{ex:graded:0}
Let $\RRPos = \ple{\RPos,\le,+,\cdot,0,1}$ be the semiring of non-negative real numbers, $[0,\infty]$ be the Lawvere quantale \cite{LawvereF:metsgl} and $\ple{\PP_{[0,\infty]},+,0}$ be a doctrine as in \refToExItem{runningmon1}{1}. 
For  $\alpha \in \PP_{[0,\infty]}(X)$ and $\res\in\RPos$, the assignment $\alpha\mapsto \textbf{m}_{\res}\alpha$, with  $\textbf{m}_{\res}\alpha(x) = \res \cdot \alpha(x)$ is an $\RRPos$-graded modality on  $\PP_{[0,\infty]}$.
Here  $\res\cdot \infty=\infty$ when $\res\ne 0$ and $0\cdot\infty = 0$.

\item\label{ex:graded:1} 
Consider an ordered $\RR$-Linear Combinatory Algebra (\RR-LCA) \cite{Atkey18}, that is, a BCI algebra $A$ (\cf \refToExItem{runningmon1}{12}) with  
a function \fun{!_\res}{|A|}{|A|} for $\res\in \RSet$ and elements $K, W_{\res,\ares}, D, d_{\res,\ares}, F_\res$ and $O_{\res,\ares}$, with $\res\rord\ares$,  of $|A|$ satisfying
\[\begin{array}{ll} 
K\app x\app !_\rzero y =x 
& \quad 
W_{\res,\ares}\app x\app !_{\res\rplus\ares} y = x\app !_{\res}y\app !_{\ares}y 
\\
D\app !_{\rone}x=x   
& \quad 
d_{\res,\ares}\app !_{\res\rmul\ares} x = !_{\res}!_{\ares}x 
\\ 
F_{\res}\app !_{\res}x\app !_{\res}y = !_{\res}(x\app y)  
& \quad 
O_{\res,\ares}\app !_{\ares}x = !_{\res}x
\end{array}\] 
The realisability doctrine $\mathcal{R}_A$ (\cf \refToExItem{runningmon1}{12}) is \RR-graded: 
for $\alpha\in \PP(|A|)^X$  and $\res\in \RSet$, define $\rmod{\res}\alpha = x \mapsto \{!_\res a\mid a \in \alpha(x)\}$. 

\item\label{ex:graded:444} 
Let $\RR$ be a resource semiring and $\WW = \ple{W,\le,\krpmul,\krpunit}$ a monoidal Kripke frame as in \refToExItem{runningmon1}{44}.  
Recall from \cite{DalLagoG22} that a lax action of $\RR$ on $\WW$ is a monotone map $\fun{a}{\RSet \times W}{W}$ such that 
\[\begin{array}{ll} 
a_X(\res,\krpunit)\le \krpunit
& \quad
a_X(\res, w_1\krpmul w_2)\le a_X(\res,w_1) \krpmul a_X(\res,w_2) \\
\krpunit\le a(\rzero,w)
& \quad 
a(\res,w)\krpmul a(\ares,w)\le a(\res\rplus\ares,w)\\
w\le a(\rone,w)
& \quad  
a(\res,a(\ares, w)) \le a(\res\rmul\ares,w) 
\end{array}\] 
This gives rise to the $\RR$-graded modality $\overline{a}$ on the \pl doctrine $K_\WW$ of \refToExItem{runningmon1}{44} 
 whose component at $X$ maps $U\in K_\WW(X)$ to $\overline{a}_\res U = \{\ple{x,w}\mid \exists_v.\, a(\res,v)\le w, \ple{x,v}\in U\}$.
\end{enumerate}
\end{exas}

\RR-graded doctrines  are the objects of the 2-category \GMD{\RR}  where a 1-arrow from \ple{\Doc,\rmod{}} to \ple{\aDoc,\armod{}} is a 1-arrow \oneAr{\ple{F,f}}{\Doc}{\aDoc} in \MD preserving modalities, that is, for all $\res\in\RSet$ and object $X$ in the base of \Doc, it holds
$ {\armod{\res}}_{FX} \circ f_X = f_X \circ {\rmod{\res}}_X $; while 
2-arrows, compositions and identities are those of \MD. 

\refToProp{graded-primary} establishes
the inclusion  $\PD \hookrightarrow \GMD{\RR}$.  

\begin{prop}\label{prop:graded-primary}
Let \ple{\Doc,\fmul,\funit} be a \pl doctrine and \TNat{\rmod{\res}}{\Doc}{\Doc} be the identity for all $\res\in\RSet$. 
Then, \Doc is a primary doctrine iff $\rmod{}$ is an \RR-graded modality on it. 
\end{prop}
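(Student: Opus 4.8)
The plan is to notice that, once every $\rmod{\res}$ is the identity natural transformation, all but two of the seven axioms in \refToDef{graded} collapse to instances of the reflexivity of $\order$. Substituting the identity, axioms \ref{def:graded:1}, \ref{def:graded:2}, \ref{def:graded:5}, \ref{def:graded:6} and \ref{def:graded:7} each read $\xi\order\xi$ for a suitable $\xi$ (for instance \ref{def:graded:2} becomes $\alpha\fmul\beta\order\alpha\fmul\beta$), hence hold in every \pl doctrine with no hypotheses. The only surviving content is in weakening \ref{def:graded:3}, which becomes $\alpha\order\funit$ for all $X$ and all $\alpha\in\Doc(X)$, and in contraction \ref{def:graded:4}, which becomes $\alpha\order\alpha\fmul\alpha$; in the terminology introduced just before \refToProp{tortellini}, these say exactly that every fibre element is affine and replicable. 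Thus the proposition reduces to the claim: in a \pl doctrine, every fibre element is affine and replicable if and only if the monoidal structure $\ple{\fmul,\funit}$ computes finite meets fibre-wise, equivalently $\funit_X=\top_X$ and $\fmul_X=\wedge_X$ in each fibre, which is precisely what it means for $\Doc$ (a triple already carrying monoidal data) to be a primary doctrine.

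For the direction ``affine and replicable $\Rightarrow$ primary'' I would argue fibre by fibre. Affineness of every $\alpha\in\Doc(X)$ is literally the statement that $\funit_X$ is the greatest element of $\Doc(X)$, i.e. the empty meet. For binary meets, fix $\alpha,\beta\in\Doc(X)$: monotonicity of $\fmul_X$ together with affineness gives $\alpha\fmul\beta\order\funit\fmul\beta=\beta$ and, symmetrically, $\alpha\fmul\beta\order\alpha$, so $\alpha\fmul\beta$ is a lower bound of $\{\alpha,\beta\}$; and for any lower bound $\gamma$, replicability and monotonicity give $\gamma\order\gamma\fmul\gamma\order\alpha\fmul\beta$, so $\alpha\fmul\beta=\alpha\wedge\beta$. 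These meets are then preserved by reindexing for free: by \refToDef{monoid-doc} each $\DReIdx{\Doc}{f}$ is a monoid homomorphism for $\ple{\fmul,\funit}$, hence it preserves $\funit=\top$ and $\fmul=\wedge$. Therefore $\Doc$ is a primary doctrine.

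For the converse I would simply observe that if $\Doc$ is primary with $\funit_X=\top_X$ and $\fmul_X=\wedge_X$ in every fibre (the only way the monoidal structure can coincide with the meet structure, which is unique when it exists), then $\alpha\order\top_X=\funit_X$ gives affineness and $\alpha=\alpha\wedge\alpha=\alpha\fmul\alpha$ gives replicability by idempotency of $\wedge$, so weakening and contraction hold, and with them all seven axioms. Read from right to left this is exactly the inclusion $\PD\hookrightarrow\GMD{\RR}$, sending a primary doctrine to its cartesian \pl structure equipped with the trivial modality. I do not anticipate any real difficulty: the core is a routine unwinding of \refToDef{graded} together with the standard fact that an affine, replicable ordered commutative monoid is a meet-semilattice. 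The only point deserving a little care is the conceptual bookkeeping around the phrase ``$\Doc$ is a primary doctrine'' applied to a triple $\ple{\Doc,\fmul,\funit}$, namely that one must identify the monoidal structure with the meet structure, and the two computations above are exactly what pins this down.
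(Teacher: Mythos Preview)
Your proposal is correct and follows essentially the same approach as the paper's proof: both reduce the statement to the observation that, under the identity modality, axioms \ref{def:graded:1}, \ref{def:graded:2}, \ref{def:graded:5}, \ref{def:graded:6}, \ref{def:graded:7} become reflexivity, leaving only weakening ($\alpha\order\funit$) and contraction ($\alpha\order\alpha\fmul\alpha$) as content, and these are equivalent to $\ple{\fmul,\funit}$ computing finite meets. Your write-up is in fact more thorough than the paper's: the paper's argument for the converse direction stops at ``$\funit$ is top and $\alpha\order\alpha\fmul\alpha$'' without spelling out that this forces $\fmul$ to be the meet, whereas you give that routine verification explicitly, and you are also careful to flag the bookkeeping point that ``primary'' here must be read as identifying $\ple{\fmul,\funit}$ with $\ple{\wedge,\top}$.
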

\begin{proof}
Let $X$ be an object in the base category of \Doc and $\alpha$ an element in $\Doc (X)$. 
The left-to-right implication is easy to check: 
the non-trivial axiom to verify is weakening, indeed, 
we have $\rmod{\rzero}\alpha = \alpha \order \funit$, as $\funit$ is the top element of $\Doc (X)$. 
For the converse we have to show that $\funit$ is the top element of $\Doc (X)$ and $\alpha\order\alpha\fmul\alpha$ for all $\alpha\in\Doc (X)$. 
By weakening $\alpha = \rmod{\rzero}\alpha \order \funit$ and 
contraction gives $\alpha = \rmod{\res\rplus\res} \alpha \order \rmod{\res}\alpha \fmul \rmod{\res}\alpha = \alpha\fmul\alpha$. 
\end{proof} 

A quantitative equality for an \RR-graded doctrine \ple{\Doc,\rmod{}} is then defined to be a $\Doc$-distance satisfying an \emph{\RR-graded substitutivite property} as detailed below: 

\begin{defi}\label{def:affine-lip} 
An \dfn{\RR-\lip doctrine} is a triple \ple{\Doc,\rmod{},\delem} where 
\ple{\Doc,\rmod{}} is an \RR-graded doctrine and, 
for each object $A$ in the base, $\delem_A$ is a \Doc-distance on $A$ 
such that, for all objects $A$ and $X$ and $\alpha$ in $\Doc(X\times A)$
\begin{enumerate}[label=(\alph*)] 
\item\label{def:affine-lip:1} there is an $\res$ in $\RSet$ such that 
\[\DReIdx{\Doc}{\ple{\pi_1,\pi_2}}(\alpha)\fmul \rmod{\res}\DReIdx{\Doc}{\ple{\pi_2,\pi_3}}(\delem_A)\order \DReIdx{\Doc}{\ple{\pi_1,\pi_3}}(\alpha);\]
\item\label{def:affine-lip:2}$\delem_{A\times X} = \DReIdx{\Doc}{\ple{\pi_1,\pi_3}}(\delem_A)\fmul\DReIdx{\Doc}{\ple{\pi_2,\pi_4}}(\delem_X)$;
\item\label{def:affine-lip:3}$\delem_{A\times X} \order \DReIdx{\Doc}{\ple{\pi_1,\pi_3}}(\delem_A)$ and $\delem_{A\times X} \order \DReIdx{\Doc}{\ple{\pi_2,\pi_4}}(\delem_X)$; 
\item\label{def:affine-lip:4}$\delem_1 = \funit$. 
\end{enumerate}
\end{defi}

The key difference between elementary and \RR-\lip doctrines is the substitutive property, which, taking advantage of graded modalities, in the latter is  stated in a resource sensitive way. 
Indeed, to prove a substitution, we need to have \emph{enough equality resources}. 

Since symmetry and transitivity are no longer derivable from substitutivity, for \lip doctrines we assumed $\delem_A$ to be a $\Doc$-distance (and not only a reflexive relation), as these are natural properties for equality.  
 
The product $A\times B$ mimics the concatenation of the context $A$ with the context $B$, hence axioms \ref{def:affine-lip:2}, \ref{def:affine-lip:3} and \ref{def:affine-lip:4}  encode the \emph{independence of contexts}. 
It allows the independent use of equalities in a product.

In \refToSect{comonad} we will describe a construction producing a \lip doctrine out of any graded doctrine, 
which will provide us with several examples. 
For the moment we consider only the following one: 

\begin{exas}\label{ex:liplip}
Denote by $\LIP$ the category of metric spaces (whose metrics take values in $[0,\infty]$) and \lip continuous functions. 
For a metric space $A$, denote by $|A|$ the underlying set and by $\delem_A$ the metric of $A$. 
Note that $\LIP$ has finite products where 
$|A\times B| = |A|\times |B|$ and $\delem_{A\times B}(x,y,x',y') = \delem_A(x,x')+\delem_B(y,y')$. 
Endow the Lawvere quantale $[0,\infty]$ with the Euclidean metric and let $\LIPDoc(A)$ be the set of \lip continuous functions from a metric space $A$ to $[0,\infty]$. 
Each $\LIPDoc(A)$ is an ordered commutative monoid where the order and the operations are defined pointwise. 
If $\fun{f}{A}{B}$ and $\fun{\alpha}{ B}{[0,\infty]}$ are \lip continuous, denote by $\LIPDoc(f)(\alpha)$ the composition $\fun{\alpha f}{A}{[0,\infty]}$. 
Then $\fun{\LIPDoc}{\LIP\op}{\Pos}$ is a \pl doctrine. 
Note that $\delem_A\in\LIPDoc(A\times A)$ as the inequality 
$\delem_A(x,x')+\delem_A(y,y') \ge |\delem_A(x',y')-\delem_A(x,y)|$ holds. 
Endow $\LIPDoc$ with the $\RRPos$-graded modality \textbf{m} which acts as in \refToExItem{graded}{0}, \ie $\textbf{m}_{\res}\alpha(x) = \res \cdot \alpha(x)$, 
then $\ple{\LIPDoc, \textbf{m}, \delem}$ is a \lip doctrine: indeed every $\alpha\in\LIPDoc(X\times A)$ is \lip continuous, hence there is $\res\in\RPos$ such that 
$\res\cdot\delem_{X\times A}(x,a,x',a') \ge |\alpha(x',a')-\alpha(x,a)|$ that, by reflexivity of $\delem_X$, implies 
$\res\cdot\delem_A(a,a') \ge |\alpha(x,a')-\alpha(x,a)|$ and this is equivalent to 
$\alpha(x,a)+\res\cdot \delem_A(a,a') \ge \alpha(x,a')$. 

\end{exas}

Axiom \ref{def:affine-lip:1}, describing substitution, can be equivalently rephrased as follows.

\begin{prop}\label{prop:cara} 
Let \ple{\Doc,\rmod{}}  be an \RR-graded doctrine and $\delem$ a family of \Doc-distances. 
The following are equivalent:
\begin{enumerate}
\item  \ple{\Doc,\rmod{},\delem}   is \RR-\lip;
\item \ple{\Doc,\rmod{},\delem} satisfies axioms of \refToDef{affine-lip} where \ref{def:affine-lip:1}  is replaced by the following:
\begin{enumerate}[label=(\alph*')]
\item\label{def:affine-lip:1p} 
for all objects $A$ and $\alpha$ in $\Doc(A)$, there is $\res\in\RSet$ such that 
\[\DReIdx{\Doc}{\pi_1}(\alpha) \fmul \rmod{\res}\delem_A \order \DReIdx{\Doc}{\pi_2}(\alpha).\] 
\end{enumerate} 
\end{enumerate}
\end{prop}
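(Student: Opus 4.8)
The plan is to prove the equivalence (1) $\Leftrightarrow$ (2) by showing that, in the presence of the other axioms \ref{def:affine-lip:2}, \ref{def:affine-lip:3}, \ref{def:affine-lip:4} (which are common to both formulations), the graded substitutivity axiom \ref{def:affine-lip:1} and its "unary" reformulation \ref{def:affine-lip:1p} are interderivable. The direction \ref{def:affine-lip:1} $\Rightarrow$ \ref{def:affine-lip:1p} is the easy one: specialise the instance of \ref{def:affine-lip:1} to a suitable choice of the parameter object $X$ and of $\alpha$, and then precompose with a section that collapses the extra context factor. Concretely, given $\alpha \in \Doc(A)$, I will apply \ref{def:affine-lip:1} with $X$ the terminal object $1$ (so $X\times A \cong A$), identifying $\alpha$ with the corresponding element of $\Doc(1\times A)$; the triple reindexings $\ple{\pi_1,\pi_2}, \ple{\pi_2,\pi_3}, \ple{\pi_1,\pi_3}$ on $1\times A\times A \cong A\times A$ then reduce to $\pi_1$, $\id{}$ on the second component suitably, and $\pi_2$, yielding exactly $\DReIdx{\Doc}{\pi_1}(\alpha)\fmul \rmod{\res}\delem_A \order \DReIdx{\Doc}{\pi_2}(\alpha)$ after simplifying $\DReIdx{\Doc}{\ple{\pi_2,\pi_3}}(\delem_A)$ to $\delem_A$ using that the middle projection hits the $A\times A$ factor diagonally — more carefully, I will use naturality of $\delem$ and the fact that the relevant composite of projections is the identity on $A\times A$.

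For the converse \ref{def:affine-lip:1p} $\Rightarrow$ \ref{def:affine-lip:1}, the idea is to recover the general (many-variable) substitutivity from the single-variable case by building the composite context $A\times X$ into the picture and using the independence-of-contexts axioms. Given $\alpha \in \Doc(X\times A)$, I regard $X\times A$ as a single object $B$ and apply \ref{def:affine-lip:1p} to $\alpha \in \Doc(B)$, obtaining some $\res$ with $\DReIdx{\Doc}{\pi_1}(\alpha)\fmul\rmod{\res}\delem_{X\times A}\order\DReIdx{\Doc}{\pi_2}(\alpha)$ in $\Doc((X\times A)\times(X\times A))$. Now I reindex this inequality along the arrow $(X\times A)\times(X\times A)\to X\times A\times A$ that keeps the $X$-component of the first copy, the $A$-component of the first copy, and the $A$-component of the second copy (discarding the $X$-component of the second copy and identifying the two $X$'s via a diagonal). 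Under this reindexing, $\DReIdx{\Doc}{\pi_1}(\alpha)$ becomes $\DReIdx{\Doc}{\ple{\pi_1,\pi_2}}(\alpha)$, $\DReIdx{\Doc}{\pi_2}(\alpha)$ becomes $\DReIdx{\Doc}{\ple{\pi_1,\pi_3}}(\alpha)$, and — this is where \ref{def:affine-lip:3} enters — $\rmod{\res}\delem_{X\times A}$ maps to something below $\rmod{\res}\DReIdx{\Doc}{\ple{\pi_2,\pi_3}}(\delem_A)$, because $\delem_{X\times A}\order\DReIdx{\Doc}{(\text{proj to the }A\text{'s})}(\delem_A)$ by \ref{def:affine-lip:3} and $\rmod{\res}$ is monotone and stable under reindexing by naturality. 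Chaining these gives exactly \ref{def:affine-lip:1}.

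The bookkeeping obstacle — and the step I expect to be the most delicate to write cleanly — is managing the projections and the diagonal on $(X\times A)\times(X\times A)$ versus $X\times A\times A$: one must pick the correct mediating arrow and verify that each of the three reindexings $\DReIdx{\Doc}{\pi_1}$, $\rmod{\res}\circ(-)$ applied to $\delem_{X\times A}$, and $\DReIdx{\Doc}{\pi_2}$ transforms as claimed, using functoriality $\DReIdx{\Doc}{g}\DReIdx{\Doc}{h}=\DReIdx{\Doc}{hg}$ and naturality of $\rmod{\res}$ and of $\delem$. I would set up explicit names for all the projections from the relevant products at the start, state the mediating arrow once, and then the three transformations are each a one-line application of functoriality plus (for the $\delem$ term) axiom \ref{def:affine-lip:3}. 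No axiom of \refToDef{graded} beyond monotonicity and naturality of $\rmod{}$ is needed, and axioms \ref{def:affine-lip:2} and \ref{def:affine-lip:4} play no role here — they are simply carried along unchanged since both formulations assume them. I would remark explicitly that \ref{def:affine-lip:1p} is the form one actually uses when reading the calculus, which is why the proposition is worth recording.
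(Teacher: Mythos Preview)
Your overall plan and the choice of the mediating arrow $\ple{\pi_1,\pi_2,\pi_1,\pi_3}:X\times A\times A\to X\times A\times X\times A$ are exactly right, and the transformations of the two $\alpha$-terms under this reindexing are correct. The gap is in the treatment of the $\delem$-term: you invoke axiom \ref{def:affine-lip:3} to get that the pullback of $\delem_{X\times A}$ is \emph{below} $\DReIdx{\Doc}{\ple{\pi_2,\pi_3}}(\delem_A)$, but this inequality points the wrong way. After reindexing, the inequality you have is
\[
\DReIdx{\Doc}{\ple{\pi_1,\pi_2}}(\alpha)\fmul\rmod{\res}\DReIdx{\Doc}{\ple{\pi_1,\pi_2,\pi_1,\pi_3}}(\delem_{X\times A})\ \order\ \DReIdx{\Doc}{\ple{\pi_1,\pi_3}}(\alpha),
\]
and to replace the middle term by $\rmod{\res}\DReIdx{\Doc}{\ple{\pi_2,\pi_3}}(\delem_A)$ you need $\DReIdx{\Doc}{\ple{\pi_2,\pi_3}}(\delem_A)\order\DReIdx{\Doc}{\ple{\pi_1,\pi_2,\pi_1,\pi_3}}(\delem_{X\times A})$, not the reverse. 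Axiom \ref{def:affine-lip:3} gives only the reverse, so the ``chaining'' step does not go through.

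The fix, which is what the paper does, is to use axiom \ref{def:affine-lip:2} instead: it gives the \emph{equality} $\delem_{X\times A}=\DReIdx{\Doc}{\ple{\pi_1,\pi_3}}(\delem_X)\fmul\DReIdx{\Doc}{\ple{\pi_2,\pi_4}}(\delem_A)$, whose pullback along $\ple{\pi_1,\pi_2,\pi_1,\pi_3}$ is $\DReIdx{\Doc}{\ple{\pi_1,\pi_1}}(\delem_X)\fmul\DReIdx{\Doc}{\ple{\pi_2,\pi_3}}(\delem_A)$. Now reflexivity of the $\Doc$-distance $\delem_X$ gives $\funit\order\DReIdx{\Doc}{\ple{\pi_1,\pi_1}}(\delem_X)$, hence $\DReIdx{\Doc}{\ple{\pi_2,\pi_3}}(\delem_A)\order\DReIdx{\Doc}{\ple{\pi_1,\pi_2,\pi_1,\pi_3}}(\delem_{X\times A})$, which is the direction you actually need. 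So your remark that axiom \ref{def:affine-lip:2} ``plays no role'' is precisely where the argument breaks: \ref{def:affine-lip:2} (together with reflexivity) is the key, while \ref{def:affine-lip:3} is not used here.
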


\begin{proof} 1)$\Rightarrow$2) take as $X$ the terminal object of $\CC$ to get (a'). 2)$\Rightarrow$1)  Take $\alpha$ in $\Doc(X\times A)$, then $\DReIdx{\Doc}{\ple{\pi_1,\pi_2}}(\alpha) \fmul \rmod{\res}\delem_{X\times A} \order \DReIdx{\Doc}{\ple{\pi_3,\pi_4}}(\alpha)$. By (b) one has $\DReIdx{\Doc}{\ple{\pi_1,\pi_2}}(\alpha) \fmul \rmod{\res}\DReIdx{\Doc}{\ple{\pi_2,\pi_4}} \delem_{A} \order \DReIdx{\Doc}{\ple{\pi_3,\pi_4}}(\alpha)$. Reindexing along $\ple{\pi_1,\pi_2,\pi_1,\pi_3}:X\times A\times A\to X\times A\times X\times A$ completes the proof.
\end{proof}

Axioms \ref{def:affine-lip:3} and \ref{def:affine-lip:4} of \refToDef{affine-lip} are equivalent to affineness of equality.
\begin{prop}\label{prop:cara2} 
Let \ple{\Doc,\rmod{}}  be an \RR-graded doctrine and $\delem$ be a family of \Doc-distances. 
The following are equivalent:
\begin{enumerate}
\item  \ple{\Doc,\rmod{},\delem}   is \RR-\lip;
\item $\delem$ satisfies the following axioms: 
\begin{enumerate}[label=(\alph*'')]
\item axioms \ref{def:affine-lip:1} and \ref{def:affine-lip:2} of \refToDef{affine-lip} hold,
\item $\delem_A$ is affine.
\end{enumerate} 
\end{enumerate}
\end{prop}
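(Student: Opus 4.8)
The plan is to prove the two implications of \refToProp{cara2} by direct reindexing computations. The substantive observation is that, in the presence of axioms \refToDefItem{affine-lip}{1} and \refToDefItem{affine-lip}{2}, the remaining two axioms \refToDefItem{affine-lip}{3} and \refToDefItem{affine-lip}{4} together say exactly that every equality predicate $\delem_A$ is affine; the terminal object $1$ of the base does most of the work, together with the standing facts that reindexing preserves $\funit$ and that $\fmul$ is monotone.

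For $(2)\Rightarrow(1)$, assume \refToDefItem{affine-lip}{1}, \refToDefItem{affine-lip}{2} and that each $\delem_A$ is affine. To recover \refToDefItem{affine-lip}{3}, since reindexing preserves $\funit$ and is monotone, affineness of $\delem_X$ gives $\DReIdx{\Doc}{\ple{\pi_2,\pi_4}}(\delem_X)\order\funit$, whence by monotonicity of $\fmul$ and \refToDefItem{affine-lip}{2}
\[
\delem_{A\times X}=\DReIdx{\Doc}{\ple{\pi_1,\pi_3}}(\delem_A)\fmul\DReIdx{\Doc}{\ple{\pi_2,\pi_4}}(\delem_X)\order\DReIdx{\Doc}{\ple{\pi_1,\pi_3}}(\delem_A)\fmul\funit=\DReIdx{\Doc}{\ple{\pi_1,\pi_3}}(\delem_A),
\]
and symmetrically, using affineness of $\delem_A$, $\delem_{A\times X}\order\DReIdx{\Doc}{\ple{\pi_2,\pi_4}}(\delem_X)$. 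To recover \refToDefItem{affine-lip}{4}, note that $\Delta_1\colon1\to1\times1$ is an isomorphism (both $1$ and $1\times1$ being terminal), so $\DReIdx{\Doc}{\Delta_1}$ is an order-isomorphism; combining reflexivity of the distance $\delem_1$ with affineness yields $\funit_1\order\DReIdx{\Doc}{\Delta_1}(\delem_1)\order\DReIdx{\Doc}{\Delta_1}(\funit_{1\times1})=\funit_1$, hence $\DReIdx{\Doc}{\Delta_1}(\delem_1)=\funit_1$ and therefore $\delem_1=\funit_{1\times1}$.

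For $(1)\Rightarrow(2)$, assume all axioms of \refToDef{affine-lip}; since \refToDefItem{affine-lip}{1} and \refToDefItem{affine-lip}{2} are then automatically available, only affineness of $\delem_A$ must be shown. Here I would instantiate \refToDefItem{affine-lip}{2} and the first inequality of \refToDefItem{affine-lip}{3} at $A=1$: using $\delem_1=\funit$ from \refToDefItem{affine-lip}{4} and that reindexing preserves $\funit$, these become $\delem_{1\times X}=\DReIdx{\Doc}{\ple{\pi_2,\pi_4}}(\delem_X)$ and $\delem_{1\times X}\order\funit$ respectively. Since $1$ is terminal, $\ple{\pi_2,\pi_4}\colon1\times X\times1\times X\to X\times X$ is an isomorphism, so $\DReIdx{\Doc}{\ple{\pi_2,\pi_4}}$ is an order-isomorphism; from $\DReIdx{\Doc}{\ple{\pi_2,\pi_4}}(\delem_X)=\delem_{1\times X}\order\funit=\DReIdx{\Doc}{\ple{\pi_2,\pi_4}}(\funit_{X\times X})$ we conclude $\delem_X\order\funit_{X\times X}$, \ie $\delem_X$ is affine.

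The only delicate point — hence the main obstacle, modest as it is — is the bookkeeping of product isomorphisms: one must read the projections $\pi_1,\dots,\pi_4$ appearing in \refToDefItem{affine-lip}{2} and \refToDefItem{affine-lip}{3} over the four-fold product $(A\times X)\times(A\times X)$, and exploit that $1\times X\times1\times X$ and $1\times1$ are (canonically) $X\times X$ and $1$. Once this is set up, every remaining step is a one-line use of functoriality of $\Doc$, monotonicity, and the fact that reindexing preserves $\funit$, so no genuinely hard argument is needed.
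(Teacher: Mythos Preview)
Your proof is correct and follows essentially the same approach as the paper: both directions hinge on instantiating axioms \refToDefItem{affine-lip}{2} and \refToDefItem{affine-lip}{3} with the terminal object and using that the resulting projection is an isomorphism. The paper's proof is much terser (it uses $A\times 1$ where you use $1\times X$, and dismisses $(2)\Rightarrow(1)$ as ``immediate''), but the content is the same.
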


\begin{proof}
1)$\Rightarrow$2) follows from 
$\delem_{A\times 1} = \DReIdx{\Doc}{\ple{\pi_1,\pi_3}}(\delem_A)$ and
$\delem_{A\times 1}\order \delem_1 = \funit$. 2)$\Rightarrow$1) is immediate.
\end{proof}

The quantitative nature of this notion of equality can be noticed also from its relationship with reindexing, described by the following proposition. 

\begin{prop}\label{prop:graded-lip0}
Let \ple{\Doc,\rmod{},\delem} be an \RR-\lip doctrine. 
For any \fun{f}{A_1\times\ldots\times A_n}{B} in the base category of \Doc, there are $\res_1,...,\res_n\in\RSet$ such that 
\[\rmod{\res_{1}}\DReIdx{\Doc}{\ple{\pi_1,\pi_{n+1}}}(\delem_{A_1})\fmul...\fmul\rmod{\res_n}\DReIdx{\Doc}{\ple{\pi_n,\pi_{2n}}}(\delem_{A_n})\order \DReIdx{\Doc}{f\times f}(\delem_B)\] 
\end{prop}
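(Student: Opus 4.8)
The plan is to derive the $n$-ary inequality from a single instance of the substitutivity axiom \ref{def:affine-lip:1} together with the decomposition of equality predicates on products supplied by axiom \ref{def:affine-lip:2}. Write $A := A_1\times\cdots\times A_n$, so that $f$ is an arrow $A\to B$ and we must bound $\DReIdx{\Doc}{f\times f}(\delem_B)\in\Doc(A\times A)$ from below. The first step is the purely ``spatial'' identity
\[
\delem_A \;=\; \bigfmul_{i=1}^{n}\DReIdx{\Doc}{\ple{\pi_i,\pi_{n+i}}}(\delem_{A_i})
\]
in $\Doc(A\times A)$, proved by induction on $n$: the base case is the identity reindexing $\ple{\pi_1,\pi_2}=\id{A_1\times A_1}$, and the inductive step is axiom \ref{def:affine-lip:2}, modulo the (routine) bookkeeping of how the projections of the nested products compose.

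The core step is the following lemma: for every arrow \fun{g}{C}{B} there is $\res\in\RSet$ with $\rmod{\res}\delem_C\order\DReIdx{\Doc}{g\times g}(\delem_B)$. To prove it, instantiate axiom \ref{def:affine-lip:1} taking the distinguished object and the context both to be $C$ and $\alpha := \DReIdx{\Doc}{g\times g}(\delem_B)\in\Doc(C\times C)$; this produces $\res\in\RSet$ with
\[
\DReIdx{\Doc}{\ple{\pi_1,\pi_2}}(\alpha)\fmul\rmod{\res}\DReIdx{\Doc}{\ple{\pi_2,\pi_3}}(\delem_C)\;\order\;\DReIdx{\Doc}{\ple{\pi_1,\pi_3}}(\alpha)
\]
in $\Doc(C\times C\times C)$. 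Reindexing this inequality along $\ple{\pi_1,\pi_1,\pi_2}\colon C\times C\to C\times C\times C$ is legitimate, since reindexing preserves $\fmul$ and $\order$ and commutes with $\rmod{\res}$; it turns the middle factor into $\rmod{\res}\delem_C$, the right-hand side into $\alpha$, and the left factor into $\DReIdx{\Doc}{(g\times g)\circ\ple{\pi_1,\pi_1}}(\delem_B) = \DReIdx{\Doc}{g\circ\pi_1}\DReIdx{\Doc}{\Delta_B}(\delem_B)$, which is $\order$-above $\funit$ by reflexivity of the $\Doc$-distance $\delem_B$ and the fact that reindexing sends $\funit$ to $\funit$. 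Monotonicity of $\fmul$ then yields $\rmod{\res}\delem_C = \funit\fmul\rmod{\res}\delem_C \order \DReIdx{\Doc}{g\times g}(\delem_B)$, which is the lemma.

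To conclude, apply the lemma with $g := f$ and $C := A$ to obtain a single $\res\in\RSet$, and take $\res_1 = \cdots = \res_n := \res$. By the lax-monoidality inequality $\rmod{\res}\alpha\fmul\rmod{\res}\beta\order\rmod{\res}(\alpha\fmul\beta)$ (axiom \ref{def:graded:2} of \refToDef{graded}, iterated $n-1$ times) we get
\[
\bigfmul_{i=1}^{n}\rmod{\res}\DReIdx{\Doc}{\ple{\pi_i,\pi_{n+i}}}(\delem_{A_i})\;\order\;\rmod{\res}\Bigl(\bigfmul_{i=1}^{n}\DReIdx{\Doc}{\ple{\pi_i,\pi_{n+i}}}(\delem_{A_i})\Bigr)\;=\;\rmod{\res}\delem_A,
\]
the equality being the first step, and $\rmod{\res}\delem_A\order\DReIdx{\Doc}{f\times f}(\delem_B)$ by the lemma. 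The point to be most careful about is the \emph{direction} of lax-monoidality: the argument requires $\bigfmul_i\rmod{\res}\beta_i\order\rmod{\res}\bigfmul_i\beta_i$, which is precisely axiom \ref{def:graded:2} — the reverse inequality is not available — so it is essential that all the $\res_i$ be chosen equal, which the statement permits since it asserts only the existence of such a tuple. A more robust alternative, avoiding any reliance on the direction of lax-monoidality, is to apply the substitutivity axiom $n$ times, rewriting one coordinate of the second copy of $A$ at a time and collecting one modality $\rmod{\res_i}$ per coordinate; this also yields a valid tuple but at the cost of heavier projection bookkeeping. Everything else is routine.
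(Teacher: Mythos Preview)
Your proof is correct but follows a genuinely different route from the paper's. The paper applies the graded substitutivity axiom \ref{def:affine-lip:1} \emph{$n$ times}, once per coordinate: at step $i$ it rewrites the $i$-th argument of $f$ from $x_i$ to $y_i$, obtaining an $\res_i$ for each, and then chains the $n$ inequalities together using reflexivity of $\delem_B$. This is exactly the ``more robust alternative'' you mention at the end. You instead apply axiom \ref{def:affine-lip:1} \emph{once} (in your lemma, which is the paper's \refToCor{graded-lip}), and then use the product decomposition of $\delem_A$ from axiom \ref{def:affine-lip:2} together with lax-monoidality to split the single modality $\rmod{\res}$ across the $n$ factors, taking all $\res_i$ equal.

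Two points of comparison. First, you have reversed the logical order the paper uses: there, \refToCor{graded-lip} is derived \emph{from} this proposition, whereas you prove the corollary directly and recover the proposition from it. Both directions are legitimate. Second, the paper's argument produces potentially distinct $\res_i$'s, which is closer in spirit to how the result is later used (\eg in \refToLem{sound}, where different variables carry different grades), while yours shows the cleaner fact that a common $\res$ always suffices. Your reliance on the direction of lax-monoidality is fine and you flag it correctly; the paper's coordinate-by-coordinate argument avoids it entirely but at the cost of the heavier bookkeeping you anticipate.
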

\begin{proof}
For all $i \in 1..n$, let $\vec{\pi}_i = \ple{\pi_i,\ldots,\pi_n}$. 
By  \refToDefItem{affine-lip}{1}
we know that, for all $i \in 1..n$, there is $\res_i \in \RSet$ such that 
\[
\DReIdx{\Doc}{\ple{\vec\pi_1^{i-1},\vec\pi_i^n,\vec\pi_{n+1}^{n+i-1},\vec\pi_i^n}}(\DReIdx{\Doc}{f\times f}(\delem_B)) \fmul  \rmod{\res_i} (\DReIdx{\Doc}{\ple{\pi_i,\pi_{n+i}}}(\delem_{A_i})) 
\order 
\DReIdx{\Doc}{\ple{\vec\pi_1^i,\vec\pi_{i+1}^n,\vec\pi_{n+1}^{n+i},\vec\pi_{i+1}^n}}(\DReIdx{\Doc}{f\times f}(\delem_B)) 
\]
holds in $A_1\times\ldots\times A_n\times A_1\times\ldots\times A_n$, where $\vec\pi_k^h$, with $1\le k\le h+1\le 2n+1$, denotes the sequence of projections $\pi_k,\pi_{k+1},\ldots,\pi_{h-1},\pi_h$. 
Combining these inequalities, using monotonicity of $\fmul$ and transitivity of $\order$, we get 
\[
\DReIdx{\Doc}{\ple{\pi_1,\ldots,\pi_n,\pi_1,\ldots,\pi_n}}(\DReIdx{\Doc}{f\times f}(\delem_B)) \fmul 
\rmod{\res_1}(\DReIdx{\Doc}{\ple{\pi_1,\pi_{n+1}}}(\delem_{A_1})) \fmul \ldots \fmul \rmod{\res_n}(\DReIdx{\Doc}{\ple{\pi_n,\pi_{2n}}}(\delem_{A_n})) \order 
\DReIdx{\Doc}{f\times f}(\delem_B) 
\]
Note that $\DReIdx{\Doc}{\ple{\pi_1,\ldots,\pi_n, \pi_1,\ldots,, \pi_n}}(\DReIdx{\Doc}{f\times f}(\delem_B)) =  \DReIdx{\Doc}{f\circ\ple{\pi1,\ldots,\pi_n}}(\DReIdx{\Doc}{\Delta_B}(\delem_B))$. 
Then, the thesis follows because $\funit \order \DReIdx{\Doc}{f\circ\ple{\pi_1,\ldots,\pi_n}}(\DReIdx{\Doc}{\Delta_B}(\delem_B))$ holds by reflexivity of $\delem_B$. 
\end{proof}

\begin{cor}\label{cor:graded-lip}
Let \ple{\Doc,\rmod{},\delem} be an \RR-\lip doctrine. 
For any \fun{f}{A}{B} in the base category of \Doc, there exists $\res\in\RSet$ such that 
$\rmod{\res}\delem_A\order \DReIdx{\Doc}{f\times f}(\delem_B)$. 
\end{cor}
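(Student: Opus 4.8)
The plan is to read this off as the special case $n=1$ of \refToProp{graded-lip0}. Concretely, I would instantiate that proposition with the single factor $A_1 = A$ and the given arrow $\fun{f}{A}{B}$, obtaining some $\res\in\RSet$ with
\[\rmod{\res}\DReIdx{\Doc}{\ple{\pi_1,\pi_2}}(\delem_A)\order \DReIdx{\Doc}{f\times f}(\delem_B).\]
The only remaining observation is that the reindexing on the left is along the tuple $\ple{\pi_1,\pi_2}\colon A\times A\to A\times A$ formed from the two product projections in their natural order; this tuple is $\id{A\times A}$, so $\DReIdx{\Doc}{\ple{\pi_1,\pi_2}}$ is the identity on $\Doc(A\times A)$ and $\DReIdx{\Doc}{\ple{\pi_1,\pi_2}}(\delem_A)=\delem_A$. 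Substituting this into the displayed inequality yields exactly $\rmod{\res}\delem_A\order \DReIdx{\Doc}{f\times f}(\delem_B)$, as claimed.

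If a self-contained argument is preferred, I would instead unroll the $n=1$ case of the proof of \refToProp{graded-lip0}: apply the substitutivity axiom \refToDefItem{affine-lip}{1} with $X$ the terminal object of the base and $\alpha$ a suitable reindexing of $\DReIdx{\Doc}{f\times f}(\delem_B)$, and then use reflexivity of $\delem_B$ to discard the $\funit$-factor that arises from reindexing $\delem_B$ along a diagonal composite. Either route is immediate; there is no real obstacle, the only point worth a moment's care being the projection bookkeeping, \ie confirming that $\ple{\pi_1,\pi_2}$ on a binary product is the identity rather than a transposition.
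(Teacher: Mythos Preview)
Your proposal is correct and is exactly the paper's intended argument: the corollary is stated immediately after \refToProp{graded-lip0} with no separate proof, so the derivation is precisely the $n=1$ specialisation you spell out, together with the observation that $\ple{\pi_1,\pi_2}=\id{A\times A}$.
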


\refToCor{graded-lip} describes a condition similar to the \lip condition on maps between metric spaces and motivates the name in \refToDef{affine-lip}.

Finally, also 
\RR-\lip doctrines can be organised into a 2-category \EGMD{\RR}, whose objects are \RR-\lip doctrines, 
a 1-arrow from \ple{\Doc,\rmod{},\delem} to \ple{\aDoc,\rmod{}',\delem'} is 1-arrow \oneAr{\ple{F,f}}{\ple{\Doc,\rmod{}}}{\ple{\aDoc,\rmod{}'}}   preserving the distance, that is, such that, for each object $A$ in the base category of \Doc ,  it holds $\delem'_{FA} = f_{A\times A}(\delem_A)$, and 
a 2-arrow from \ple{F,f} to \ple{G,g} is a 2-arrow \twoAr{\theta}{\ple{F,f}}{\ple{G,g}} in \GMD{\RR}.


\subsection{A deductive calculus for quantitative equality} 
\label{sect:grade-syntax} 

So far we have seen that theories $\TT$ over a signature $L$ generate syntactic doctrines $\Prop_\TT$. 
This section  describes a  calculus whose associated syntactic doctrine is \RR-\lip. 

Assume $\RR = \ple{\RSet,\rord,\rplus,\rmul,\rzero,\rone}$ is a resource semiring and $L$ a first order signature. 
A judgment has the form $\lcons[\vec{\srt}]{\Gamma}{\form}$ where $\Gamma=\form_1,\ldots,\form_k$ is a finite multiset and $\form_1,\ldots,\form_k,\form$ are formulas over $L$ in the context $\vec{\srt}=\ple{x_1:\srt_1,\ldots,x_n:\srt_n}$. 
We will often omit the context $\vec{\srt}$ to improve readability. 

We first introduce \RPLL, a calculus extending the $(\Ltensor,\Lone)$-fragment of Linear Logic  by a family of modalities $\Lbang_\res$, for $\res\in\RSet$. 
The front `P' is for primary as its associated syntactic doctrine is a \pl doctrine (endowed with an \RR-graded modality).

Rules of \RPLL are a subset of the calculus presented in \cite{BreuvartP15}, in \refToFig{graded} we report only rules for modalities as structural rules and those for $\Ltensor$ and $\Lone$ can be found in \refToFig{frammento}. 
The first four rules are the graded variant of standard rules for the bang modality of Linear Logic. 
Rules \rn{w} and \rn{c} encode graded structural rules: by weakening we can add formulas marked as not used (labelled by $\rzero$) and contraction  tracks the usage by addition. 
Rule \rn{der} tells that hypotheses with grade $\rone$ can be treated as linear hypotheses and 
rule \rn{pro} introduces $\Lbang_\res$, scaling by $\res$ the grades of the hypotheses. 
Rule \rn{decr} allows to approximate the usage of a formula following (contravariantly) the order of the semiring. 

\begin{figure}[t] 
\begin{mathpar}
\Infer[w]{
  \lcons{\hps}{\form} 
}{ \lcons{\hps,\Lbang_\rzero \aform}{\form} } 
\and 
\Infer[c]{
  \lcons{\hps,\Lbang_\res\aform,\Lbang_\ares\aform}{\form} 
}{ \lcons{\hps,\Lbang_{\res\rplus\ares} \aform}{\form} } 
\and  
\Infer[der]{
  \lcons{\hps,\aform}{\form} 
}{ \lcons{\hps,\Lbang_\rone \aform}{\form} } 
\and 
\Infer[pro]{
  \lcons{\Lbang_{\ares_1}\aform_1,\ldots,\Lbang_{\ares_n}\aform_n}{\form} 
}{ \lcons{\Lbang_{\res\rmul\ares_1}\aform_1,\ldots,\Lbang_{\res\rmul\ares_n}\aform_n}{\Lbang_\res\form} } 
\and 
\Infer[decr]{
  \lcons{\hps}{\Lbang_\ares\form} 
}{ \lcons{\hps}{\Lbang_\res\form} }\ \res\rord\ares 
\end{mathpar} 
\caption{Rules for graded bang modality}\label{fig:graded} 
\end{figure} 

To extend \RPLL by a quantitative equality, we need to add a resource sensitive substitution rule. 
Hence, we need a way to compute for each formula $\form$ and variable $x$ the cost (represented as an element of the semiring) of substituting $x$  in $\form$. 
To this end, we first enrich the notion of signature. 

An \emph{\RR-graded signature} is a first order signature $L$ where symbols have an \emph{\RR-graded arity}, that is, 
an assignment $\gar{\blank}$ as the following: 
\begin{align*} 
\gar{f}= \ple{\res_1,\srt_1},\ldots,\ple{\res_n,\srt_n}\to \asrt && 
\gar{p}= \ple{\res_1,\srt_1},\ldots,\ple{\res_n,\srt_n}
\end{align*} 
where  $f$ and $p$ are a function and a predicate symbol, respectively, and, 
$\srt_i$ is the sort of the $i$-th argument, while  $\res_i\in\RSet$ says how much the cost of substituting a variable in the $i$-th position is amplified. 
We write $\gar{f}_i$ (resp. $\gar{p}_i$) in place of $\res_i$ in the assignments above. 

Denote by $\Var$ and $\Trm$ the sets of variables and terms inductively constructed from symbols in $L$ in the usual way (using the standard arity obtained from the graded one by erasing resources). 
Resources in the graded arity determines a function $\fun{\gr}{\Trm\times\Var}{\RSet}$ as follows: 
\begin{itemize}
\item $\gr(z,x)=\rzero$ if $x\ne z$
\item $\gr(x,x)=\rone$
\item $\gr(f(\trm_1,\ldots,\trm_n),x) = \gar{f}_1\rmul\gr(\trm_1,x) \rplus \ldots \rplus\gar{f}_n\rmul\gr(\trm_n,x)$
\end{itemize}
Intuitively, $\gr(\trm,\var)$ represents the cost of substituting the variable $\var$ inside the term $\trm$.
Note that $\gr(\trm,x)$  depends on the number of occurrences of $x$ in $\trm$. 
For instance, 
if $f_0$, $f_1(x)$,\ldots, $f_n(x,x,\ldots,x)$ are terms where 
$\gar{f}_i = \rone$ for all $i \in 0..n$, then 
$\gr(f_i(x,\ldots,x),x) = \rone\rplus\ldots\rplus\rone$ ($i$ times). 
In particular, for constants, \ie function symbols $f$ with no arguments, we always have $\gr(f,x)= \rzero$. 

Denote by $\Wff$ the set of well-formed formulas constructed from symbols in $L$, the equality symbol $\Leq$ and using $\Ltensor$, $\Lone$ and $\Lbang_\res$ as connectives. 
We extend the function $\gr$ to a function $\fun{\gr}{\Wff\times\Var}{\RSet}$ mapping a formula $\form$ and a variable $x$ to the amount of resources needed to substitute $x$ in $\form$: 
\begin{itemize}
\item $\gr(p(\trm_1,\ldots,\trm_n),x) = \gar{p}_1\rmul\gr(\trm_1,x)\rplus\ldots\rplus \gar{p}_n\rmul\gr(\trm_n,x))$, 
\item $\gr(\trm\Leq_{\asrt}\atrm,x) = \gr(\trm,x) \rplus \gr(\atrm,x)$, 
\item $\gr(\form\Ltensor\aform,x)=\gr(\form,x)\rplus\gr(\aform,x)$, 
\item $\gr(\Lone,x) = \rzero$, 
\item $\gr(\Lbang_\res\form,x) = \res\rmul\gr(\form,x)$. 
\end{itemize}
Similarly to terms, the cost for substituting the variable $x$ in a formula $\form$ depends on the number of occurrences of $x$ in $\form$. 
For instance, we have 
$\gr(p(x)\Ltensor p(y),x) = \gar{p}_1$, while 
$\gr(p(x)\Ltensor p(x),x) = \gar{p}_1 \rplus \gar{p}_1$. 
Again, for a predicate symbol $p$ with no arguments, we have 
$\gr(p,x) =\rzero$. 

The calculus \RPLL extended by rules in 
\refToFig{graded-eq} will be called \ERPLL, where the first `L' is for \lip.  

\begin{figure}[t]
\begin{mathpar}
\Infer[r]{ }{ \lcons{}{\trm\Leq_\asrt\trm} }
\and 
\Infer[s]{
  \lcons{\hps}{\trm\Leq_\asrt\atrm} 
}{ \lcons{\hps}{\atrm\Leq_\asrt\trm} }
\and 
\Infer[t]{
  \lcons{\hps}{\trm\Leq_\asrt\atrm} 
  \quad 
  \lcons{\ahps}{\atrm\Leq_\asrt\btrm} 
}{ \lcons{\hps,\ahps}{\trm\Leq_\asrt\btrm} }
\and 
\Infer[w-eq]{
  \lcons{\hps}{\form} 
}{ \lcons{\hps,\trm\Leq_\asrt\atrm}{\form} } 
\and 
\Infer[subst]{
  \lcons{\hps}{\subst{\form}{\trm}{\var}}
  \quad 
  \lcons{\ahps}{\Lbang_{\gr(\form,x)} \trm\Leq_\asrt\atrm} 
}{ \lcons{\hps,\ahps}{\subst{\form}{\atrm}{\var}} } 
\end{mathpar}
\caption{Rules for graded equality.}\label{fig:graded-eq} 
\end{figure}

We have  rules for reflexivity, symmetry and transitivity of $\Leq_\srt$ and a rule for weakening. 
Note the substitution rule: a substitution is derivable only if \emph{enough equality resources} are available. These are determined by the function $\gr$. 

\begin{rem}\label{rem:lip-syntax} 
If $\gar{f}=\ple{\gar{f}_1,\srt_1}\ldots\ple{\gar{f}_n,\srt_n}\to \asrt$ is the arity of the function symbol $f$, then 
one can prove the entailment 
\[
\lcons{
  \Lbang_{\gar{f}_1} (x_1\Leq_{\srt_1} y_1),\ldots, \Lbang_{\gar{f}_n} (x_n\Leq_{\srt_n} y_n)
}{ f(\vec{x}) \Leq_\asrt f(\vec{y}) }
\] 
where $\vec{x}=x_1,\ldots,x_n$ and $\vec{y}=y_1,\ldots,y_n$. 
Interpreting the equality as a distance means that the application of the function $f$ amplifies the distance between $x_i$ and $y_i$ by a factor $\gar{f}_i$.  
Similarly, if $p$ is a predicate symbol of arity $\gar{p} = \ple{\gar{p}_1,\srt_1}\ldots\ple{\gar{p}_n,\srt_n}$, 
one can derive the entailment 
\[
\lcons{ 
  p(\vec{x}) , \Lbang_{\gar{p}_1} (x_1\Leq_{\srt_1} y_1),\ldots, \Lbang_{\gar{p}_n} (x_n\Leq_{\srt_n} y_n) 
}{ p(\vec{y}) }
\]
meaning that one has to amplify equality between $x_i$ and $y_i$ by $\gar{p}_i$ to derive 
$p(\vec{y})$ from $p(\vec{x})$. 
\end{rem} 
\begin{prop}\label{prop:dottrinasintattica}
Let $\TT$ be a theory in \ERPLL over the \RR-graded signature $L$. 
The syntactic doctrine $\fun{\Prop_\TT}{\ct{Cxt}_L\op}{\Pos}$ 
is an \RR-\lip doctrine with \RR-graded modality given by $\Lbang$ and a family of distances $\delem^\Leq$ inductively defined by
\[\delem^\Leq_{\ple{}} = \Lone\quad\quad\quad\quad\delem^\Leq_{\ple{\vec{\sigma},x_{n+1}:\srt_{n+1}}} = \delem^\Leq_{\vec{\sigma}} \Ltensor (x_{n+1} \Leq_{\srt_{n+1}} x'_{n+1})\]
\end{prop}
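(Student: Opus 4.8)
The plan is to verify that the syntactic doctrine $\Prop_\TT$ satisfies all the requirements of \refToDef{affine-lip}, leveraging the fact that a judgment $\lcons[\vec{\srt}]{\hps}{\form}$ is an entailment in $\Prop_\TT$ precisely when $\bigfmul\hps \order \form$ holds in the fibre $\Prop_\TT(\vec{\srt})$. First I would recall from the analogue of \refToExItem{runningmon1}{22} that $\Prop_\TT$ is a \pl doctrine, with fibres given by Lindenbaum--Tarski algebras for the $(\Ltensor,\Lone)$-fragment, and that the rules in \refToFig{graded} make $\Lbang$ into an \RR-graded modality: axioms \refItem{def:graded}{1}--\refItem{def:graded}{7} correspond one-to-one to derivability of the rules \rn{pro} (giving lax-monoidality and, combined with \rn{w}, \rn{c}, \rn{der}, the structural laws), \rn{w} (weakening), \rn{c} (contraction), \rn{der} (counit), \rn{pro}+\rn{der} (comultiplication), and \rn{decr} (contravariance); naturality under reindexing holds because substitution commutes with all connectives syntactically. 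So the bulk of the work reduces to checking that the family $\delem^\Leq$ defined by the displayed recursion is a \Doc-distance and satisfies axioms \refItem{def:affine-lip}{a}--\refItem{def:affine-lip}{d}.

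**Key steps.** I would proceed as follows. (1) \emph{$\delem^\Leq$ is a distance}: reflexivity, symmetry and transitivity of each $\delem^\Leq_{\vec\srt}$ follow by induction on the length of $\vec\srt$ from the rules \rn{r}, \rn{s}, \rn{t} together with the corresponding laws for $\Ltensor$ and $\Lone$; the base case $\delem^\Leq_{\ple{}}=\Lone$ is trivial since $\Lone$ is the monoid unit. (2) \emph{Axioms \refItem{def:affine-lip}{b}, \refItem{def:affine-lip}{c}, \refItem{def:affine-lip}{d}}: axiom \refItem{def:affine-lip}{d} is immediate from the definition $\delem^\Leq_{\ple{}}=\Lone$; axiom \refItem{def:affine-lip}{b} is exactly the recursive clause, read modulo the bookkeeping that identifies the context $\vec\srt\times\vec\tau$ with a reordered concatenation (here one uses that $\Ltensor$ is commutative and that reindexing along the projections/shuffle commutes with $\Ltensor$ and $\Lbang$); axiom \refItem{def:affine-lip}{c} then follows from \refItem{def:affine-lip}{b} by precomposing with \rn{w-eq}, i.e.\ each $x_i \Leq y_i$ factor can be weakened away, which shows $\delem^\Leq_{A\times X}\order\DReIdx{\Prop_\TT}{\ple{\pi_1,\pi_3}}(\delem^\Leq_A)$ and symmetrically. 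Equivalently, one may invoke \refToProp{cara2} and only check that $\delem^\Leq_A$ is affine, which again is an induction using \rn{w-eq} in the single-variable case $\lcons{x\Leq y}{\Lone}$. (3) \emph{Axiom \refItem{def:affine-lip}{a}, the resource-sensitive substitutivity}: by \refToProp{cara} it suffices to establish the single-context form \refItem{def:affine-lip}{1p}, namely that for every formula $\form$ over a context $\vec\srt$ there is $\res\in\RSet$ with $\DReIdx{\Prop_\TT}{\pi_1}(\form)\fmul\rmod{\res}\delem^\Leq_{\vec\srt}\order\DReIdx{\Prop_\TT}{\pi_2}(\form)$. The natural choice is $\res$ built componentwise from $\gr(\form,x_i)$ for the variables $x_i$ of $\vec\srt$; one derives the required entailment by an induction on $\form$ that mirrors the recursive definition of $\gr$, at each step peeling off the relevant factor $\Lbang_{\gr(\form,x_i)}(x_i\Leq x_i')$ of $\rmod{\res}\delem^\Leq_{\vec\srt}$ using \rn{c}, \rn{w}, \rn{decr} to redistribute and rescale grades, and applying \rn{subst} (for the atomic and quantified-variable cases) together with \rn{r}, \rn{s}, \rn{t} for the equality atoms and the congruence entailments of \refToRem{lip-syntax} for function and predicate symbols. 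Compatibility of $\gr$ with $\Ltensor$, $\Lbang_\res$ and $\Leq$ — exactly the clauses listed before \refToFig{graded-eq} — is what makes the grades line up at each inductive step.

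**Main obstacle.** The routine parts are the monoid/modality bookkeeping and the distance axioms; the delicate point is step (3), and specifically managing the grades through the induction when a variable occurs multiple times. The rule \rn{subst} provides $\Lbang_{\gr(\form,x)}(\trm\Leq\atrm)$ as the \emph{total} cost, but in the doctrine one has the distance factored as a single $\Lbang_\res\delem^\Leq$, so one must repeatedly use contraction \rn{c} (to split $\Lbang_{\res\rplus\ares}$ into $\Lbang_\res\fmul\Lbang_\ares$, matching the additive clauses $\gr(\form\Ltensor\aform,x)=\gr(\form,x)\rplus\gr(\aform,x)$ and $\gr(\trm\Leq\atrm,x)=\gr(\trm,x)\rplus\gr(\atrm,x)$) and \rn{pro}/\rn{decr} (to handle the multiplicative clause $\gr(\Lbang_\res\form,x)=\res\rmul\gr(\form,x)$ and the semiring-order approximations). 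Getting the induction hypothesis stated strongly enough — quantifying over all contexts and carrying the explicit grade vector — so that these manipulations go through uniformly is the real content of the proof; once the statement is set up correctly, each case is a short derivation in \ERPLL. I would therefore devote the write-up mainly to the inductive proof of \refItem{def:affine-lip}{1p}, citing \refToProp{cara}, \refToProp{cara2} and \refToRem{lip-syntax} to dispatch the remaining axioms quickly.
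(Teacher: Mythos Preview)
Your proposal is correct and matches the paper for the \pl and graded-modality verifications and for axioms (b)--(d) of \refToDef{affine-lip} (the paper likewise invokes \refToProp{cara2} via affineness from \rn{w-eq}). The genuine difference is in axiom (a): you propose a structural induction on the formula $\form$, whereas the paper does not induct on formula structure at all. Since rule \rn{subst} already applies to arbitrary formulas with the full grade $\gr(\form,x)$ packaged, the paper simply iterates \rn{subst} over the variables of the context: starting from $\alpha(\vec x)\vdash\alpha(\vec x)$, one application of \rn{subst} per variable $x_i$ yields $\Gamma,\Lbang_{\res_i}(x_i\Leq y_i)\vdash\alpha(y_1,\ldots,y_i,x_{i+1},\ldots,x_n)$ with $\res_i=\gr(\alpha(y_1,\ldots,y_{i-1},x_i,\ldots,x_n),x_i)$; then \rn{w-eq} plus monotonicity of $\Lbang$ replaces each $\Lbang_{\res_i}(x_i\Leq y_i)$ by $\Lbang_{\res_i}\delem^\Leq_{\vec\srt}$, and \rn{c} contracts them into a single $\Lbang_{\res_1\rplus\cdots\rplus\res_n}\delem^\Leq_{\vec\srt}$.

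Your structural induction is essentially the argument of \refToLem{sound}(2), which the paper carries out later in the semantic setting for soundness. It works here too, but to run it on atomic predicates $p(\trm_1,\ldots,\trm_m)$ you also need an auxiliary induction on terms (the congruences of \refToRem{lip-syntax} handle only the outermost symbol), so it is strictly longer. The trade-off: the paper's variable-by-variable argument is shorter and uses \rn{subst} as a black box; your approach makes explicit how the grade decomposes along the syntax of $\form$ and anticipates the machinery of \refToLem{sound}. A minor slip: there are no ``quantified-variable cases'' in \ERPLL as defined in \refToSect{grade-syntax}; quantifiers only appear in \refToSect{altredoc}.
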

\begin{proof}
To check that $\Prop_\TT$ is \RR-\lip we first need to prove that it is \RR-graded where 
$\Ltensor$ and $\Lone$ give the \pl structure and $\Lbang$ the \RR-graded modality. 
The former is known (\cf\refToExItem{runningmon1}{22}), while to prove the latter 
we need to show that rules in \refToFig{graded} suffice to derive that $\Lbang$ satisfies axioms listed in \refToDef{graded}. 
Naturality of $\Lbang_\res$ holdes as it commutes with subsitution, while its monotonicity easily follows by rules \rn{der} and \rn{pro}. 
Moreover, weakening, contraction, counit, comultiplication and controvariance follows immediately from rules $\rn{w}$, $\rn{c}$, $\rn{der}$, $\rn{pro}$ and $\rn{decr}$. 
To show lax-monoidality-2, 
we first derive $\Lbang_\rone\alpha,\Lbang_\rone\beta\vdash\alpha\Ltensor\beta$ from $\alpha,\beta\vdash\alpha\Ltensor\beta$, using rule \rn{der} twice. 
Then,  by rule $\rn{pro}$, we get $\Lbang_{\res}\alpha, \Lbang_{\res}\beta\vdash  \Lbang_{\res} (\alpha \Ltensor \beta)$.  
Finally, then introducing $\Ltensor$ on the left, we get $\Lbang_\res\alpha\Ltensor\Lbang_\res\beta\vdash \Lbang_\res(\alpha\Ltensor\beta)$, as needed. 
The proof of lax-monoidality-1 is similar: 
from $\vdash\Lone$ we derive $\vdash \Lbang_\res\Lone$ using \rn{pro} and we conclude introducing $\Lone$ on the left. 

To prove  that $\Prop_\TT$ is \RR-\lip, we have to show that rules in \refToFig{graded-eq} imply conditions in \refToDef{affine-lip}. 
Rule \rn{w-eq} ensures that $\delem^\Leq_{\vec\srt}$ is affine for every context $\vec\srt$, hence, by \refToProp{cara2}, it suffices to check only axioms (a) and (b) of \refToDef{affine-lip}. 
Since products in $\ct{Cxt}_L$ are given by context concatenation, axiom (b) is straightforward, by definition of $\delem^\Leq$. 
Finally, by \refToProp{cara}, it remains to show that, for every context $\vec{\srt} = \ple{x_1:\srt_1,\ldots,x_n:\srt_n}$ and every formula $\alpha$ in $\Prop_\TT(\vec\srt)$, there is $\res$ in $\RSet$ such that 
the entailment 
$$\alpha(x_1,\ldots,x_n)\Ltensor \Lbang_\res\delem^\Leq_{\vec\srt} \vdash \alpha(y_1,\ldots,y_n)$$
in the context $\ple{x_1:\srt_1,\ldots,x_n:\srt_n,y_1:\srt_1,\ldots,y_n:\srt_n}$
is provable in $\TT$. 
To this end, 
first of all, note that, 
using rule \rn{subst}, for every $\Gamma$ and $i \in 1..n$, 
from $\Gamma\vdash\alpha(y_1,\ldots,y_{i-1},x_i,\ldots,x_n)$ and $\Lbang_{\res_i} x_i \Leq_{\srt_i}y_n\vdash\Lbang_{\res_i} x_i\Leq_{\srt_i} y_i$, 
we can derive 
$
\Gamma,\Lbang_{\res_i} x_i \Leq_{\srt_i} y_i  \vdash \alpha(y_1,\ldots,y_i,x_{i+1},\ldots,x_n)
$, 
where $\res_i = \gr(\alpha(y_1,\ldots,y_{i-1},x_i,\ldots,x_n),x_i)$. 
Therefore, starting from $\alpha(x_1,\ldots,x_n)\vdash \alpha(x_1,\ldots,x_n)$ and iteratively using this fact, we can prove
$$
\alpha(x_1,\ldots,x_n), \Lbang_{\res_1}x_1\Leq_{\srt_1}y_1,\ldots, \Lbang_{\res_n} x_n\Leq_{\srt_n} y_n \vdash \alpha(y_1,\ldots,y_n) 
$$
Now, observe that, 
using rule \rn{w-eq}, we have $\delem^\Leq_{\vec\srt} \vdash x_i \Leq_{\srt_i} y_i$, for all $i \in 1..n$, and, by monotonicity of $\Lbang_\ares$, we get $\Lbang_\ares \delem^\Leq_{\vec\srt} \vdash \Lbang_\ares x_i\Leq_{\srt_i}y_i$, for all $\ares \in \RSet$ and $i \in 1..n$. 
Hence, by cut, we derive 
$$
\alpha(x_1,\ldots,x_n), \Lbang_{\res_1}\delem^\Leq_{\vec\srt},\ldots,\Lbang_{\res_n}\delem^\Leq_{\vec\srt} \vdash \alpha(y_1,\ldots,y_n) 
$$
Let us set 
$\res = \res_1\rplus\ldots\rplus\res_n$. 
Then, by iteratively applying rule \rn{c} we get 
$\alpha(x_1,\ldots,x_n),\Lbang_\res \delem^\Leq_{\vec\srt} \vdash\alpha(y_1,\ldots,y_n)$, 
and so the thesis follows by introducing $\Ltensor$on the left. 
\end{proof}

\subsection{Semantics in \RR-\lip doctrines} 
\label{sect:grade-sem}

The interpretation of a theory in a doctrine is standard
\cite{PittsCL,JacobsB:catltt}: 
it maps contexts and terms to objects and arrows of the base and formulas to elements of the fibres,  respecting the entailments. The interpretation of  \ERPLL in an \RR-\lip doctrine has to be defined with the additional requirement that  $\gr$ agrees with the structure of the doctrine. 

Let $L$ be an \RR-graded signature. 
An \dfn{\RR-graded interpretation} of $L$ into a \RR-\lip doctrine \ple{\Doc,\rmod{},\delem} assigns 
to every sort $\srt$ an object $\sem{\srt}$ in the base of \Doc,
to every function symbol $f$ of arity $\ple{\gar{f}_1,\srt_1}\ldots\ple{\gar{f}_n,\srt_n}\to\asrt$ an arrow \fun{\sem{f}}{\sem{\srt_1}\times\ldots\times\sem{\srt_n}}{\sem{\asrt}} in the base of \Doc such that
\begin{equation}\label{agreement}
\prod_{i = 1}^n \rmod{\gar{f}_i}\DReIdx{\Doc}{\ple{\pi_i,\pi_{n+i}}}(\delem_{\sem{\srt_i}}) \order \DReIdx{\Doc}{\sem{f}\times\sem{f}}(\delem_{\sem{\asrt}})
\end{equation}
and to every predicate symbol $p$ of arity $\ple{\gar{p}_1,\srt_1}\ldots\ple{\gar{p}_n,\srt_n}$ an element $\sem{p}$ in $\Doc(\sem{\srt_1}\times\ldots\times\sem{\srt_n})$ such that 
\begin{equation}\label{agreement2}
\DReIdx{\Doc}{\pr_1}(\sem{p}) \fmul \prod_{i=1}^n  \rmod{\gar{p}_i}\DReIdx{\Doc}{\ple{\pi_i,\pi_{n+i}}}(\delem_{\sem{\srt_i}}) \order \DReIdx{\Doc}{\pr_2}(\sem{p})
\end{equation}
where $\pr_1 = \ple{\pi_1,\ldots,\pi_n}$ and $\pr_2 = \ple{\pi_{n+1},\ldots,\pi_{2n}}$. 

Intuitively, we can pick as interpretation of a function symbol $f$ an arrow $\sem{f}$ for which the cost of substituting the $i$-th argument  is $\gar{f}_i$, and similarly for predicate symbols. 
In other words, the graded arity of $f$ determines valid \lip constansts for its interpretation $\sem{f}$. 

\begin{exa}\label{ex:provvisorio}
We can interpret an $\RRPos$-graded signature $L$ in the \lip doctrine $\fun{\LIPDoc}{\LIP\op}{\Pos}$  presented in \refToEx{liplip}. 
The interpretation maps  
sorts to metric spaces, 
a function symbol $f$ of arity $\ple{\gar{f}_1,\srt_1}\ldots\ple{\gar{f}_n,\srt_n}\to \asrt$ to a \lip function \fun{\sem{f}}{\sem{\srt_1}\times\ldots\sem{\srt_n}}{\sem{\asrt}} whose \lip constant is $\gar{f}_1+\ldots +\gar{f}_n$ and 
a predicate symbol $p$ of arity $\ple{\gar{p}_1,\srt_1}\ldots\ple{\gar{p}_n,\srt_n}$ to a \lip function \fun{\sem{p}}{\sem{\srt_1}\times \ldots \times \sem{\srt_n}}{[0,\infty]} whose \lip constant is $\gar{p}_1 + \ldots + \gar{p}_n$. 
\end{exa}

An \RR-graded interpretation of $L$ induces an interpretation of contexts and terms in the base of \Doc and 
an interpretation of formulas in the fibres of \Doc. 
More precisely, a context $\vec{\srt} = \ple{x_1:\srt_1,\ldots,x_n:\srt_n}$  is interpreted by 
the product $\sem{\vec{\srt}}_\Trm = \sem{\srt_1}\times\ldots\times \sem{\srt_n}$, 
a term $\trm$ of sort $\asrt$ in $\vec{\srt}$ by an arrow \fun{\sem{\trm}_\Trm}{\sem{\vec{\srt}}_\Trm}{\sem{\asrt}} and 
a formula $\form$ in $\vec{\srt}$ by an element $\sem{\form}_\Wff \in \Doc(\sem{\vec{\srt}}_\Trm)$, as follows: 
\begin{align*}
\sem{x_i}_\Trm &= \pi_i 
    & 
\sem{\Lone}_\Wff &= \funit 
    \\ 
\sem{f(\trm_1,\ldots,\trm_n)}_\Trm &= \sem{f} \circ \ple{\sem{\trm_1}_\Trm,\ldots,\sem{\trm_n}_\Trm} 
    & 
\sem{\form\Ltensor\aform}_\Wff &= \sem{\form}_\Wff \fmul \sem{\aform}_\Wff 
     \\
\sem{p(\trm_1,\ldots,\trm_n)}_\Wff &= \DReIdx{\Doc}{\ple{\sem{\trm_1}_\Trm,\ldots,\sem{\trm_n}_\Trm}}(\sem{p}) 
    & 
 \sem{\Lbang_\res \form}_\Wff &= \rmod{\res} \sem{\form}_\Wff
    \\
\sem{\trm\Leq_\asrt \atrm}_\Wff &= \DReIdx{\Doc}{\ple{\sem{\trm}_\Trm,\sem{\atrm}_\Trm}}(\delem_{\sem{\asrt}})
\end{align*}

An \RR-graded interpretation of a theory $\TT$ of \ERPLL over $L$, is an \RR-graded interpretation $\sem{\blank}$ of $L$ such that $\sem{\blank}_\Wff$ respects entailments in $\TT$. 

We will often omit subscripts $\Trm$ and $\Wff$ from $\sem{\blank}$ when these are clear. If $\Gamma$ is a list of formulas $\form_1,...,\form_n$ then $\sem{\Gamma}$ abbreviates $\sem{\form_1}\fmul...\fmul\sem{\form_n}$.

The following theorem states soundness for the semantics of \ERPLL. 

\begin{thm}\label{thm:sound}
Let $\sem{\blank}$ be an \RR-graded interpretation of a theory $\TT$ in \ERPLL over $L$ into an \RR-\lip \ple{\Doc,\rmod{},\delem}. Then, 
\oneAr{\ple{\sem{\blank}_\Trm,\sem{\blank}_\Wff}}{\ple{\Prop_\TT,\Lbang,\delem^\Leq}}{\ple{\Doc,\rmod{},\delem}} is a 1-arrow in \EGMD{\RR}. 
\end{thm}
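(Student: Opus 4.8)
Here is how I would prove Theorem~\ref{thm:sound}.

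The plan is to unwind the definition of a $1$-arrow in $\EGMD{\RR}$ and verify each ingredient in turn. Such a $1$-arrow is a $1$-arrow $\ple{F,f}$ in $\MD$ whose natural transformation $f$ is a monoid homomorphism, which moreover commutes with the graded modality (so that $\ple{F,f}$ lies in $\GMD{\RR}$) and preserves the distance. Concretely, I must show: (i) $\sem{\blank}_\Trm$ is a finite-product-preserving functor $\ct{Cxt}_L\to\CC$; (ii) $\sem{\blank}_\Wff$ is a well-defined natural transformation $\Prop_\TT\Rightarrow\Doc\circ\sem{\blank}_\Trm\op$; (iii) each component of $\sem{\blank}_\Wff$ is a monoid homomorphism commuting with every $\rmod{\res}$; and (iv) $\delem_{\sem{\vec\srt}_\Trm}=\sem{\delem^\Leq_{\vec\srt}}_\Wff$ for every context $\vec\srt$. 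Items (i) and (iii) are essentially bookkeeping: (iii) is immediate from the clauses $\sem{\Lone}_\Wff=\funit$, $\sem{\form\Ltensor\aform}_\Wff=\sem{\form}_\Wff\fmul\sem{\aform}_\Wff$ and $\sem{\Lbang_\res\form}_\Wff=\rmod{\res}\sem{\form}_\Wff$, together with the fact that the fibres of $\Doc$ are commutative monoids (\refToDef{monoid-doc}); functoriality in (i) is the term-level substitution lemma $\sem{\subst{\trm}{\atrm}{\var}}_\Trm=\sem{\trm}_\Trm\circ\ple{\ldots,\sem{\atrm}_\Trm,\ldots}$, proved by induction on $\trm$, and finite products are preserved strictly since products in $\ct{Cxt}_L$ are context concatenation.

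For (ii), naturality of $\sem{\blank}_\Wff$ is the formula-level substitution lemma $\sem{\subst{\form}{\trm}{\var}}_\Wff=\DReIdx{\Doc}{\sigma}(\sem{\form}_\Wff)$, with $\sigma$ the reindexing along the substituting tuple of terms; I would prove it by structural induction on $\form$, using naturality of $\fmul,\funit,\rmod{\res}$, the clause $\sem{\trm\Leq_\asrt\atrm}_\Wff=\DReIdx{\Doc}{\ple{\sem{\trm}_\Trm,\sem{\atrm}_\Trm}}(\delem_{\sem\asrt})$, and the term substitution lemma. The remaining content of (ii) is that each component is monotone, i.e.\ soundness of the rules of $\TT$, proved by induction on derivations: the $(\Ltensor,\Lone)$ rules and cut use only the monoid structure of the fibres; the five rules of \refToFig{graded} are validated one-to-one by the modality axioms of \refToDef{graded} --- \rn{w} by \refToDefItem{graded}{3}, \rn{c} by \refToDefItem{graded}{4}, \rn{der} by \refToDefItem{graded}{5}, \rn{pro} by \refToDefItem{graded}{6} together with lax-monoidality \refToDefItem{graded}{1} and \refToDefItem{graded}{2}, and \rn{decr} by \refToDefItem{graded}{7}; and among the rules of \refToFig{graded-eq}, \rn{r}, \rn{s}, \rn{t} hold because each $\delem_{\sem\asrt}$ is a $\Doc$-distance and \rn{w-eq} because it is affine (\refToDefItem{affine-lip}{3} and \refToDefItem{affine-lip}{4}, \cf\refToProp{cara2}).

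The main obstacle is soundness of the substitution rule \rn{subst}: this is exactly where the grading function $\gr$ must agree with the semiring structure of the modality. The key lemma is that, for every formula $\form$, every variable $\var$ of sort $\asrt$, and all terms $\trm,\atrm$ of sort $\asrt$,
\[
\sem{\subst{\form}{\trm}{\var}}_\Wff\fmul\rmod{\gr(\form,\var)}\DReIdx{\Doc}{\ple{\sem{\trm}_\Trm,\sem{\atrm}_\Trm}}(\delem_{\sem\asrt})\order\sem{\subst{\form}{\atrm}{\var}}_\Wff ,
\]
proved together with its term-level companion
\[
\rmod{\gr(\btrm,\var)}\DReIdx{\Doc}{\ple{\sem{\trm}_\Trm,\sem{\atrm}_\Trm}}(\delem_{\sem\asrt})\order\DReIdx{\Doc}{\ple{\sem{\subst{\btrm}{\trm}{\var}}_\Trm,\sem{\subst{\btrm}{\atrm}{\var}}_\Trm}}(\delem_{\sem\bsrt})
\]
for terms $\btrm$ of sort $\bsrt$, both by structural induction. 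In the term lemma the function-symbol case is precisely the agreement condition \eqref{agreement}, suitably reindexed, after splitting the grade by contraction \refToDefItem{graded}{4} and rewriting each factor $\rmod{\gar{f}_i\rmul\gr(\btrm_i,\var)}$ as $\rmod{\gar{f}_i}\rmod{\gr(\btrm_i,\var)}$ via comultiplication \refToDefItem{graded}{6}; the case where $\btrm$ is $\var$ uses the counit \refToDefItem{graded}{5}, and the case of a different variable uses weakening \refToDefItem{graded}{3} and reflexivity of $\delem$. In the formula lemma the $\Lone$ case is immediate from weakening \refToDefItem{graded}{3}, the predicate case combines the agreement condition \eqref{agreement2} with the term lemma (again splitting the grade by \refToDefItem{graded}{4} and \refToDefItem{graded}{6}), the $\Leq$ case applies the term lemma to both arguments and then uses transitivity and symmetry of $\delem$, the $\Ltensor$ case splits $\gr(\form\Ltensor\aform,\var)=\gr(\form,\var)\rplus\gr(\aform,\var)$ by contraction \refToDefItem{graded}{4} and recombines using the commutative-monoid structure of the fibre, and the $\Lbang_\res$ case uses $\gr(\Lbang_\res\form,\var)=\res\rmul\gr(\form,\var)$ together with comultiplication \refToDefItem{graded}{6} and lax-monoidality \refToDefItem{graded}{2}. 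Granting the lemma, \rn{subst} is immediate: its two premises give $\sem{\hps}_\Wff\order\sem{\subst{\form}{\trm}{\var}}_\Wff$ and $\sem{\ahps}_\Wff\order\rmod{\gr(\form,\var)}\DReIdx{\Doc}{\ple{\sem{\trm}_\Trm,\sem{\atrm}_\Trm}}(\delem_{\sem\asrt})$, whence monotonicity of $\fmul$ and the lemma yield $\sem{\hps}_\Wff\fmul\sem{\ahps}_\Wff\order\sem{\subst{\form}{\atrm}{\var}}_\Wff$.

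Finally, for (iv) I would induct on the length of the context. The empty case $\delem^\Leq_{\ple{}}=\Lone$ interprets as $\funit=\delem_1$ by \refToDefItem{affine-lip}{4}. For the inductive step, $\delem^\Leq_{\ple{\vec\sigma,x_{n+1}:\srt_{n+1}}}=\delem^\Leq_{\vec\sigma}\Ltensor(x_{n+1}\Leq_{\srt_{n+1}}x'_{n+1})$ interprets, by (iii) and the equality clause, as the reindexing of $\sem{\delem^\Leq_{\vec\sigma}}_\Wff$ along the appropriate projections, combined via $\fmul$ with $\DReIdx{\Doc}{\ple{\pi_{n+1},\pi_{2n+2}}}(\delem_{\sem{\srt_{n+1}}})$; substituting the inductive hypothesis $\sem{\delem^\Leq_{\vec\sigma}}_\Wff=\delem_{\sem{\vec\sigma}_\Trm}$ and invoking \refToDefItem{affine-lip}{2} identifies this with $\delem_{\sem{\vec\sigma}_\Trm\times\sem{\srt_{n+1}}}$, which is the interpretation of the extended context, as required. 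This completes the verification.
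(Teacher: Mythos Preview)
Your proof is correct and proceeds along the same lines as the paper's, the substantive difference being how you package the key lemma. The paper's \refToLem{sound} is stated over the doubled context $\sem{\vec\srt}\times\sem{\vec\srt}$, controlling all context variables simultaneously via the product $\prod_{i=1}^n\rmod{\gr(\form,x_i)}\DReIdx{\Doc}{\ple{\pi_i,\pi_{n+i}}}(\delem_{\sem{\srt_i}})$; soundness of \rn{subst} is then obtained by specialising to a single variable, using reflexivity of $\delem$ to collapse the other coordinates and reindexing along $\ple{\id{\sem{\vec\srt}},\sem\trm,\sem\atrm}$. You instead formulate the lemma for a single variable from the start, but parameterised over arbitrary substituting terms $\trm,\atrm$, so that \rn{subst} follows immediately without the specialisation step. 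The inductive arguments are identical in content: grades are split by contraction \refToDefItem{graded}{4} and comultiplication \refToDefItem{graded}{6}, and the base cases are exactly the agreement conditions \eqref{agreement} and \eqref{agreement2}. Your version is more directly tailored to the rule; the paper's has the advantage that its statement matches the abstract substitutivity axiom of \refToDef{affine-lip} more closely and yields \refToProp{graded-lip0} as a by-product. You also make explicit the verification of distance preservation (your item (iv)), which the paper leaves implicit.
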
 
 
The proof is carried out by the usual induction on rules \cite{PittsCL,JacobsB:catltt}, relying on the following lemma: 

\begin{lem}\label{lem:sound}
For every  \RR-graded interpretation $\sem{\blank}$ of $L$ into \ple{\Doc,\rmod{},\delem} 
we have: 
\begin{enumerate}
\item for any term $\trm:\asrt$ in the context $\vec{\srt}=\ple{x_1:\srt_1,\ldots,x_n:\srt_n}$
\[\prod_{i = 1}^n \rmod{\gr(\trm, x_i)}\DReIdx{\Doc}{\ple{\pi_i,\pi_{n+i}}}(\delem_{\sem{\srt_i}}) \order \DReIdx{\Doc}{\sem{\trm}\times\sem{\trm}}(\delem_{\sem{\asrt}})\]
\item for any formula $\form$ in the context $\vec{\srt}=\ple{x_1:\srt_1,\ldots,x_n:\srt_n}$ 
\[\DReIdx{\Doc}{\pr_1}(\sem{\form}) \fmul \prod_{i = 1}^n \rmod{\gr(\form,x_i)}\DReIdx{\Doc}{\ple{\pi_i,\pi_{n+i}}}(\delem_{\sem{\srt_i}}) \order 
\DReIdx{\Doc}{\pr_2}(\sem{\form})\] 
where $\pr_1 = \ple{\pi_1,\ldots,\pi_n}$ and $\pr_2 = \ple{\pi_{n+1},\ldots,\pi_{2n}}$. 
\end{enumerate}
\end{lem}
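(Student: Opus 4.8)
The plan is to prove the two items simultaneously by structural induction: item~(1) by induction on the term $\trm$, item~(2) by induction on the formula $\form$, with item~(1) supplying the atomic cases of item~(2). Throughout one works inside the fibre $\Doc(\sem{\vec{\srt}}_\Trm\times\sem{\vec{\srt}}_\Trm)$ and abbreviates $\delta_i = \DReIdx{\Doc}{\ple{\pi_i,\pi_{n+i}}}(\delem_{\sem{\srt_i}})$, so that every left-hand side is a finite $\fmul$-product of modalised copies of the $\delta_i$. The three workhorses will be: \emph{contraction} (to split $\rmod{\res\rplus\ares}$ into $\rmod{\res}\fmul\rmod{\ares}$), \emph{comultiplication} (to turn $\rmod{\res\rmul\ares}$ into the nested $\rmod{\res}\rmod{\ares}$), and \emph{lax-monoidality-2} (to pull several $\rmod{\res}(-)$ factors under a single $\rmod{\res}$), all from \refToDef{graded}; together with naturality of each $\rmod{\res}$ and of $\funit$, the fact that reindexing is a monoid homomorphism on each fibre, and the base-category identities $\ple{\pi_l,\pi_{k+l}}\circ(t\times t) = \sem{\trm_l}\times\sem{\trm_l}$ (because $\ple{\pr_1,\pr_2} = \id{}$) and $\pr_j\circ(t\times t) = t\circ\pr_j$, where $t = \ple{\sem{\trm_1},\ldots,\sem{\trm_k}}$.

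For item~(1), the variable case $\trm = x_j$ is settled by noting $\sem{x_j}\times\sem{x_j}$ equals the reindexing pairing $\ple{\pi_j,\pi_{n+j}}$, so the claim reduces to $\rmod{\rone}\delta_j \fmul \bigfmul_{i\ne j}\rmod{\rzero}\delta_i \order \delta_j$, which follows from the counit and weakening axioms and the unit laws. For $\trm = f(\trm_1,\ldots,\trm_k)$ one reindexes the agreement condition \eqref{agreement} for $f$ along $t\times t$; using the base identities above this rewrites to $\bigfmul_{l=1}^k \rmod{\gar{f}_l}\DReIdx{\Doc}{\sem{\trm_l}\times\sem{\trm_l}}(\delem_{\sem{\bsrt_l}}) \order \DReIdx{\Doc}{\sem{\trm}\times\sem{\trm}}(\delem_{\sem{\asrt}})$. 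For each $l$ one feeds in the inductive hypothesis for $\trm_l$, applies the monotone $\rmod{\gar{f}_l}$, and uses comultiplication and lax-monoidality-2 to get $\bigfmul_i \rmod{\gar{f}_l\rmul\gr(\trm_l,x_i)}(\delta_i) \order \rmod{\gar{f}_l}\DReIdx{\Doc}{\sem{\trm_l}\times\sem{\trm_l}}(\delem_{\sem{\bsrt_l}})$. Since $\gr(\trm,x_i) = \bigoplus_l \gar{f}_l\rmul\gr(\trm_l,x_i)$, iterated contraction plus commutativity of $\fmul$ rearrange $\bigfmul_i \rmod{\gr(\trm,x_i)}(\delta_i)$ into $\bigfmul_l\bigfmul_i \rmod{\gar{f}_l\rmul\gr(\trm_l,x_i)}(\delta_i)$, and monotonicity of $\fmul$ closes the case.

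For item~(2), the predicate case $\form = p(\trm_1,\ldots,\trm_k)$ is entirely parallel: start from \eqref{agreement2}, reindex along $t\times t$, feed in item~(1) for each $\trm_l$, carrying the extra factor $\DReIdx{\Doc}{\pr_1}(\sem{p})$ unchanged until \eqref{agreement2} lands it on $\DReIdx{\Doc}{\pr_2}(\sem{p})$. The case $\form = \aform\Ltensor\bform$ uses that $\DReIdx{\Doc}{\pr_j}$ is a monoid homomorphism together with contraction to split the grade, then the two inductive hypotheses; the case $\form = \Lbang_\res\aform$ uses naturality of $\rmod{\res}$ to commute it past the reindexings, then comultiplication and lax-monoidality-2 to place $\rmod{\res}$ outside the whole product, then the inductive hypothesis under $\rmod{\res}$ and its monotonicity; the case $\form = \Lone$ is immediate from naturality of $\funit$ and weakening. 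The only case demanding a genuinely new ingredient is $\form = \trm\Leq_\asrt\atrm$: here $\DReIdx{\Doc}{\pr_j}(\sem{\trm\Leq\atrm})$ is the $\delem_{\sem{\asrt}}$-distance between $\sem{\trm}$ and $\sem{\atrm}$ on the $j$-th copy, and one splits $\rmod{\gr(\trm,x_i)\rplus\gr(\atrm,x_i)}(\delta_i)$ by contraction, applies item~(1) to $\trm$ and to $\atrm$, flips the $\trm$-part using symmetry of $\delem_{\sem{\asrt}}$, and then uses transitivity of $\delem_{\sem{\asrt}}$ twice (routing through $\sem{\trm}$ at the source copy) to assemble $\DReIdx{\Doc}{\pr_1}(\sem{\trm\Leq\atrm})\fmul\bigfmul_i\rmod{\gr(\trm\Leq\atrm,x_i)}(\delta_i)\order\DReIdx{\Doc}{\pr_2}(\sem{\trm\Leq\atrm})$.

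The main obstacle is not a single deep step but the disciplined bookkeeping: matching each semiring operation to the modality axiom that decomposes it, reorganising double $\fmul$-products indexed by the pairs $(l,i)$, and reconciling the various reindexings $\DReIdx{\Doc}{\sem{\trm_l}\times\sem{\trm_l}}$, $\DReIdx{\Doc}{t\times t}$, $\DReIdx{\Doc}{\pr_1}$, $\DReIdx{\Doc}{\pr_2}$ via the base-category identities. The equality case, where symmetry and transitivity of $\delem$ enter, is the step most worth writing out in full; the remaining cases are routine diagram chases with the axioms of \refToDef{graded}.
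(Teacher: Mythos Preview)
Your proposal is correct and follows essentially the same route as the paper: structural induction on terms and formulas, with the variable/constant cases handled by counit and weakening, the compound cases for $f(\vec\trm)$ and $p(\vec\trm)$ by reindexing the agreement conditions \eqref{agreement} and \eqref{agreement2} along $t\times t$ and feeding in the inductive hypotheses via comultiplication, lax-monoidality and contraction, and the connective cases by the matching modality axioms. Your treatment of the equality case is in fact slightly more explicit than the paper's, which writes only ``by transitivity'' for the final step whereas you correctly note that one must first apply symmetry to the $\trm$-factor before composing via transitivity twice.
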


\begin{proof}
1) is proved by induction on $\trm:\srt$. If $\trm:\srt$ is a variable in $\vec{\srt}$, \ie $\trm:\srt$ is $x_p:\srt_p$ for $1\le p\le n$, then $\sem{\trm}$ is the projection $\pi_p:\sem{\srt_1}\times....\times\sem{\srt_n}\to\sem{\srt_p}$ and $\gr(\trm,x_j)=\rone$ if $p=j$ and is $\rzero$ if $p\not=j$, so $\rmod{\gr(\trm,x_j)}\delem_{\sem{\srt_j}}\le\funit$ (for $j\not=p$ by weakening of $\rmod{}$) and $\rmod{\gr(\trm,x_p)}\delem_{\sem{\srt_p}}\le \delem_{\sem{\srt_p}}$ (by counit of $\rmod{}$), so 
$$\prod_{i=1}^n \DReIdx{\Doc}{\ple{\pi_i,\pi_{n+i}}}(\rmod{\gr(\trm, x_i)}\delem_{\sem{\srt_i}}) \order \DReIdx{\Doc}{\ple{\pi_p,\pi_{n+p}}}(\rmod{\gr(\trm,x_p)}\delem_{\sem{\srt_i}})$$
whence the claim as $\ple{\pi_p,\pi_{n+p}}=\pi_p\times \pi_p=\sem{\trm}\times\sem{\trm}$.

Suppose $\trm:\asrt$ is of the form $f(\trm_1,...,\trm_m)$ where $f$ is a function symbol of arity $\asrt_1,....,\asrt_m\to \asrt$, and  each $\trm_k$ is a term of type $\asrt_k$ in the context $\srt_1,...,\srt_n$, satisfying the inductive hypothesis, \ie

$$\prod_{i=1}^n \DReIdx{\Doc}{\ple{\pi_i,\pi_{n+i}}}(\rmod{\gr(\trm_k, x_i)}\delem_{\sem{\srt_i}}) 
\order \DReIdx{\Doc}{\sem{\trm_k}\times\sem{\trm_k}}(\delem_{\sem{\asrt_k}})$$

Multiply both sides by $\rmod{\gar{f}_k}$ and use comultiplication of $\rmod{}$ to get

$$
\prod_{i=1}^n
\DReIdx{\Doc}{\ple{\pi_i,\pi_{n+i}}}(\rmod{\gar{f}_k\rmul\gr(\trm_k, x_i)}\delem_{\sem{\srt_i}}) 
\order \DReIdx{\Doc}{\sem{\trm_k}\times\sem{\trm_k}}(\rmod{\gar{f}_k}\delem_{\sem{\asrt_k}})$$
From  $\sem{\trm_k}\times \sem{\trm_k}=\ple{\pi_k,\pi_{m+k}} \circ (\ple{\sem{\trm_1},...,\sem{\trm_m}}\times \ple{\sem{\trm_1},...,\sem{\trm_m}})$ one has 
$$
\DReIdx{\Doc}{\sem{\trm_k}\times\sem{\trm_k}}(\rmod{\gar{f}_k}\delem_{\sem{\asrt_k}})=\DReIdx{\Doc}{\ple{\sem{\trm_1},...,\sem{\trm_m}}\times \ple{\sem{\trm_1},...,\sem{\trm_m}}}
\DReIdx{\Doc}{\ple{\pi_k,\pi_{m+k}}}
(\rmod{\gar{f}_k}\delem_{\sem{\asrt_k}})$$

The way in which the interpretation of an \RR-graded signature is defined ensures the satisfaction of inequality (\ref{agreement}) for $f$.
$$
\prod_{k=1}^m \DReIdx{\Doc}{\ple{\pi_k,\pi_{m+k}}}(\rmod{\gar{f}_k}\delem_{\sem{\asrt_k}}) 
\order \DReIdx{\Doc}{\sem{f}\times\sem{f}}(\delem_{\sem{\asrt}})
$$
Since $\sem{\trm}=\sem{f(\trm_1,...,\trm_m)} = \sem{f}\circ\ple{\sem{\trm_1},...,\sem{\trm_m}}$, evaluating both side of the inequality along $\ple{\sem{\trm_1},...,\sem{\trm_m}}\times \ple{\sem{\trm_1},...,\sem{\trm_m}}$
one has 
$$
\prod_{k=1}^m \DReIdx{\Doc}{\sem{\trm_k}\times \sem{\trm_k}}(\rmod{\gar{f}_k}\delem_{\sem{\asrt_k}}) 
\order \DReIdx{\Doc}{\sem{\trm}\times\sem{\trm}}(\delem_{\sem{\asrt}})
$$
whence 
$$
\prod_{k=1}^m
\prod_{i=1}^n \DReIdx{\Doc}{\ple{\pi_i,\pi_{n+i}}}(\rmod{\gar{f}_k\rmul\gr(\trm_k, x_i)}\delem_{\sem{\srt_i}}) 
\order  \DReIdx{\Doc}{\sem{\trm}\times\sem{\trm}}(\delem_{\sem{\asrt}})$$

finally recall that $\gr(\trm,x_i) = \gar{f}_1\rmul\gr(\trm_1,x_i) \rplus ... \rplus\gar{f}_m\rmul\gr(\trm_m,x_i))$. Contraction of $\rmod{}$ completes the proof.

2) 
If $\form$ is $p(\trm_1,...,\trm_m)$ where $p$ predicate symbol of $L$ of arity $\asrt_1,...,\asrt_m$ and $\trm_i$ is a term of $L$ of arity $\srt_1,...,\srt_n\to \asrt_i$, then $\gr(p,x_i)=\gar{p}_1\rmul\gr(\trm_1,x_i)\rplus...\rplus\gar{p}_n\rmul\gr(\trm_n,x_i)$. Condition displayed in (\ref{agreement2}) together with contraction of $\rmod{}$ gives
$$\DReIdx{\Doc}{\pr_1}(\sem{p})\fmul\prod_{i=1}^n  \rmod{\gr(\form,x_i)}\DReIdx{\Doc}{\ple{\pi_i,\pi_{n+i}}}(\delem_{\sem{\srt_i}}) \order 
\DReIdx{\Doc}{\pr_2}(\sem{p})$$ 
where $\pr_1 = \ple{\pi_1,\ldots,\pi_m}$ and $\pr_2 = \ple{\pi_{m+1},\ldots,\pi_{2m}}$.

The reindexing of both sides along $\ple{\sem{\trm_1},...,\sem{\trm_m}}$ yields that the right hand side, that becomes $\DReIdx{\Doc}{\ple{\pi_{n+1},\ldots,\pi_{2n}}}(\sem{\form})$, is greater or equal to the left hand side, that becomes  

$$\DReIdx{\Doc}{\ple{\pi_1,\ldots,\pi_n}}(\sem{\form})\fmul \prod_{k=1}^m \DReIdx{\Doc}{\ple{\pi_k,\pi_{n+k}}}(\rmod{\gr(p,x_k)} \DReIdx{\Doc}{\sem{\trm_k}\times\sem{\trm_k}}\delem_{\sem{\asrt_k}}) 
$$
By point 1)  of this proposition, the displayed formula is greater than or equal to 
$$\DReIdx{\Doc}{\ple{\pi_1,\ldots,\pi_n}}(\sem{\form}) \fmul\prod_{k=1}^m \prod_{i=1}^n 
\DReIdx{\Doc}{\ple{\pi_1,\pi_{n+1}}}
(\rmod{\gr(p,x_k)\rmul\gr(\trm_k, x_i)}\delem_{\sem{\srt_i}})
$$
recall that $\gr(\form,x_i)=\gar{p}_1\rmul\gr(\trm_1,x_i)\rplus...\rplus\gar{p}_m\rmul\gr(\trm_m,x_i)$, so contraction of $\rmod{}$ leads to the claim.

If $\form$ is $\trm\Leq_\asrt\atrm$ where $\trm\atrm$ are terms of arity $\srt_1,...,\srt_n\to \asrt$, then $\sem{\form}=\DReIdx{\Doc}{\ple{\sem{\trm},\sem{\atrm}}}(\delem_{\sem{\asrt}}) $ and $\gr(\form,x_i)=\gr(\trm,x_i)\rplus \gr(\atrm,x_i)$, so by of $\rmod{}$ and conditions on $\trm$ and $\atrm$ it holds that
\begin{align*} 
&\DReIdx{\Doc}{\ple{\pi_1,\ldots,\pi_n}}
\DReIdx{\Doc}{\ple{\sem{\trm},\sem{\atrm}}}(\delem_{\sem{\asrt}}) \fmul
\prod_{i=1}^n \DReIdx{\Doc}{\ple{\pi_i,\pi_{n+i}}}(\rmod{\gr(\form,x_i)} \delem_{\sem{\srt_i}}) \\
\order\ & 
\DReIdx{\Doc}{\ple{\pi_1,\ldots,\pi_n}}
\DReIdx{\Doc}{\ple{\sem{\trm},\sem{\atrm}}}(\delem_{\sem{\asrt}}) 
\fmul
\DReIdx{\Doc}{\sem{\trm}\times\sem{\trm}}(\delem_{\sem{\asrt}}) 
\fmul
\DReIdx{\Doc}{\sem{\atrm}\times\sem{\atrm}}(\delem_{\sem{\asrt}})  \\ 
\order\ & 
\DReIdx{\Doc}{\ple{\pi_{n+1},\ldots,\pi_{2n}}}
\DReIdx{\Doc}{\ple{\sem{\trm},\sem{\atrm}}}(\delem_{\sem{\asrt}})
\end{align*} 
by transitivity of $\delem_{\sem{\asrt}}$.

If $\form$ is $\aform\Ltensor\bform$ where both $\aform$ and $\bform$ are well formed formulas in the context $\vec{\srt}$ satisfying the inductive hypothesis, then $\sem{\form}=\sem{\aform}\fmul\sem{\bform}$ and $\gr(\form,x_i)=\gr(\aform,x_i)+\gr(\bform,x_i)$. By contraction of $\rmod{}$, it follows 
\begin{align*} 
&\DReIdx{\Doc}{\ple{\pi_1,\ldots,\pi_n}}(\sem{\aform}\fmul\sem{\bform}) \fmul\prod_{i=1}^n \DReIdx{\Doc}{\ple{\pi_i,\pi_{n+i}}}(\rmod{\gr(\aform,x_i)\rplus\gr(\bform,x_i)} \delem_{\sem{\srt_i}})\\ 
\order\ & 
\DReIdx{\Doc}{\ple{\pi_1,\ldots,\pi_n}}(\sem{\aform}\fmul\sem{\bform})\fmul \prod_{i=1}^n \DReIdx{\Doc}{\ple{\pi_i,\pi_{n+i}}}(\rmod{\gr(\aform,x_i)} \delem_{\sem{\srt_i}}\fmul \rmod{\gr(\bform,x_i)}) \delem_{\sem{\srt_i}})\\ 
=\ & 
\DReIdx{\Doc}{\ple{\pi_1,\ldots,\pi_n}}(\sem{\aform})\fmul \prod_{i=1}^n \DReIdx{\Doc}{\ple{\pi_i,\pi_{n+i}}}(\rmod{\gr(\aform,x_i)} \delem_{\sem{\srt_i}})] \fmul \DReIdx{\Doc}{\ple{\pi_1,\ldots,\pi_n}}(\sem{\bform})\fmul \prod_{i=1}^n \DReIdx{\Doc}{\ple{\pi_i,\pi_{n+i}}}(\rmod{\gr(\bform,x_i)} \delem_{\sem{\srt_i}})
\end{align*} 
By hypothesis on $\aform$ the firs formulas in the square brackets is less than or equal to $\DReIdx{\Doc}{\ple{\pi_{n+1},\ldots,\pi_{2n}}}(\sem{\aform})$ and the same for $\bform$, whence the claim.

If $\form$ is $\textbf{1}$ then $\gr(\form,x_i)=0$. Then $\sem{\textbf{1}}=\funit$ and the claim is trivial by weakening fo $\rmod{}$.

If $\form$ is $\Lbang_\res\aform$ for some well formed formula $\aform$ in the context $\vec{\sigma}$, then $\sem{\form}=\rmod{\res}\sem{\aform}$ and $\gr(\form,x_i)=\res\rmul\gr(\aform,x_i)$. By the hypothesis on $\aform$ it is 
$$\DReIdx{\Doc}{\ple{\pi_1,\ldots,\pi_n}}(\sem{\aform})\fmul \prod_{i=1}^n \DReIdx{\Doc}{\ple{\pi_i,\pi_{n+i}}}(\rmod{\gr(\aform,x_i)} \delem_{\sem{\srt_i}}) \order \DReIdx{\Doc}{\ple{\pi_{n+1},\ldots,\pi_{2n}}}(\sem{\aform})$$
It suffices to multiply both sides by $\rmod{\res}$ and use comultiplication of $\rmod{}$ to prove the claim.
\end{proof}

We can now sketch the proof of \refToThm{sound}.

\begin{proof}[Proof of \refToThm{sound}]  
The only non-trivial part of the theorem is to prove that every component of 
$\sem{\blank}_\Wff$
 is monotone. 
As we said before, this is done by induction on rules of \ERPLL. 
We check only cases for \rn{pro} and \rn{subst}, leaving the other to the reader. 
\begin{description}
\item [\rn{pro}] 
Suppose $\sem{\Lbang_{\ares_1} \aform_1}\fmul...\fmul \sem{\Lbang_{\ares_n}\aform_n}
 \le \sem{\form}$. 
By definition of $\sem{\blank}_\Wff$, the left hand side of the inequality is equal to 
$\rmod{\ares_1}\sem{\aform_1}\fmul...\fmul \rmod{\ares_n}\sem{\aform_n}$. 
Then, by monotonicity of $\rmod{\res}$, we get 
$\rmod{r}(\rmod{\ares_1}\sem{\aform_1}\fmul...\fmul \rmod{\ares_n}\sem{\aform_n})
 \le \rmod{r}\sem{\form}$. 
By lax-monoidality-2 and comultiplication we get 
$\rmod{\res\rmul\ares_1}\sem{\aform_1}\fmul\ldots\fmul\rmod{\res\rmul\ares_n}\sem{\aform_n}\le \rmod\res\sem{\form}$. 
Finally, again by definition of $\sem{\blank}_\Wff$, we get the thesis. 
\item [\rn{subst}] 
Consider a context $\vec\srt$, a sort $\asrt$, terms $\trm$, $\atrm$ of sort $\asrt$ in the context $\vec\srt$ and a formula $\form$ in the context $\vec\srt, x:\asrt$. 
Suppose that $\sem\Gamma\le\sem{\form[\trm/x]}$ and $\sem\Delta\le \sem{\Lbang_{\gr(\form,x)} \trm\Leq_\asrt\atrm}$. 
By definition we have 
$\sem{\form[\trm/x]} = \DReIdx{\Doc}{\ple{\id{\sem{\vec\srt}},\sem\trm}}(\sem\form)$ and 
$\sem{\Lbang_{\gr(\form,x)} \trm\Leq_\asrt\atrm} = \DReIdx{\Doc}{\ple{\sem\trm,\sem\atrm}}(\rmod{\gr(\form,x)} \delem_{\sem\asrt})$. 
Combining these inequalities, we get 
$$ 
\sem{\Gamma,\Delta} = \sem\Gamma\fmul\sem\Delta 
  \le \DReIdx{\Doc}{\ple{\id{\sem{\vec\srt}},\sem\trm}}(\sem\form) \fmul \DReIdx{\Doc}{\ple{\sem\trm,\sem\atrm}}(\delem_{\sem\asrt})
$$ 
Using Item (2) of \refToLem{sound} and reflexivity of equality predicates, the following inequality holds in the poset $\Doc(\sem{\vec\srt}\times\sem\asrt\times\sem\asrt)$: 
$ 
\DReIdx{\Doc}{\ple{\pi_1,\pi_2}}(\sem\form) \fmul \rmod{\gr(\form,x)} \DReIdx{\Doc}{\ple{\pi_2,\pi_3}}(\delem_{\sem\asrt}) \le \DReIdx{\Doc}{\ple{\pi_1,\pi_3}}(\sem\form) 
$, and, 
reindexing along \fun{\ple{\id{\sem{\vec\srt}},\sem\trm,\sem\atrm}}{\sem{\vec\srt}}{\sem{\vec\srt}\times\sem\asrt\times\sem\asrt}, we obtain 
$$
\DReIdx{\Doc}{\ple{\id{\sem{\vec\srt}},\sem\trm}}(\sem\form) \fmul \rmod{\gr(\form,x)} \DReIdx{\Doc}{\ple{\sem\trm,\sem\atrm}}(\delem_{\sem\asrt}) \le \DReIdx{\Doc}{\ple{\id{\sem{\vec\srt}},\sem\atrm}}(\sem\form) 
$$
Putting together these inequalities and by definition of $\sem\blank_\Wff$ we get the thesis.
\qedhere
\end{description}
\end{proof}

\begin{rem}
Recall from \refToEx{provvisorio} 
that an $\RRPos$-graded signature can be interpreted into the \lip doctrine $\fun{\LIPDoc}{\LIP\op}{\Pos}$. 
The equality predicate  $\Leq_\asrt$ is interpreted as the metric $\delem_{\sem{\asrt}}$ of the space $\sem{\asrt}$. 
This implies that replicability of equality is not derivable in \ERPLL, as it would imply $\delem_{\sem{\asrt}}(x,y)\geq \delem_{\sem{\asrt}}(x,y)+\delem_{\sem{\asrt}}(x,y)$ which needs not to be true for a generic metric space $\sem{\asrt}$.  
\end{rem}

The following proposition states completeness of the semantics of \ERPLL in the sense of \cite{PittsCL}. 

\begin{prop}\label{prop:complete}
Let $\TT$ be a theory in \ERPLL over a signature $L$ and $\form,\aform$ formulas over $L$. 
If, for all \RR-\lip doctrines \ple{\Doc,\rmod{},\delem} and for all interpretations $\sem{\blank}$ of $\TT$ in \ple{\Doc,\rmod{},\delem}, we have $\sem{\form}_\Wff \order \sem{\aform}_\Wff$, then $\lcons{\form}{\aform}$ is provable in $\TT$. 
\end{prop}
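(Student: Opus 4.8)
The plan is to prove completeness by the standard \emph{term-model} argument: we exhibit a single \RR-\lip doctrine carrying an interpretation of $\TT$ in which every formula is sent, up to provable equivalence, to its own equivalence class, so that a semantic entailment in that model is by construction a syntactic entailment of $\TT$. The doctrine is the syntactic one $\ple{\Prop_\TT,\Lbang,\delem^\Leq}$, which is an \RR-\lip doctrine as shown above.

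First I would define the \emph{tautological} \RR-graded interpretation $\sem{\blank}^0$ of $L$ into $\ple{\Prop_\TT,\Lbang,\delem^\Leq}$: a sort $\srt$ is sent to the context $\ple{x:\srt}$; a function symbol $f$ of arity $\ple{\gar{f}_1,\srt_1}\ldots\ple{\gar{f}_n,\srt_n}\to\asrt$ to the arrow of $\ct{Cxt}_L$ named by the term $f(x_1,\ldots,x_n)$; a predicate symbol $p$ of arity $\ple{\gar{p}_1,\srt_1}\ldots\ple{\gar{p}_n,\srt_n}$ to the formula $p(x_1,\ldots,x_n)$ in $\Prop_\TT(\ple{x_1:\srt_1,\ldots,x_n:\srt_n})$. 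To see that this is a legal \RR-graded interpretation one must check the agreement conditions (\ref{agreement}) and (\ref{agreement2}). Unwinding the definitions of the syntactic doctrine --- reindexing along a tuple of terms in $\ct{Cxt}_L$ is substitution, the diagonal of $\ple{x:\srt}$ is the tuple $\ple{x,x}$, products are context concatenation, and $\delem^\Leq_{\ple{x:\srt}}$ is provably equivalent to $x\Leq_\srt x'$ --- these two conditions become exactly the two derivable congruence entailments recorded in \refToRem{lip-syntax}. This is the only genuinely non-routine point, since those entailments themselves depend on the inductive definition of $\gr$ and on repeated applications of rules \rn{pro}, \rn{c} and \rn{subst}.

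Next I would prove, by induction on terms and formulas, that $\sem{\trm}^0_\Trm$ is the arrow of $\ct{Cxt}_L$ named by $\trm$ and that $\sem{\form}^0_\Wff = [\form]$ in $\Prop_\TT(\vec\srt)$. For terms this is immediate from the definition of composition in $\ct{Cxt}_L$; for an atomic formula $p(\trm_1,\ldots,\trm_n)$ it follows because reindexing $[p(x_1,\ldots,x_n)]$ along the term tuple $\ple{\trm_1,\ldots,\trm_n}$ is $[p(\trm_1,\ldots,\trm_n)]$; for $\trm\Leq_\asrt\atrm$ it follows from $\delem^\Leq_{\ple{x:\asrt}}$ being provably equivalent to $x\Leq_\asrt x'$; and the cases $\Ltensor$, $\Lone$, $\Lbang_\res$ are immediate from the defining clauses of $\sem{\blank}_\Wff$ together with the fact that $\Ltensor$, $\Lone$, $\Lbang$ realise $\fmul$, $\funit$, $\rmod{}$ in $\Prop_\TT$. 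Since $[\form]\order[\aform]$ in $\Prop_\TT(\vec\srt)$ holds precisely when $\lcons{\form}{\aform}$ is provable in $\TT$, it follows in particular that $\sem{\blank}^0_\Wff$ respects the entailments of $\TT$, so $\sem{\blank}^0$ is an \RR-graded interpretation of the theory $\TT$. Finally, to conclude: applying the hypothesis to this particular interpretation gives $\sem{\form}^0_\Wff\order\sem{\aform}^0_\Wff$, i.e. $[\form]\order[\aform]$ in $\Prop_\TT(\vec\srt)$, which means exactly that $\lcons{\form}{\aform}$ is provable in $\TT$. The main obstacle, as anticipated, is the verification of the agreement conditions for $\sem{\blank}^0$ via \refToRem{lip-syntax}; everything else is bookkeeping.
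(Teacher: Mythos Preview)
Your proposal is correct and takes exactly the same approach as the paper: the paper's proof consists of a single sentence stating that one takes the trivial interpretation in the syntactic doctrine $\Prop_\TT$, which is \RR-\lip. Your write-up simply spells out in detail what that trivial interpretation is and why it satisfies the agreement conditions, which the paper leaves implicit.
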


The proof of this result is straightforward, as we can take the trivial interpretation in the syntactic doctrine $\Prop_\TT$ as in \refToProp{dottrinasintattica}, which  is \RR-\lip 

\begin{rem}
One could prove a stronger completeness result, showing an \emph{internal language theorem}. 
This consists in proving the equivalence between the 2-category of \RR-\lip doctrines and a suitable 2-category of \ERPLL theories. 
As it is well-known, the key lemma of this kind of results shows that 
from any doctrine one can extract a theory, where, loosely speaking,  sorts and function symbols are objects and arrows in the base, while predicate symbols and axioms are elements and entailments in the fibres. 
One could do this for \RR-\lip doctrines,  where the difficult part is to determine grades in the \RR-graded arities.
As noticed in \refToRem{lip-syntax}, graded arities provide a choice of \lip constants for the corresponding symbols, 
but from \refToProp{graded-lip0} we only know that in a \RR-\lip doctrine such \lip constants exist.
Hence, in order to obtain graded arities, we need the axiom of choice. 
A rigorous proof is left for future work.
\end{rem}


\section{Examples from quantitative equational theories and beyond}
\label{sect:qet}

Mardare et al. \cite{MardarePP16,MardarePP17} introduced the notion of \emph{quantitative equational theory} (QET) as a formal tool to describe and reason about quantitative algebras, that is, algebras whose carrier is a metric space and whose operations are non-expansive maps. 
In this section,  
we show how the calculus \ERPLL and its semantics can be used to reason about quantitative algebras, comparing it with QETs and obtaining at the same time a range of examples of \ERPLL theories. 
In the literature \cite{MardarePP16,MardarePP17,MardarePP21,Adamek22,BacciMPP18}, QETs are studied in great detail using the language of monads and proving many completeness results. 
Here, we just focus on examples, illustrating how \ERPLL works, 
leaving for future work a formal and systematic comparison between \ERPLL and QETs. 

We first  recall the key features of a QET. 
Syntactically, terms are built out from a single-sorted signature $\Omega$ of possibly infinitary function symbols, 
we write $f:I \in \Omega$, where $I$ is a set,  when $f$ is a function symbol in $\Omega$ whose arity is $I$.
In order to deal with distances, 
the key idea is to explicitly handle quantities, by working with labelled equations  $\trm \Leq_\ee \atrm$, where $\ee$ is a non-negative rational number. 
Such an equation states that the distance between $\trm$ and $\atrm$ is bounded by $\ee$. 
A sequent has the shape $\Gamma \vdash \trm \Leq_\ee \atrm$, where $\Gamma$ is a possibly infinite set of equations, and the calculus is defined by the following rules, where $f:I\in \Omega$ and $\sigma$ is a substitution. 
\begin{align*}  
\qrn{Refl}\ &\vdash \trm \Leq_0\trm 
   & 
\qrn{NExp}\ &\{\trm_i \Leq_{\ee} \atrm_i \mid i\in I\} \vdash f((\trm_i)_{i\in I}) \Leq_{\ee} f((\atrm_i)_{i\in I}) \\
\qrn{Sym}\ &\trm \Leq_\ee \atrm \vdash \atrm \Leq_\ee \trm 
   & 
\qrn{Subst}\ &\Gamma \vdash \trm\Leq_\ee\atrm\ \text{implies}\ \sigma(\Gamma) \vdash \sigma(\trm)\Leq_\ee\sigma(\atrm)\\
\qrn{Triang}\ &\{\trm \Leq_\ee \atrm, \atrm\Leq_{\ee'}\btrm\} \vdash \trm \Leq_{\ee+\ee'} \btrm  
   &
\qrn{Assum}\ &\text{if}\ \trm\Leq_\ee\atrm\in \Gamma,\ \text{then}\ \Gamma \vdash \trm\Leq_\ee\atrm\\ 
\qrn{Arch}\ &\{\trm \Leq_{\ee'} \atrm \mid \ee'>\ee\} \vdash \trm \Leq_{\ee} \atrm  
   & 
\qrn{Cut}\ &\text{if}\ \Gamma \vdash \psi\ \text{for all $\psi \in \Theta$ and}\ \Theta \vdash \trm \Leq_\ee \atrm,\\
         &&&\text{then}\ \Gamma\vdash \trm\Leq_\ee\atrm 
\end{align*} 
A first crucial difference between a QET and \ERPLL is that 
the underling logic of the former is not linear, as contraction and weakening are derivable: 
the hypotheses $\Gamma$ form a set, the rule \qrn{Cut} is in additive form and the rule \qrn{Assum} disregards some hypotheses. 
As a consequence, each formula in a QET has to be interpreted in a standard non-quantitative way, \eg as a subset, while 
in \ERPLL every formula admits a direct quantitative interpretation \eg, as $[0,\infty]$-valued function. 
Therefore, to reason quantitatively, 
in a QET one has to deal with an infinite family of predicates (\ie labelled equations) representing all possible approximations of a metric, 
which requires an explicit handling of quantities.
On the other hand, in \ERPLL 
one has a single equality predicate which is directly interpreted as a $[0,\infty]$-valued metric. 
Working  with explicit quantities makes it necessary to add specific rules to manage them (such as \qrn{Arch}) and to allow sequents with infinitely many hypotheses. 
Note also that infinitary sequents are needed to deal with infinitary function symbols, which are allowed in QETs, while \ERPLL only deals with finitary ones to keep sequents finitary as well. 

Another important  difference is that in QETs function symbols are forced to be non-expansive by the rule \qrn{NExp}, while in \ERPLL they can be arbitrary \lip functions. 
Since non-expansive maps are \lip with constant equal to $1$, \ERPLL can treat them, but, as we will see, it can go beyond this limitation, dealing with quantitative algebras where operations are arbitrary \lip functions. 

Remaining rules are instead similar: 
rules \qrn{Refl}, \qrn{Sym} and \qrn{Triang} above formalise the axioms of a distance and correspond to rules \rn{r}, \rn{s} and \rn{t} of \ERPLL, while 
rules \qrn{Subst} and \qrn{Cut} correspond to monotonicity of substitution and the cut rule of \ERPLL.

We now focus on examples. 
Examples in  \cite{MardarePP16} can be rephrased as \ERPLL theories where \RR is the semiring $\RRPos$ of non-negative real numbers, 
in such  a way that their intended semantics induces a sound interpretation of the corresponding \ERPLL theory in the doctrine of metric spaces and \lip maps (\cf \refToEx{liplip}). 
We will shorten notation for functions symbols in a single-sorted graded signature: 
if $f$ is a function symbol with $\gar{f} = \ple{\res_1,\star}\ldots\ple{\res_n,\star}\to\star$, we write $f : \ple{\res_1,\ldots,\res_n}$. 

We start by considering the theory \textbf{QS$_0$} of quantitative semilattices with zero. 
To write \textbf{QS$_0$} as a QET, one takes a signature whose function symbols are $\qsplus:2$ and  $\qszero:0$, 
while, to write \textbf{QS$_0$} as an \ERPLL  theory, one takes a $\RRPos$-graded signature with one sort, no predicate symbols and whose function symbols are $\qsplus:\ple{1,1}$ and $\qszero:\ple{}$, meaning that $\qsplus$ is non-expansive in both arguments. 
In \refToFig{qs-th}, we list on the left the axioms of \textbf{QS$_0$}  given in \cite{MardarePP16} and on the right the corresponding ones in \ERPLL.
\begin{figure}
\begin{math} 
\begin{array}{lll}
&\text{QET}&\text{\ERPLL}\\
\qrn{S0}&\vdash x\qsplus\qszero\Leq_0 x&\vdash x\qsplus\qszero\Leq x\\
\qrn{S1}&\vdash x\qsplus x\Leq_0 x& \vdash x\qsplus x\Leq x\\
\qrn{S2}&\vdash x\qsplus y\Leq_0 y\qsplus x& \vdash x\qsplus y\Leq y\qsplus x\\
\qrn{S3}&\vdash (x\qsplus y)\qsplus z\Leq_0 x\qsplus (y\qsplus z)& \vdash (x\qsplus y)\qsplus z\Leq x\qsplus (y\qsplus z)\\
\qrn{S4}& \{x\Leq_{\ee} x', y\Leq_{\ee'} y'\}\vdash x\qsplus y\Leq_{\ee\lor\ee'} x'\qsplus y'& 
\end{array} 
\end{math}
\caption{Axioms for quantitative semigroups as a QET and as an \ERPLL theory.}
\label{fig:qs-th}
\end{figure} 
The axiom \qrn{S0} says that the distance between $x\qsplus\qszero$ and $x$ is less than $0$. 
This information in \ERPLL is given forcing $\Lone \vdash x\qsplus\qszero\Leq x$, thanks to the fact that $\Lone$ is interpreted as the function constantly equal to $0$. 
Axioms \qrn{S1}, \qrn{S2} and \qrn{S3} are translated similarly. 
Axiom \qrn{S4} in the QET refines the non-expansiveness of $\qsplus$  and it is needed to properly handle labels of equations. 
Since in \ERPLL there are no labels, this axiom is not needed, as non-expansiveness of $\qsplus$ follows from rules of \ERPLL and the graded arity in the considered signature. 
Note that the axioms in \ERPLL are the usual ones of the theory of semilattices: $\qsplus$ is associative, commutative and idempotent and $\qszero$ is the neutral element. 

A semantics for \textbf{QS$_0$} as QET is given in \cite{MardarePP16} as follows. 
Let $\ple{M,d}$ be a metric space and consider the metric space \ple{C_d,H_d} where $C_d$ the set of compact subsets of $M$ with respect to $d$ and 
$H_d(A,B)=\max\{\sup_{a\in A}d(a,B),\sup_{b\in B}d(b,A)\}$, with $d(m,X) =\inf_{x\in X}d(m,x)$ for $m \in M$ and $X\subseteq M$, is the Hausdorff distance. 
Then, $\qszero$ is interpreted as the empty set and $\qsplus$ as the function of type $\ple{C_d,H_d}\times\ple{C_d,H_d}\rightarrow\ple{C_d,H_d}$ mapping two compact sets to their union. 
This assignment gives rise to a semantics also of \textbf{QS$_0$} seen as an \ERPLL theory 
in the  $\RRPos$-\lip doctrine \fun{\LIPDoc}{\LIP\op}{\Pos} (\cf \refToEx{liplip}).
We interpret a context with $n$ variables  as the product of $n$ copies of $\ple{C_d,H_d}$ and 
a term with $n$ free variables as a \lip function from $\ple{C_d,H_d}^n$ to \ple{C_d,H_d}. 

Other examples discussed in \cite{MardarePP16} deal with metrics over probability distributions. 
The considered theories are defined over a signature  whose 
function symbols are $+_e:2$ for all $e \in [0,1]$, which can be thought of as probabilistic choice operators with probability $e$, 
that is, intuitively, $x+_e y$ is $x$ with probability $e$ and $y$ with probability $1-e$. 
The QET \textbf{IBA$_p$} of $p$-Interpolative Barycentric Algebras, with $p\ge 1$, 
can be rephrased in \ERPLL using a single-sorted $\RRPos$-graded signature with no predicate symbols and whose function symbols are $+_e : \ple{e^{1/p}, (1-e)^{1/p}}$ for all $e \in [0,1]$, 
meaning that $+_e$ scales the distance of the first argument by a factor $e^{1/p}$ and that of the second argument by  a factor $(1-e)^{1/p}$. 
We report the axioms in \refToFig{iba}. 
\begin{figure}
\begin{math} 
\begin{array}{lll}
&\text{QET} &\text{\ERPLL} \\ 
\qrn{B1}&\vdash x +_1 y \Leq_0 x  &\vdash x+_1 y\Leq x\\
\qrn{B2}&\vdash x +_e x \Leq_0 x  &\vdash x+_e x\Leq x \\
\qrn{SC}&\vdash x +_e y \Leq_0 y +_{1-e} x &\vdash x +_e y\Leq y+_{1-e}x \\
\qrn{SA}&\vdash (x +_e y) +{e'} z \Leq_0 x +_{ee'} (y +_{\frac{e'-ee'}{1-ee'}} z) &\vdash (x +_e y) +_{e'} z\Leq x +_{ee'} (y +_{\frac{e'-ee'}{1-ee'}} z) \\
\qrn{IB$_p$}&\{x\Leq_{\ee_1} y, x' \Leq_{\ee_2} y'\}\vdash x+_e y \Leq_{\delta} x'+_e y' \quad & \\ 
& \hfill (\delta \ge (e\ee_1^p + (1-e)\ee_2^p)^{1/p}) & 
\end{array}
\end{math}
\caption{Axioms for interpolative barycentric algebras as a QET and as an \ERPLL theory.}
\label{fig:iba}
\end{figure}
The axioms are translated as for the theory \textbf{QS$_0$}. 
The last one has no counterpart in \ERPLL as it essentially states that $+_e$ is \lip with coeficients $e^{1/p}$ and $(1-e)^{1/p}$,  
\footnote{To check this we rely on the inequality $(a+b)^{1/p} \le a^{1/p} + b^{1/p}$, which holds for non-negative real numbers $a,b$.}
and this in \ERPLL follows from the the fact that $+_e$ has gaded arity $\ple{e^{1/p}, (1-e)^{1/p}}$ and so we can derive the entailment 
$\Lbang_{e^{1/p}} x\Leq y, \Lbang_{(1-e)^{1/p}} x' \Leq y' \vdash x+_e y \Leq x'+_e y'$. 

Mardare et al. \cite{MardarePP16} define a semantics of this theory taking as domain of the interpretation the set $\Delta[M]$ of Borel probability measures over a metric space \ple{M,d} with the $p$-Wasserstein distance $W_d^p$ defined by 
\[ W_d^p(\mu,\nu) = \sup \left\{ \left| \int f^p \mathsf{d}\mu - \int f^p \mathsf{d}\nu \right|^{1/p} \mid \fun{f}{M}{\RPos}\ \text{is non-expansive}\right\} \] 
and interpreting $+_e$ as the convex combination of two distributions, that is, 
the function mapping a pair of distributions \ple{\mu,\nu}  to $e\mu + (1-e)\nu$. 
This choice, induces an interpretation of the corresponding \ERPLL theory in 
\fun{\LIPDoc}{\LIP\op}{\Pos}, 
as it is easy to see that the convex combination operation has \lip constants $e^{1/p}$ and $(1-e)^{1/p}$ with respect to $W_d^p$. 

A specialised version of the theory \textbf{IBA$_p$} is the theory \textbf{BA} of Left-Invariant Barycnetric Algebras, which is obtained by replacing the axiom \qrn{IB$_p$} by 
the axiom \qrn{LI} stating 
$\vdash x +_e z \Leq_\ee y +_e z$ where $\ee\ge e$. 
This axiom, together with the others,  basically requires that \lip constants of $+_e$ are $e$ and $1-e$, but it also forces the distance to be smaller than $1$. 
To rephrase this theory in \ERPLL, we need to consider a single-sorted $\RRPos$-graded signature where function symbols are 
$+_e: \ple{e,1-e}$, for all $e\in [0,1]$, 
and having a constant predicate symbol $\onepred:\ple{}$ modelling the real number $1$. 
This matches the intuition that in our setting real numbers are ``truth values'', hence they can be represented syntactically by constant predicate symbols. 
Finally, we have to add to the theory above the axiom $\onepred\vdash x \Leq y$, stating that the distance is smaller than $1$. 
Note that, using the rules of \ERPLL, we can prove $\Lbang_\ee \onepred \vdash x \Leq x+_ez\Leq y+_ez$, for all $\ee\ge e$, which corresponds exactly to the axiom \qrn{LI}. 

Following again \cite{MardarePP16}, 
a semantics of the theory \textbf{BA} in the  \lip doctrine $\DMet{\PP_{[0,\infty]}}{\textbf{m}}$ 
can be constructed by taking  
as interpretation of the sort the set 
$\Delta[S,\Sigma]$ of probability measures over a measurable space $\ple{S,\Sigma}$, with the total variation distance 
$T(\mu,\nu)=\sup_{E\in \Sigma}|\mu(E)-\nu(E)|$, 
 and interpreting $+_e$ again as the convex combination of distributions.

\subsection{Beyond QETs}
As already noticed, in QETs function symbols must be interpreted by non-expansive maps due to the rule \qrn{NExp}. 
The calculus \ERPLL, instead, can express theories where function symbols have to be interpreted by arbitrary \lip maps. 
As examples of such situation, 
we briefly describe the theories 
\textbf{$\mathbb{R}$-MVS}, of metric vector spaces over the field $\mathbb{R}$ of real numbers, and 
\textbf{\RR-GC}, of \RR-graded combinators over a given resource semiring \RR, 
which, therefore, cannot be expressed as QETs. 

The single-sorted $\RRPos$-graded signature of \textbf{$\mathbb{R}$-MVS} has no predicate symbols and consists of function symbols
$\mvplus : \ple{1,1}$, $\mvmin : \ple{1}$ and  $\mvzero : \ple{}$ for the group structure and 
$\mvscalar_a : \ple{ |a| }$ for scalar multiplication by $a$, for all $a \in \mathbb{R}$. 
Note that the scalar multiplication $\mvscalar_a$ is required to scale the distance by a factor $|a|$, hence in general it is not non-expansive; 
indeed, the entailment $\Lbang_{|a|} x\Leq y \vdash \mvscalar_a(x) \Leq \mvscalar_a(y)$ is provable. 
Below we report the axioms:
\begin{align*}
\qrn{G1}\  & \vdash (x\mvplus y)\mvplus z \Leq x\mvplus (y\mvplus z)  
   &
\qrn{L1}\  & \vdash \mvscalar_a(x + y) \Leq \mvscalar_a(x) \mvplus \mvscalar_a(y) \\ 
\qrn{G2}\  & \vdash x \mvplus y \Leq y \mvplus x 
    &
\qrn{L2}\  & \vdash \mvscalar_{a+b}(x) \Leq \mvscalar_a(x) \mvplus \mvscalar_b(x) \\ 
\qrn{G3}\  & \vdash x \mvplus\mvzero \Leq x 
   &
\qrn{L3}\  & \vdash \mvscalar_{ab}(x) \Leq \mvscalar_a(\mvscalar_b(x)) \\
\qrn{G4}\  & \vdash x \mvplus (\mvmin x) \Leq \mvzero  
   &
\qrn{L4}\  & \vdash \mvscalar_1(x) \Leq x 
\end{align*} 
These axioms are actually the standard ones for vector spaces, the only difference is that the underlying calculus is substructural, hence equality can be interpreted as a distance. 
Any real vector space $V$ with a norm $\|\blank\|$ gives a semantics for this theory: the distance $d(u,v) = \|u-v\|$ makes the group operations non-expansive and the scalar multiplication by $a$ a \lip function with constant $|a|$ as 
$\|au-av\| =
 |a|\cdot \|u-v\|$.

Another example where function symbols are not necessarily non-expansive is the theory \textbf{\RR-GC} of \RR-graded combinators 
an \RR-Linear Combinatory Algebra \cite{Atkey18} (see also \refToExItem{graded}{1}) 
over a given resource semiring \RR. 
The single-sorted \RR-graded signature of \textbf{\RR-GC} has no predicate symbols, 
function symbols $\cdot : \ple{\rone,\rone}$ and $\lcab_\res : \ple{\res}$,\footnote{We use $\lcab_\res$ instead of the more standard $!_\res$ to avoid confusion with the modality of the logical calculus.}  for all $\res\in\RSet$, 
constants 
$B,C,I,K, W_{\res,\ares}, D, d_{\res,\ares}, F_\res$, for all $\res,\ares\in\RSet$, and 
$O_{\res,\ares}$, for all $\res,\ares\in\RSet$ with $\res\rord\ares$. 
The symbol $\lcab_\res$ has grade $\res$ as, intuitively, the expression $\lcab_\res \trm$ provides ``$\res$ copies of $\trm$'', hence, the cost of a substitution inside $\lcab_\res$ should be amplified by $\res$. 
Indeed, as a consequence of the rules of \ERPLL the  entailment 
$\lcons{\Lbang_\res \trm\Leq\atrm}{ \lcab_\res \trm \Leq \lcab_\res\atrm}$ is derivable. 
The axioms of the theory are the following: 
\begin{align*} 
\qrn{GC1}\  & \vdash ((B\cdot x)\cdot y)\cdot z \Leq x\cdot (y\cdot z) 
  & 
\qrn{GC2}\  & \vdash ((C\cdot x)\cdot y)\cdot z \Leq (x\cdot z)\cdot y\\ 
\qrn{GC3}\  & \vdash I \cdot x \Leq x  
  & 
\qrn{GC4}\  & \vdash (K\cdot x)\cdot \lcab_\rzero y \Leq x \\ 
\qrn{GC5}\  & \vdash (W_{\res,\ares}\cdot x)\cdot \lcab_{\res\rplus\ares} \Leq (x \cdot \lcab_\res y)\cdot \lcab_\ares y 
  & 
\qrn{GC6}\  & \vdash D\cdot \lcab_\rone x \Leq x \\
\qrn{GC7}\  & \vdash d_{\res,\ares} \cdot \lcab_{\res\rmul\ares} x \Leq \lcab_\res (\lcab_\ares x) 
  & 
\qrn{GC8}\  & \vdash (F_\res\cdot \lcab_\res x)\cdot \lcab_\res y \Leq \lcab_\res (x\cdot y) \\
\qrn{GC9}\  & \vdash O_{\res,\ares} \cdot \lcab_\ares x \Leq \lcab_\res x 
\end{align*} 
When \RR is the semiring $\RRPos$ of non-negative real numbers, 
models of \textbf{$\RRPos$-GC} in $\LIPDoc$ 
provide a metric variant of (total) $\RRPos$-Linear Combinatory Algebras \cite{Atkey18} (see also \refToExItem{graded}{1}). 
Hence, equality is interpreted as a metric on combinators, which could be regarded as a program distance, 
and, for every $\res\in\RPos$, $\lcab_\res$ is interpreted as a \lip map with constant $\res$.


\section{The 2-category of \lip doctrines} 
\label{sect:categorical} 

As we have seen in the previous section, \RR-\lip doctrines are the objects of the 2-category $\EGMD{\RR}$. 
In this section we study its properties, relating it with with other 2-categories of doctrines. 
More in detail, we first show that \RR-\lip doctrines arise as  coalgebras for a 2-comonad on \RR-graded ones. 
This result smoothly generalises what happens in the non-linear setting and provides us with a universal construction of \RR-\lip doctrines from \RR-graded ones. 
Then, we study how our quantitative equality relates to the standard one given by left adjoints, introducing elementary \RR-graded doctrines and comparing them with \RR-\lip ones.


\subsection{\RR-\lip doctrines as coalgebras}
\label{sect:comonad}

A key property of standard equality in the non-linear setting, recognise by \cite{PasqualiF:cofced,EPR}, is that it arises as a coalgebra structure over a primary doctrine, that is, a doctrine modelling conjunctions. 
More precisely, consider \refToDia{grafico} from \refToSect{background}
\[
\xymatrix{
\ED\ar@{^(->}[rr]<-1ex> \ar@{}[rr]|-{\top}&&\ar[ll]<-1ex> \PD  
}
\]
It shows an adjoint situation between  \PD and \ED, \ie the  2-categories of primary doctrines  and that of elementary ones that is, primary doctrines with equality. That adjoint situation is comonadic.
This fact not only reveals the coalgebraic nature of equality, but provides a universal construction yielding elementary doctrines from primary ones. 

The goal of this subsection is 
to generalise this fact to our linear setting. That is, we present a universal construction producing a \lip doctrine out of any \RR-graded one.
To this end, 
we define a 2-functor \fun{\GMEFun}{\GMD{\RR}}{\EGMD{\RR}} that is the right 2-adjoint of the obvious forgetful 2-functor \fun{\GEMFun}{\EGMD{\RR}}{\GMD{\RR}} and prove that 
\RR-\lip doctrines are the coalgebras of the 2-comonad induced by such a 2-adjunction.  
On one hand, this result provides us with a universal construction of \RR-\lip doctrines, which gives us a tool to produce semantics for the calculus \ERPLL. 
On the other hand,  it shows that our quantitative notion of equality, although not falling within Lawvere's definition by left adjoints (\refToProp{tortellini}), 
generalises it and preserves its coalgebraic nature. 

To construct $\GMEFun$, for any \RR-graded doctrine \ple{\Doc,\rmod{}}, 
we build an \RR-\lip doctrine \fun{\DMet{\Doc}{\rmod{}}}{\Met{\Doc}{\rmod{}}\op}{\Pos}, called \emph{\lip completion} of \ple{\Doc,\rmod{}}. 
Let $\CC$ be the base of \Doc. 
The category \Met{\Doc}{\rmod{}} is defined as follows: 
\begin{itemize}
\item objects are pairs \ple{A,\dist} where $A$ is an object in \CC and $\dist$ is an affine \Doc-distance on $A$; 
\item an arrow from \ple{A,\dist} to \ple{B,\adist} is an arrow \fun{f}{A}{B} in \CC such that $\rmod{\res} \dist \order \DReIdx{\Doc}{f\times f}(\adist)$, for some $\res\in\RSet$; 
\item composition and identities are those  of \CC. 
\end{itemize} 
Given  $\fun{f}{\ple{A,\dist}}{\ple{B,\adist}}$ we say that $\res\tracks{}f$ if $\res$ is such that  $\rmod{\res} \dist \order \DReIdx{\Doc}{f\times f}(\adist)$. So $\fun{f}{A}{B}$ underlies an arrow $\fun{f}{\ple{A,\dist}}{\ple{B,\adist}}$ if there is $\res$ that tracks $f$.

Intuitively, objects of  \Met{\Doc}{\rmod{}} can be regarded as metric spaces inside $\Doc$ and its arrows as abstract \lip functions between such spaces. 
It is easy to see that \Met{\Doc}{\rmod{}} is a category: 
identities are well-defined as, for every $\ple{A,\dist}$, the identity $\id{A}$ is tracked by $\rone$ thanks to the counit axiom of $\rmod{}$; 
also composition is well defined as, 
if $\res\tracks{}f$ and $\ares\tracks{}g$, then $\ares\rmul\res\tracks{} gf$ by
comultiplication and naturality of $\rmod{\ares}$ 
\[
\rmod{\ares\rmul\res} \dist 
\order \rmod{\ares} \rmod{\res} \dist 
\order \rmod{\ares} \DReIdx{\Doc}{f\times f}(\adist) 
= \DReIdx{\Doc}{f\times f}(\rmod{\ares} \adist) 
\order \DReIdx{\Doc}{f\times f}(\DReIdx{\Doc}{g\times g}(\bdist)) 
\]
Given elements  $\dist\in\Doc(A\times A)$ and $\adist\in\Doc(B\times B)$, 
we denote by $\dist\dmul\adist$ the element  
$\DReIdx{\Doc}{\ple{\pi_1,\pi_3}}(\dist)\fmul\DReIdx{\Doc}{\ple{\pi_2,\pi_4}}(\adist)$ 
in $\Doc(A\times B\times A\times B)$.

The proof of the following proposition is straightforward.
\begin{prop}\label{prop:dist-mul}
Let \ple{\Doc,\fmul,\funit} be a \pl doctrine.  
If $\dist$ and $\adist$ are \Doc-distances, then  $\dist\dmul\adist$ is a \Doc-distance. 
If $\dist$ and $\adist$ are affine, then $\dist\dmul\adist$ is affine. 
\end{prop}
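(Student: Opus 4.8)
The plan is to check each clause of the definition of a \Doc-distance, plus affineness, for $\dist\dmul\adist$ by reducing it factor by factor to the corresponding property of $\dist$ and of $\adist$. Two facts, both available in any \pl doctrine, do all the work: reindexing is a homomorphism of commutative monoids, i.e.\ $\DReIdx{\Doc}{g}(\gamma\fmul\gamma')=\DReIdx{\Doc}{g}(\gamma)\fmul\DReIdx{\Doc}{g}(\gamma')$ and $\DReIdx{\Doc}{g}(\funit)=\funit$ (this is exactly the naturality of $\fmul$ and of $\funit$), and $\fmul$ is monotone in each argument, with $\funit\fmul\funit=\funit$.

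Affineness is then immediate: if $\dist\order\funit$ and $\adist\order\funit$, applying the monotone maps $\DReIdx{\Doc}{\ple{\pi_1,\pi_3}}$, $\DReIdx{\Doc}{\ple{\pi_2,\pi_4}}$ and using $\DReIdx{\Doc}{g}(\funit)=\funit$ gives $\DReIdx{\Doc}{\ple{\pi_1,\pi_3}}(\dist)\order\funit$ and $\DReIdx{\Doc}{\ple{\pi_2,\pi_4}}(\adist)\order\funit$, so $\dist\dmul\adist\order\funit\fmul\funit=\funit$.

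For the distance axioms one chases projections on $A\times B$. For \emph{reflexivity}: $\ple{\pi_1,\pi_3}\circ\Delta_{A\times B}$ is $\Delta_A$ precomposed with the projection $A\times B\to A$, and $\ple{\pi_2,\pi_4}\circ\Delta_{A\times B}$ is $\Delta_B$ precomposed with $A\times B\to B$; distributing $\DReIdx{\Doc}{\Delta_{A\times B}}$ over $\fmul$ thus exhibits $\DReIdx{\Doc}{\Delta_{A\times B}}(\dist\dmul\adist)$ as the $\fmul$ of a reindexing of $\DReIdx{\Doc}{\Delta_A}(\dist)$ with one of $\DReIdx{\Doc}{\Delta_B}(\adist)$, each bounded below by $\funit$ by reflexivity of $\dist$ and $\adist$; monotonicity of $\fmul$ and $\funit=\funit\fmul\funit$ conclude. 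For \emph{symmetry}: writing $\swap$ for the endomap of $(A\times B)^2$ exchanging the two $A\times B$-factors, $\swap$ is coordinatewise the swap of the two $A$-factors together with that of the two $B$-factors, so reindexing $\dist\dmul\adist$ along $\swap$ turns $\DReIdx{\Doc}{\ple{\pi_1,\pi_3}}(\dist)$ into $\DReIdx{\Doc}{\ple{\pi_3,\pi_1}}(\dist)$ and $\DReIdx{\Doc}{\ple{\pi_2,\pi_4}}(\adist)$ into $\DReIdx{\Doc}{\ple{\pi_4,\pi_2}}(\adist)$; by symmetry of $\dist$, $\adist$ and monotonicity of $\fmul$ we get $\dist\dmul\adist\order\DReIdx{\Doc}{\swap}(\dist\dmul\adist)$.

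\emph{Transitivity} is where the genuine bookkeeping lives; one works over $(A\times B)^3=A\times B\times A\times B\times A\times B$. Postcomposing each of the three ``consecutive-pair'' maps $(A\times B)^3\to(A\times B)^2$ appearing in the transitivity axiom with $\ple{\pi_1,\pi_3}$ (extracting an $A$-pair) yields the analogous consecutive-pair map $(A\times B)^3\to A^2$, and postcomposing with $\ple{\pi_2,\pi_4}$ yields the analogous map to $B^2$. Using this and distributivity of reindexing over $\fmul$, the transitivity left-hand side for $\dist\dmul\adist$ rewrites, after reordering the four tensor factors by commutativity and associativity of $\fmul$, as $\bigl(\text{transitivity l.h.s.\ for }\dist\bigr)\fmul\bigl(\text{transitivity l.h.s.\ for }\adist\bigr)$; transitivity of $\dist$, transitivity of $\adist$, and monotonicity of $\fmul$ then bound this by $\DReIdx{\Doc}{\ple{\pi_1,\pi_3}}(\dist\dmul\adist)$, i.e.\ the transitivity right-hand side. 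The only real effort in the whole proof is this shuffling of projections and tensor factors; nothing conceptual is involved, which is precisely why the statement is ``straightforward''.
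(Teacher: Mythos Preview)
Your proof is correct and is exactly the routine verification the paper has in mind; the paper itself omits the argument entirely, remarking only that it is straightforward. One tiny cosmetic point: in your transitivity paragraph the symbol $\ple{\pi_1,\pi_3}$ is being used both for the map $(A\times B)^2\to A^2$ defining $\dmul$ and for the map $(A\times B)^3\to(A\times B)^2$ in the transitivity axiom, so it may help to disambiguate the two uses explicitly.
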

Relying on this property, we can prove the following result. 
\begin{prop}\label{prop:met-cat}
The category \Met{\Doc}{\rmod{}} has finite products. 
\end{prop}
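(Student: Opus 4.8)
The plan is to show that the obvious forgetful functor $\Met{\Doc}{\rmod{}}\to\CC$, where $\CC$ is the base of $\Doc$, creates finite products, the distance on a product being assembled by means of the operation $\dmul$.

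For the terminal object I would take $\ple{1,\funit_{1\times1}}$, with $1$ terminal in $\CC$. Since $\funit$ is natural, every reindexing of $\funit$ is again $\funit$, so $\funit_{1\times1}$ is a $\Doc$-distance on $1$ (reflexivity, symmetry and transitivity each reduce to $\funit\order\funit$) and it is affine. Given any object $\ple{A,\dist}$, the unique arrow $!\colon A\to1$ of $\CC$ underlies an arrow $\ple{A,\dist}\to\ple{1,\funit_{1\times1}}$, because $\rmod{\rzero}\dist\order\funit=\DReIdx{\Doc}{!\times!}(\funit_{1\times1})$ by the weakening axiom, so $\rzero$ tracks $!$; uniqueness is inherited from $\CC$, since arrows of $\Met{\Doc}{\rmod{}}$ are determined by their underlying arrows of $\CC$.

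For binary products, given $\ple{A,\dist}$ and $\ple{B,\adist}$ I would take $\ple{A\times B,\dist\dmul\adist}$, which is a legitimate object by \refToProp{dist-mul}. The arrow $\pi_1\times\pi_1\colon(A\times B)\times(A\times B)\to A\times A$ coincides with $\ple{\pi_1,\pi_3}$, so $\DReIdx{\Doc}{\pi_1\times\pi_1}(\dist)=\DReIdx{\Doc}{\ple{\pi_1,\pi_3}}(\dist)$; using affineness of $\adist$ and naturality of $\funit$ we get $\dist\dmul\adist\order\DReIdx{\Doc}{\ple{\pi_1,\pi_3}}(\dist)$, and then $\rmod{\rone}(\dist\dmul\adist)\order\dist\dmul\adist$ by the counit axiom, so $\rone$ tracks $\pi_1$; symmetrically $\rone$ tracks $\pi_2$. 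For the universal property, given $f\colon\ple{C,\bdist}\to\ple{A,\dist}$ and $g\colon\ple{C,\bdist}\to\ple{B,\adist}$ with $\res$ tracking $f$ and $\ares$ tracking $g$, the pairing $\ple{f,g}\colon C\to A\times B$ of $\CC$ satisfies $\DReIdx{\Doc}{\ple{f,g}\times\ple{f,g}}(\dist\dmul\adist)=\DReIdx{\Doc}{f\times f}(\dist)\fmul\DReIdx{\Doc}{g\times g}(\adist)$, since reindexing preserves $\fmul$ and $\ple{\pi_1,\pi_3}\circ(\ple{f,g}\times\ple{f,g})=f\times f$ while $\ple{\pi_2,\pi_4}\circ(\ple{f,g}\times\ple{f,g})=g\times g$. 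Then by the contraction axiom and monotonicity of $\fmul$ we have $\rmod{\res\rplus\ares}\bdist\order\rmod{\res}\bdist\fmul\rmod{\ares}\bdist\order\DReIdx{\Doc}{f\times f}(\dist)\fmul\DReIdx{\Doc}{g\times g}(\adist)$, so $\res\rplus\ares$ tracks $\ple{f,g}$. Commutation with the projections and uniqueness of the mediating arrow are inherited from the corresponding facts in $\CC$.

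I do not expect a genuine obstacle: the whole argument is that the products of $\CC$ lift, and everything else reduces to \refToProp{dist-mul} together with the axioms of an \RR-graded modality. The only points needing care are the bookkeeping of the semiring elements that track the structural maps — $\rzero$ for the terminal arrow (weakening), $\rone$ for the projections (counit), $\res\rplus\ares$ for the pairing (contraction) — and the routine identification of composites of tuples of projections.
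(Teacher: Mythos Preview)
Your proof is correct and follows essentially the same approach as the paper: the same terminal object $\ple{1,\funit_{1\times1}}$ tracked by $\rzero$ via weakening, the same binary product $\ple{A\times B,\dist\dmul\adist}$ justified by \refToProp{dist-mul}, projections tracked by $\rone$ via affineness and counit, and the pairing tracked by $\res\rplus\ares$ via contraction. If anything you give slightly more detail (explicitly checking that $\funit_{1\times1}$ is a distance and unfolding the reindexing of $\dist\dmul\adist$ along $\ple{f,g}\times\ple{f,g}$), but the structure of the argument is identical.
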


\begin{proof}
A terminal object is \ple{1,\funit_{1\times 1}}, where $1$ is a terminal object in \CC. 
Indeed, for any object \ple{A,\dist} in \Met{\Doc}{\rmod{}}, the unique map \fun{t_A}{A}{1} in \CC, induces a map from \ple{A,\dist} to \ple{1,\funit_{1\times 1}} since by weakening it holds 
$\rmod{\rzero} \dist \order \funit_{A\times A} = \DReIdx{\Doc}{t_A\times t_A}(\funit_{1\times 1})$.
Given \ple{A,\dist} and \ple{B,\adist} in \Met{\Doc}{\rmod{}},
the pair \ple{A\times B,\dist\dmul\adist} is an object in \Met{\Doc}{\rmod{}} by \refToProp{dist-mul}, as both $\dist$ and $\adist$ are affine \Doc-distances. 

Since both $\dist$ and $\adist$ are affine, projections \fun{\pi_A}{A\times B}{A} and \fun{\pi_B}{A\times B}{B} induce maps from \ple{A\times B,\dist\dmul\adist} to \ple{A,\dist} and \ple{B,\adist}, respectively; 
indeed, using affineness and counit, we have $\rmod{\rone}(\dist\dmul\adist) \order \rmod{\rone} \DReIdx{\Doc}{\ple{\pi_1,\pi_3}}(\dist) \order \DReIdx{\Doc}{\ple{\pi_1,\pi_3}}(\dist) = \DReIdx{\Doc}{\pi_A\times\pi_A}(\dist)$  and similarly for $\pi_B$. 
Finally, to check the universal property, let \ple{C,\bdist} be an object in \Met{\Doc}{\rmod{}} and \fun{f}{\ple{C,\bdist}}{\ple{A,\dist}} and \fun{g}{\ple{C,\bdist}}{\ple{B,\adist}} arrows in \Met{\Doc}{\rmod{}}. 
It suffices to show that the arrow \fun{\ple{f,b}}{C}{A\times B} in \CC induced by $f$ and $g$ underlines an arrow from \ple{C,\bdist} to \ple{A\times B,\dist\dmul\adist} in \Met{\Doc}{\rmod{}}. 
By hypothesis, we know that there are $\res,\ares\in\RSet$ such that 
$\rmod{\res}\bdist\order \DReIdx{\Doc}{f\times f}(\dist)$ and $\rmod{\ares}\bdist\order \DReIdx{\Doc}{g\times g}(\adist)$, hence, using contraction, we get 
\[ \rmod{\res\rplus\ares} \bdist \order \rmod{\res}\bdist \fmul  \rmod{\ares}\bdist  \order \DReIdx{\Doc}{f\times f}(\dist) \fmul \DReIdx{\Doc}{g\times g}(\adist) \] 
whence the claim.
\end{proof} 

We now focus on the fibres. 
Given an affine \Doc-distance $\dist$ on an object $A$ in \CC, an element $\alpha$ in $\Doc(A)$ and $\res$ in $\RSet$, we write $\res\tracks{\dist}\alpha$ when 
$\DReIdx{\Doc}{\pi_1}(\alpha)\fmul \rmod{\res} \dist \order \DReIdx{\Doc}{\pi_2}(\alpha) $. 
The suborder of $\Doc (A)$ of \dfn{\RR-graded descent data} is: 
\[ 
\Des{\dist}(A) = \{ \alpha \in \Doc(A) \mid \res\tracks{\dist}\alpha \mbox{ for some } \res\in\RSet \} 
\] 
Consider \fun{f}{\ple{A,\dist}}{\ple{B,\adist}} in \Met{\Doc}{\rmod{}} and $\beta \in \Des{\adist}(B)$, 
then,  $\res\tracks{} f$ and $\ares \tracks{\adist}\beta$ for some $\res,\ares \in \RSet$. 
By comultiplication and naturality of $\rmod{\ares}$, we get
$$\rmod{\ares\rmul\res}\dist \order \rmod{\ares}\rmod{\res}\dist \order \rmod{\ares}\DReIdx{\Doc}{f\times f}(\adist) = \DReIdx{\Doc}{f\times f}(\rmod{\ares}\adist)$$ 
Thus, 
since $\pi_i \circ (f\times f) = f\circ\pi_1$, for $i = 1,2$, we get 
$$
\DReIdx{\Doc}{\pi_1}(\DReIdx{\Doc}{f}(\beta)) \fmul \rmod{\ares\rmul\res}\dist \order 
\DReIdx{\Doc}{f\times f}( \DReIdx{\Doc}{\pi_1}(\beta) \fmul \rmod{\ares}\adist ) \order 
\DReIdx{\Doc}{\pi_2}(\DReIdx{\Doc}{f}(\beta))
$$ 
showing that  
$\ares\rmul\res\tracks{\dist} \DReIdx{\Doc}{f}(\beta)$. 
In other words $\DReIdx{\Doc}{f}$ applies $\Des{\adist}(B)$ to $\Des{\dist}(A)$ and  the assignments
\[ \DMet{\Doc}{\rmod{}}\ple{A,\dist} := \Des{\dist}(A) \qquad \DMet{\Doc}{\rmod{}}(f) := \DReIdx{\Doc}{f}  \]
determine a functor \fun{\DMet{\Doc}{\rmod{}}}{\Met{\Doc}{\rmod{}}\op}{\Pos}.

\begin{prop}\label{prop:des-doc}
$\DMet{\Doc}{\rmod{}}$ is a \RR-\lip  doctrine.
\end{prop}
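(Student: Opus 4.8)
The plan is to verify, in turn, that $\DMet{\Doc}{\rmod{}}$ is a \pl doctrine, that it carries an \RR-graded modality, and that the assignment $\ple{A,\dist}\mapsto\dist$ equips it with a quantitative equality as in \refToDef{affine-lip}. That $\Met{\Doc}{\rmod{}}$ has finite products (\refToProp{met-cat}) and that $\DMet{\Doc}{\rmod{}}$ is a doctrine are already established, so only the extra structure is at stake. The guiding remark is that each fibre $\Des{\dist}(A)$ is a full subposet of $\Doc(A)$ and reindexing in $\DMet{\Doc}{\rmod{}}$ is the restriction of reindexing in $\Doc$; hence every inequality to be checked reduces to one holding in $\Doc$, as soon as one has verified that the elements at play lie in the relevant descent-data subposets and that the structural maps at play are arrows of $\Met{\Doc}{\rmod{}}$ --- the latter being automatic, since they are all built from the finite-product structure of \refToProp{met-cat}.

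For the \pl structure I take $\fmul$ and $\funit$ from $\Doc$: one has $\funit\in\Des{\dist}(A)$ since $\rzero$ tracks $\funit$ (by weakening, $\rmod{\rzero}\dist\order\funit$), and $\Des{\dist}(A)$ is closed under $\fmul$ since, if $\res$ tracks $\alpha$ and $\ares$ tracks $\beta$, then $\res\rplus\ares$ tracks $\alpha\fmul\beta$, by contraction ($\rmod{\res\rplus\ares}\dist\order\rmod{\res}\dist\fmul\rmod{\ares}\dist$) together with associativity and commutativity of $\fmul$; naturality of $\fmul,\funit$ and the monoid-homomorphism property of reindexing are inherited from $\Doc$. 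Similarly the modality is taken from $\Doc$: if $\ares$ tracks $\alpha$ then $\res\rmul\ares$ tracks $\rmod{\res}\alpha$, using naturality of $\rmod{\res}$ (to commute it past $\DReIdx{\Doc}{\pi_1}$), comultiplication (to split $\rmod{\res\rmul\ares}\dist$ into $\rmod{\res}\rmod{\ares}\dist$), lax-monoidality-2 (to pull $\rmod{\res}$ out of a $\fmul$), and monotonicity of $\rmod{\res}$; the seven axioms of \refToDef{graded} then hold in $\DMet{\Doc}{\rmod{}}$ because they hold in $\Doc$ and the fibres are subposets.

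Finally I put $\delem_{\ple{A,\dist}}:=\dist$. Since $\ple{A,\dist}\times\ple{A,\dist}=\ple{A\times A,\dist\dmul\dist}$ by \refToProp{met-cat}, this is a legitimate fibre element provided $\dist\in\Des{\dist\dmul\dist}(A\times A)$, \ie provided $\rone$ tracks $\dist$ for the distance $\dist\dmul\dist$: unfolding, $\DReIdx{\Doc}{\ple{\pi_1,\pi_2}}(\dist)\fmul\rmod{\rone}(\dist\dmul\dist)\order\DReIdx{\Doc}{\ple{\pi_3,\pi_4}}(\dist)$ follows from the counit axiom together with one use of symmetry and two uses of transitivity of the \Doc-distance $\dist$; this is the only step that is not pure bookkeeping. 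Reflexivity, symmetry and transitivity of $\delem_{\ple{A,\dist}}$ in $\DMet{\Doc}{\rmod{}}$ then reduce, after unfolding reindexing, to the same properties of $\dist$ in $\Doc$. To conclude, I invoke \refToProp{cara2}, which reduces the \lip axioms to (a), (b) and affineness of $\delem$: affineness of $\delem_{\ple{A,\dist}}=\dist$ holds by the definition of the objects of $\Met{\Doc}{\rmod{}}$; axiom (b) is the defining formula of $\dmul$, since $\delem_{\ple{A,\dist}\times\ple{X,\bdist}}=\dist\dmul\bdist=\DReIdx{\Doc}{\ple{\pi_1,\pi_3}}(\dist)\fmul\DReIdx{\Doc}{\ple{\pi_2,\pi_4}}(\bdist)$; and axiom (a), replaced via \refToProp{cara} by its one-variable form (a'), asks at $\ple{A,\dist}$ exactly that every $\alpha\in\DMet{\Doc}{\rmod{}}\ple{A,\dist}=\Des{\dist}(A)$ be tracked by some $\res$ with respect to $\dist$ --- which is precisely the membership condition defining $\Des{\dist}(A)$, hence holds by construction.
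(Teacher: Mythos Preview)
Your proof is correct and follows essentially the same approach as the paper's: show closure of $\Des{\dist}$ under $\funit$, $\fmul$, and $\rmod{\res}$ using weakening, contraction, and comultiplication/lax-monoidality respectively; set $\delem_{\ple{A,\dist}}:=\dist$ and check it lies in the right fibre via counit, symmetry, and transitivity; then conclude by \refToProp{cara2}. You are somewhat more explicit than the paper in spelling out why axioms (a) and (b) hold and in noting that the structural maps used are arrows of $\Met{\Doc}{\rmod{}}$, but the argument is the same.
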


\begin{proof}
We have $\funit \in\Des{\dist}$, as $\funit \fmul \rmod{\rzero}\dist \order \funit$, by weakening. 
For $\alpha,\beta\in\Des{\dist}$ it holds 
$\DReIdx{\Doc}{\pi_1}(\alpha) \fmul \rmod{\res}\dist \order \DReIdx{\Doc}{\pi_2}(\alpha)$ and 
$\DReIdx{\Doc}{\pi_1}(\beta) \fmul \rmod{\ares}\dist \order \DReIdx{\Doc}{\pi_2}(\beta)$, for some $\res,\ares \in \RSet$. 
Then, by contraction and commutativity of $\fmul$
\begin{align*} 
\DReIdx{\Doc}{\pi_1}(\alpha\fmul\beta) \fmul \rmod{\res\rplus\ares}\dist &\order 
\DReIdx{\Doc}{\pi_1}(\alpha) \fmul \rmod{\res}\dist \fmul \DReIdx{\Doc}{\pi_1}(\beta) \fmul \rmod{\ares}\dist \\ 
&\order 
\DReIdx{\Doc}{\pi_2}(\alpha) \fmul \DReIdx{\Doc}{\pi_2}(\beta) = 
\DReIdx{\Doc}{\pi_2}(\alpha \fmul \beta) 
\end{align*} 
For $\alpha \in \Des{\dist}$ and $\res \in \RSet$ it holds $\DReIdx{\Doc}{\pi_1}(\alpha) \fmul \rmod{\ares}\dist \order \DReIdx{\Doc}{\pi_2}(\alpha)$, for some $\ares \in \RSet$. 
Comultiplication, lax-monoidality  and naturality of $\rmod{\res}$ lead to 
\begin{align*} 
\DReIdx{\Doc}{\pi_1}(\rmod{\res}\alpha) \fmul \rmod{\res\rmul\ares}\dist &\order  
\rmod{\res}\DReIdx{\Doc}{\pi_1}(\alpha) \fmul \rmod{\res}\rmod{\ares}\dist\\
&\order 
\rmod{\res}(\DReIdx{\Doc}{\pi_1}(\alpha)\fmul\rmod{\ares}\dist) \\ 
&\order 
\rmod{\res}\DReIdx{\Doc}{\pi_2}(\alpha) = 
\DReIdx{\Doc}{\pi_2}(\rmod{\res}\alpha) 
\end{align*} 

In the end, to show that $\DMet{\Doc}{\rmod{}}$ is \RR-\lip, we set $\delem_{\ple{A,\dist}} := \dist$, for each object \ple{A,\dist} in \Met{\Doc}{\rmod{}}. 
First of all, note that $\delem_\ple{A,\dist} \in \DMet{\Doc}{\rmod{}}(\ple{A,\dist}\times\ple{A,\dist}) = \DMet{\Doc}{\rmod{}}\ple{A\times A,\dist\dmul\dist}$ since, by symmetry and transitivity of $\dist$ and counit, we have 
$\DReIdx{\Doc}{\ple{\pi_1,\pi_2}}(\dist) \fmul \rmod{\rone} (\dist\dmul\dist) \order \DReIdx{\Doc}{\ple{\pi_3,\pi_4}}(\dist)$. 
Then, the thesis follows by \refToProp{cara2}. 
\end{proof} 

The construction 
we have just described 
extends to a 2-functor \fun{\GMEFun}{\GMD{\RR}}{\EGMD{\RR}} as discussed below. 
Let \oneAr{\ple{F,f}}{\ple{\Doc,\rmod{}}}{\ple{\aDoc,\rmod{}'}} be a 1-arrow in \GMD{\RR} and let \fun{h}{\ple{A,\dist}}{\ple{B,\adist}} be an arrow in \Met{\Doc}{\rmod{}}. 
The following assignments 
define a functor 
\fun{\overline{F}}{\Met{\Doc}{\rmod{}}}{\Met{\aDoc}{\rmod{}'}}: 
\[
\overline{F}\ple{A,\dist} := \ple{FA,f_{A\times A}(\dist)}\qquad \overline{F}h := Fh 
\] 
Then, $\GMEFun$ is defined on arrows as follows: 
\[ 
\GMEFun\ple{F,f} := \ple{\overline{F},f} \qquad \GMEFun(\theta) := \theta 
\]
where \twoAr{\theta}{\ple{F,f}}{\ple{G,g}} is a 2-arrow in \GMD{\RR}. 

\begin{prop}\label{prop:gmefun}
The assignments above define a 2-functor 
$$\fun{\GMEFun}{\GMD{\RR}}{\EGMD{\RR}}$$ 
\end{prop}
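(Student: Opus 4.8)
The plan is to verify, in turn, that \GMEFun is well defined on $1$-cells, well defined on $2$-cells, and strictly compatible with all identities and compositions; the action on objects is exactly \refToProp{des-doc}. The guiding observation is that every piece of structure carried by $\DMet{\Doc}{\rmod{}}$ is a fibrewise restriction of the structure of \Doc, so that a $1$- or $2$-cell out of $\ple{\Doc,\rmod{}}$ restricts to one out of $\DMet{\Doc}{\rmod{}}$ essentially by applying the given data to the descent subposets, and coherence with identities and compositions is then inherited.

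First I would check that, for a $1$-arrow \oneAr{\ple{F,f}}{\ple{\Doc,\rmod{}}}{\ple{\aDoc,\rmod{}'}} of \GMD{\RR}, the pair $\ple{\overline{F},f}$ is a $1$-arrow of \EGMD{\RR}. \emph{(i)} $\overline{F}$ is well defined on objects: if $\dist$ is an affine \Doc-distance then $f_{A\times A}(\dist)$ is an affine \aDoc-distance, since $f$ is a monotone natural monoid homomorphism with $f(\funit)=\funit$, hence it preserves the reflexivity, symmetry and transitivity inequalities as well as affineness, using naturality of $f$ and that $F$ preserves products to rewrite $f(\DReIdx{\Doc}{\ple{\pi_i,\pi_j}}(\dist))=\DReIdx{\aDoc}{\ple{\pi_i,\pi_j}}(f_{A\times A}(\dist))$. \emph{(ii)} $\overline{F}$ is well defined on arrows: if $\res\tracks{}h$ for \fun{h}{\ple{A,\dist}}{\ple{B,\adist}}, then applying the monotone map $f_{A\times A}$ and using that $f$ commutes with the modalities and with reindexing, together with $F$ preserving products, gives $\rmod{\res}'\,f_{A\times A}(\dist)=f_{A\times A}(\rmod{\res}\dist)\order f_{A\times A}(\DReIdx{\Doc}{h\times h}(\adist))=\DReIdx{\aDoc}{Fh\times Fh}(f_{B\times B}(\adist))$, so $\res$ tracks $Fh$. \emph{(iii)} $\overline{F}$ preserves finite products: by \refToProp{met-cat} it suffices to note that it sends $\ple{1,\funit_{1\times 1}}$ to the terminal object $\ple{F1,\funit_{F1\times F1}}$ and $\ple{A\times B,\dist\dmul\adist}$ to $\ple{FA\times FB,f(\dist)\dmul f(\adist)}$, again because $F$ preserves products and $f$ commutes with reindexing, whence $f(\dist\dmul\adist)=f(\dist)\dmul f(\adist)$. \emph{(iv)} $f$ restricts to a natural transformation $\TNat{f}{\DMet{\Doc}{\rmod{}}}{\DMet{\aDoc}{\rmod{}'}\,\overline{F}\op}$ which is a natural monoid homomorphism commuting with the graded modality: the computation of \emph{(ii)} shows $\res\tracks{f(\dist)}f_A(\alpha)$ whenever $\res\tracks{\dist}\alpha$, and naturality, monotonicity, the monoid-homomorphism property and commutation with $\rmod{}$ are inherited because, by the proof of \refToProp{des-doc}, each $\Des{\dist}(A)$ is a submonoid of $\Doc(A)$ closed under the modalities. \emph{(v)} $\ple{\overline{F},f}$ preserves the distance, since $\delem'_{\overline{F}\ple{A,\dist}}=f_{A\times A}(\dist)$ by the definitions of $\delem$ (in \refToProp{des-doc}) and of $\overline{F}$.

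Next I would check $2$-cells. Given \twoAr{\theta}{\ple{F,f}}{\ple{G,g}} in \GMD{\RR}, the only new fact is that each $\theta_A$ underlies an arrow $\overline{F}\ple{A,\dist}\to\overline{G}\ple{A,\dist}$ in \Met{\aDoc}{\rmod{}'}: since $\theta$ is natural and $F,G$ preserve products one has $\theta_{A\times A}=\theta_A\times\theta_A$ up to the product isomorphisms, so the $2$-cell inequality $f_{A\times A}\order\DReIdx{\aDoc}{\theta_{A\times A}}\circ g_{A\times A}$ of \GMD{\RR} gives $f_{A\times A}(\dist)\order\DReIdx{\aDoc}{\theta_A\times\theta_A}(g_{A\times A}(\dist))$, and the counit axiom yields $\rmod{\rone}'\,f_{A\times A}(\dist)\order\DReIdx{\aDoc}{\theta_A\times\theta_A}(g_{A\times A}(\dist))$, i.e.\ $\rone\tracks{}\theta_A$. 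Naturality of $\theta$ viewed as $\overline{F}\Rightarrow\overline{G}$ and the defining inequality of a $2$-cell in \EGMD{\RR} then hold because the underlying arrows, orders and structure maps are exactly those of \GMD{\RR}, restricted to the descent subposets.

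Finally, functoriality and $2$-functoriality. On $1$-cells this follows from the strict identities $\overline{\Id{\CC}}=\Id{\Met{\Doc}{\rmod{}}}$, $\overline{GF}=\overline{G}\,\overline{F}$ and $(gF\op)f=(g\overline{F}\op)f$ on descent data, all immediate from the definitions of $\overline{(\cdot)}$ and of composition in \DC. On $2$-cells \GMEFun acts as the identity, so vertical composition and identity $2$-cells are trivially preserved, and whiskering is preserved because $\overline{G}$ acts as $G$ on arrows and $\overline{K}$ acts as $K$ on underlying objects, so $\GMEFun(G\theta)=\overline{G}\theta$ and $\GMEFun(\theta K)=\theta\,\overline{K}$; since horizontal composition decomposes via whiskering and vertical composition, it is preserved as well. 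I expect the main obstacle to be not conceptual but a matter of bookkeeping: a finite-product-preserving functor preserves products only up to coherent isomorphism, so the identities $F\ple{\pi_i,\pi_j}=\ple{\pi_i,\pi_j}$ and $\theta_{A\times A}=\theta_A\times\theta_A$ used above must be read modulo these canonical isomorphisms, and one must carry them consistently throughout (equivalently, one may replace $F,G$ by strictly product-preserving representatives); all the remaining verifications are routine.
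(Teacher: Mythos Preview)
Your proposal is correct and follows essentially the same approach as the paper: verify well-definedness on $1$-cells (that $\overline{F}$ is a product-preserving functor and that $f$ restricts to descent data and preserves distances) and on $2$-cells (that $\theta_A$ is tracked in $\Met{\aDoc}{\rmod{}'}$ via the $2$-cell inequality and counit), after which the algebraic identities are routine. Your version is in fact more thorough than the paper's, which omits the explicit check of product preservation and of functoriality, and your caveat about product-preservation up to coherent isomorphism is a fair point of bookkeeping that the paper leaves implicit.
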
 
\begin{proof}
To prove the thesis, we just have to show that \GMEFun is well-define, then algebraic identities follow immediately. 
Let \oneAr{\ple{F,f}}{\ple{\Doc,\rmod{}}}{\ple{\aDoc,\rmod{}'}} be a 1-arrow in \GMD{\RR}. 
The fact that \fun{\overline{F}}{\Met{\Doc}{\rmod{}}}{\Met{\aDoc}{\rmod{}'}} is a well-defined product-preserving functor follows from the following two facts. 
\begin{itemize}
\item Let $\dist$ be an affine \Doc-distance on $A$, then $f_{A\times A}(\dist)$ is an affine \aDoc-ditance on $FA$. 
This is immediate as $f$ is a natural transformation preserving the monoidal structure. 
\item Let \fun{h}{\ple{A,\dist}}{\ple{B,\adist}} be an arrow in \Met{\Doc}{\rmod{}}, then  \fun{Fh}{\ple{FA,f_{A\times A}(\dist)}}{\ple{FB,f_{B\times B}(\adist)}} is an arrow in \Met{\aDoc}{\rmod{}'}. 
By definition of arrows in \Met{\Doc}{\rmod{}}, we know that 
$\rmod{\res}\dist \order \DReIdx{\Doc}{h}(\adist)$ holds for some $\res\in\RSet$. 
Since \ple{F,f} is a 1-arrow in \GMD{\RR}, we know that $ {\rmod{\res}'}_{FA\times FA}  \circ f_{A\times A} \order f_{A\times A}\circ {\rmod{\res}}_{A\times A}$, hence, applying $f_{A\times A}$ we get  
\begin{small}
\[ \rmod{\res}' f_{A\times A} (\dist) \order f_{A\times A}(\rmod{\res}\dist) \order 
f_{A\times A} (\DReIdx{\Doc}{h} (\adist)) = 
\DReIdx{\aDoc}{Fh}(f_{B\times B}(\adist))  \] 
\end{small}
\end{itemize}
To check that $f_A$ applies $\DMet{\Doc}{\rmod{}} \ple{A,\dist} = \Des{\dist}$ into $\DMet{\aDoc}{\rmod{}'}\ple{FA,f_{A\times A}(\dist)} = \Des{f_{A\times A}(\dist)}$, it is enough to note that, since, for all $\alpha \in \Des{\dist}$, we have  $\DReIdx{\Doc}{\pi_1}(\alpha) \fmul^\Doc \rmod{\res}\dist \order \DReIdx{\Doc}{\pi_2}(\alpha)$, for some $\res\in\RSet$, we 
\begin{align*} 
\DReIdx{\aDoc}{F\pi_1}(f_A(\alpha)) \fmul^\aDoc \rmod{\res}'f_{A\times A}(\dist) &\order 
\DReIdx{\aDoc}{f\pi_1}(f_A(\alpha)) \fmul^\aDoc f_{A\times A}(\rmod{\res}\dist) \\
&= f_{A\times A} (\DReIdx{\Doc}{\pi_1}(\alpha) \fmul^\Doc \rmod{\res}\dist) \\
&\order 
f_{A\times A} \DReIdx{\Doc}{\pi_2}(\alpha)) \\
&= \DReIdx{\aDoc}{F\pi_2}(f_A(\alpha)) 
\end{align*} 
Since the \RR-graded structure of $\DMet{\Doc}{\rmod{}}$ and $\DMet{\aDoc}{\rmod{}'}$ are, respectively, that of \ple{\Doc,\rmod{}} and \ple{\aDoc,\rmod{}'}, $f$ preserves the structure by hypothesis, hence \ple{\overline{F},f} is a well-defined 1-arrow in \EGMD{\RR}. 

Let \twoAr{\theta}{\ple{F,f}}{\ple{G,g}} be a 2-arrow in \GMD{\RR}. The fact that $\GMEFun(\theta) = \theta$ is a well-defined 2-arrow in \EGMD{\RR} follows from the following observation: 
let \ple{A,\dist} be an object in \Met{\Doc}{\rmod{}}, then \fun{\theta_A}{\ple{FA,f_{A\times A}(\dist)}}{\ple{GA,g_{A\times A}(\dist)}} is an arrow in \Met{\aDoc}{\rmod{}'}. 
This is immediate as, by definition of 2-arrow and counit, we get 
$\rmod{\rone}'f_{A\times A}(\dist) \order f_{A\times A}(\dist) \order \DReIdx{\aDoc}{\theta_{A\times A}}(g_{A\times A}(\dist)) = \DReIdx{\aDoc}{\theta_A\times\theta_A}(g_{A\times A}(\dist))$. 
\end{proof}

For each \RR-graded doctrine \ple{\Doc,\rmod{}} and each \RR-\lip doctrine  $\ple{\aDoc,\armod{},\delem}$ there are a 1-arrow \ple{U^\Doc,u^\Doc} in \GMD\RR and a 1-arrow \ple{E^\aDoc,e\aDoc} in \EGMD\RR 
as depicted below: 
\[
\xymatrix@C=7.5em@R=1em{
\Met{\Doc}{\rmod{}}\op \ar[rd]^(.4){\DMet{\Doc}{\rmod{}} }_(.4){}="P"
\ar[dd]_{(U^\Doc)\op}^{}="F"
&\\
 & \Pos \\
\CC\o \ar[ru]_(.4){\Doc}^(.4){}="R"&
\ar"P";"R"_{u^\Doc \kern.5ex\cdot\kern-.5ex}="b"
} 
\qquad\qquad 
\xymatrix@C=7.5em@R=1em{
\D\op \ar[rd]^(.4){\aDoc}_(.4){}="P"
\ar[dd]_{(E^\aDoc)\op}^{}="F"
&\\
 & \Pos \\
\Met{\aDoc}{\armod{}}\op \ar[ru]_(.4){\DMet{\aDoc}{\armod{}}}^(.4){}="R"&
\ar"P";"R"_{e^\aDoc\kern.5ex\cdot\kern-.5ex}="b"
}
\]
On the left, 
$U^\Doc$ forgets distances, mapping \ple{A,\dist} to its underlying obect $A$ and being the identity on arrows, $u^\Doc$ is the obvious natural inclusion of $\DMet{\Doc}{\rmod{}}\ple{A,\dist}$ into $\Doc(A)$.  
On the right, 
$E^\aDoc$ maps an obect $A$ to \ple{A,\delem_A} and is the identity on arrows 
and $e^\aDoc$ is the identity on the fibres. 
Note that $E^\Doc$ is well-defined on objects as \ple{\aDoc,\armod{},\delem} is \RR-\lip, hence $\delem_A$ is an affine $\aDoc$-distance on $A$, and on arrows by \refToCor{graded-lip}; 
moreover, it preserves finite products by Items~\ref{def:affine-lip:2},\ref{def:affine-lip:4} of \refToDef{affine-lip}. 

These families of 1-arrows will give rise, respectively, to the counit and the unit of the 2-adunction $\GEMFun\dashv\GMEFun$ which is the core of the next theorem and also the main result of this subsection. 

\begin{thm}\label{thm:graded-adj}
The 2-functor \fun{\GEMFun}{\EGMD{\RR}}{\GMD{\RR}} is 2-comonadic: 
\begin{enumerate}
\item $\GMEFun$ is the right 2-adoint of $\GEMFun$ and 
\item \EGMD\RR is isomorphic to the 2-category of coalgebras for the 2-comonad $\GEMFun\circ\GMEFun$. 
\end{enumerate}
\end{thm}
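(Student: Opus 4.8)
The plan is to establish the 2-adjunction $\GEMFun\dashv\GMEFun$ by proving that, for each $\RR$-graded doctrine $\ple{\Doc,\rmod{}}$, the 1-arrow $\ple{U^\Doc,u^\Doc}$ displayed before the statement is couniversal, with the family $\ple{E^\Doc,e^\Doc}$ as unit. Given a 1-arrow $\ple{F,f}\colon\GEMFun\ple{\aDoc,\armod{},\delem}\to\ple{\Doc,\rmod{}}$ in $\GMD{\RR}$, its candidate lift to $\EGMD{\RR}$ is $\GMEFun\ple{F,f}\circ\ple{E^\aDoc,e^\aDoc}$, which acts on bases by $B\mapsto\ple{FB,f_{B\times B}(\delem_B)}$; it is a 1-arrow of $\EGMD{\RR}$ by \refToProp{gmefun} together with the fact that $\ple{E^\aDoc,e^\aDoc}$ is itself a 1-arrow of $\EGMD{\RR}$, and applying $\GEMFun$ and post-composing with the counit $\ple{U^\Doc,u^\Doc}$ returns $\ple{F,f}$, since $U^\Doc$ forgets distances and $u^\Doc$ is a fibrewise inclusion. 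The substance is uniqueness of the lift, which is forced: $U^\Doc$ applied to any competing lift pins down its underlying objects and arrows, preservation of the distance families required of a 1-arrow of $\EGMD{\RR}$ then pins down its $\Doc$-distance component to be $f_{B\times B}(\delem_B)$ (recalling $\delem^{\DMet{\Doc}{\rmod{}}}_{\ple{A,\dist}}=\dist$), and $u^\Doc\circ\widetilde f=f$ pins down the natural-transformation part. The induced bijection on 2-cells follows from the same observations --- $u^\Doc$ monic fibrewise, $U^\Doc$ the identity on arrows, and naturality of a 2-cell of $\GMD{\RR}$ with respect to projections identifying its component at $X\times X$ with a square --- and 2-naturality in both variables is a routine chase. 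This yields $\GEMFun\dashv\GMEFun$, with the triangle identities holding strictly because all the 1-arrows involved are built from distance-forgetting/adjoining, identities on arrows, and fibrewise inclusions/identities.

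For part (2), write $\mathbb{G}=\GEMFun\circ\GMEFun$ for the induced 2-comonad on $\GMD{\RR}$, with comultiplication $\delta$ and counit $\ple{U^\Doc,u^\Doc}$. The comparison 2-functor $K\colon\EGMD{\RR}\to\CoAlg{\mathbb{G}}$ sends $\ple{\Doc,\rmod{},\delem}$ to the coalgebra with underlying $\RR$-graded doctrine $\ple{\Doc,\rmod{}}$ and structure map $\ple{E^\Doc,e^\Doc}\colon\ple{\Doc,\rmod{}}\to\DMet{\Doc}{\rmod{}}$, i.e. $A\mapsto\ple{A,\delem_A}$ on bases. I would build its inverse by hand. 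Given a coalgebra $(\ple{\Doc,\rmod{}},\gamma)$ with $\gamma=\ple{G,g}$, the counit law $\ple{U^\Doc,u^\Doc}\circ\gamma=\mathrm{id}$ forces $G$ to be the identity on arrows and to send each $A$ to a pair $\ple{A,\delem_A}$ with $\delem_A$ an affine $\Doc$-distance, and forces $g_A$ to be the corestriction of the identity of $\Doc(A)$ --- which is legitimate precisely when $\Des{\delem_A}(A)=\Doc(A)$, i.e. precisely when axiom \ref{def:affine-lip:1p} of \refToProp{cara} holds for $\delem$. Functoriality and preservation of finite products by $G$ impose exactly $\delem_{A\times X}=\delem_A\dmul\delem_X$ and $\delem_1=\funit$, i.e. axioms \ref{def:affine-lip:2} and \ref{def:affine-lip:4} of \refToDef{affine-lip}, whence axiom \ref{def:affine-lip:3} follows from \refToProp{cara2}; so $\ple{\Doc,\rmod{},\delem}$ is $\RR$-\lip. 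The coassociativity axiom of the coalgebra imposes nothing further: both $\mathbb{G}\gamma\circ\gamma$ and $\delta_{\ple{\Doc,\rmod{}}}\circ\gamma$ act on base objects as $A\mapsto\ple{\ple{A,\delem_A},\delem_A}$ --- using $\delem_{A\times A}=\delem_A\dmul\delem_A$ to see $\delem_A$ lies in the relevant descent fibre --- and are the identity on arrows, hence coincide. These assignments are mutually inverse to $K$ on objects; on 1-arrows, being a 1-arrow of $\EGMD{\RR}$, i.e. preserving the distance families, unwinds to exactly the coalgebra-morphism equation for the structure maps, and on 2-cells neither side imposes anything beyond being a 2-cell of $\GMD{\RR}$, since the structure maps are identities on arrows and fibrewise inclusions. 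Hence $K$ is an isomorphism of 2-categories and $\GEMFun$ is 2-comonadic.

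I expect the main difficulty to be organisational rather than conceptual: tracking the nested descent fibres when iterating $\mathbb{G}$, and establishing that a coalgebra structure map $\ple{G,g}$ is \emph{literally} forced --- not merely up to isomorphism --- to ``adjoin $\delem_A$, be the identity on arrows, be the corestricted identity on fibres''. This rigidity, which rests on each $\Des{\delem_A}(A)$ being a full subposet of $\Doc(A)$ so that no choice remains, is exactly what upgrades the expected 2-equivalence to an isomorphism of 2-categories; once it is in place, the automatic validity of coassociativity and the matching of the 1- and 2-cell conditions are direct inspections. A secondary, purely bureaucratic point is the suppression of the canonical product-comparison isomorphisms of the finite-product-preserving functors involved, which are coherent and are handled exactly as in the construction of $\GMEFun$ and in \refToProp{gmefun}.
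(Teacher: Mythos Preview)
Your proposal is correct and follows essentially the same approach as the paper: the paper establishes part (1) by directly verifying that $\ple{U^\Doc,u^\Doc}$ and $\ple{E^\aDoc,e^\aDoc}$ are 2-natural and satisfy the triangle identities, while you argue via couniversality of $\ple{U^\Doc,u^\Doc}$, but the content is the same and your candidate lift $\GMEFun\ple{F,f}\circ\ple{E^\aDoc,e^\aDoc}$ is precisely what the unit--counit presentation produces. For part (2) both arguments are identical in substance---unpack the counit law $\ple{U^\Doc,u^\Doc}\circ\gamma=\mathrm{id}$ to force $\gamma$ to have the form $A\mapsto\ple{A,\delem_A}$ with $g=\id{}$ and $\Doc(A)=\Des{\delem_A}(A)$, then read off the $\RR$-\lip axioms---and your more explicit discussion of why coassociativity is automatic and how product preservation yields axioms \ref{def:affine-lip:2} and \ref{def:affine-lip:4} simply fills in details the paper leaves implicit.
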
 
\begin{proof}
First we prove that $\GEMFun\dashv\GMEFun$. 
Consider the following 2-natural transformations: 
\[
\counit_\ple{\Doc,\rmod{}} := \ple{U^\Doc,u^\Doc} 
\qquad 
\unit_\ple{\aDoc,\armod{}} := \ple{E^\aDoc,e^\aDoc} 
\]
for each object \ple{\Doc,\rmod{}} in \GMD{\RR} and \ple{\aDoc,\armod{},\adelem} in \EGMD{\RR}. 
We have that \TNat{\counit}{\GEMFun\circ\GMEFun}{\Id{\GMD{\RR}}} is a 2-natural transformation because, 
for any 1-arrow \oneAr{\ple{F,f}}{\ple{\Doc,\rmod{}}}{\ple{\Doc',\rmod{}'}} in \GMD{\RR}, it holds $U^{\Doc'}\circ \overline{F}  = F \circ U^\Doc$, by definition of \GMEFun. 
On the other hand, \TNat{\unit}{\Id{\EGMD{\RR}}}{\GMEFun\circ\GEMFun} is a 2-natural transformation because 
each 1-arrow \oneAr{\ple{F,f}}{\ple{\aDoc,\armod{},\delem}}{\ple{\aDoc',\armod{}',\delem'}} in \EGMD{\RR} preserves distances, hence 
$E^{\aDoc'}FA = \ple{FA,\delem'_{FA}} = \ple{FA,f_{A\times A}(\delem_A)} = \overline{F} E^\aDoc A$, for any object $A$ in the base category of \aDoc.
The two triangular identities hold as the following diagrams (in \GMD{\RR}) commute for any \RR-graded doctrine \ple{\Doc,\rmod{}} and \RR-\lip doctrine \ple{\aDoc,\armod{},\adelem}: 
\[
\xymatrix@C=10ex{
\ple{\DMet{\Doc}{\rmod{}},\rmod{}} \ar[rd]_{\ple{\Id{},\id{}}} \ar[r]^-{\ple{E^{\DMet{\Doc}{\rmod{}}},e^{\DMet{\Doc}{\rmod{}}}}} 
  & \ple{\DMet{\DMet{\Doc}{\rmod{}}}{\rmod{}},\rmod{}} \ar[d]^{\ple{\overline{U^{\DMet{\Doc}{\rmod{}}}},u^{\DMet{\Doc}{\rmod{}}}}} \\ 
  & \ple{\DMet{\Doc}{\rmod{}},\rmod{}} 
}\quad
\xymatrix@C=10ex{
\ple{\aDoc,\armod{}} \ar[dr]_{\ple{\Id{},\id{}}} \ar[r]^-{\ple{E^\aDoc,e^\aDoc}}  
  & \ple{\DMet{\aDoc}{\armod{}},\armod{}} \ar[d]^{U^\aDoc,u^\aDoc} \\ 
  & \ple{\aDoc,\armod{}} 
}
\]

Let \shopr{T} be the 2-comonad on \GMD{\RR} induced by the adjunction. 
In order to prove that \EGMD{\RR} is isomorphic to the 2-category $\EGMD{\RR}^{\shopr{T}}$ of coalgebras for \shopr{T}, note that there is a comparison 2-functor \fun{\shopr{K}}{\EGMD{\RR}}{\GMD{\RR}^{\shopr{T}}} mapping 
an \RR-\lip doctrine \ple{\aDoc,\armod{},\adelem} to the coalgebra \oneAr{\ple{E^\aDoc,e^\aDoc}}{\ple{\aDoc,\armod{}}}{\ple{\DMet{\aDoc}{\armod{}},\armod{}}}and being the identity on arrows. 
On the other hand, given a coalgebra \oneAr{\ple{F,f}}{\ple{\Doc,\rmod{}}}{\ple{\DMet{\Doc}{\rmod{}},\rmod{}}} on an \RR-graded  doctrine \ple{\Doc,\rmod{}}, since $\ple{U^\Doc,u^\Doc}\circ\ple{F,f}=\ple{\Id{},\id{}}$, we have $f = \id{}$ and $FA = \ple{A,\dist}$ for any object $A$ in the base category of \Doc and $\Doc (A) = \Des{\dist}(A)$. Hence \ple{\Doc,\rmod{},\delem}  is \RR-\lip with $\delem_A = \dist$. 
On arrows the 2-functor is the identity, becuase being a coalgebra morphism is exactly the same as preserving distances. Such a 2-functor is the inverse of \shopr{K}. 
\end{proof}

This result shows that \RR-\lip doctrines can be seen as coalgebras for the 2-comonad $\GEMFun\circ\GMEFun$ on $\GMD\RR$, that is, 
pairs consisting of an \RR-graded doctrine \ple{\Doc,\rmod{}} and a 1-arrow \oneAr{\ple{F,f}}{\ple{\Doc,\rmod{}}}{\ple{\DMet{\Doc}{\rmod{}},\rmod{}}} in \GMD\RR. 
This means that \RR-\lip doctrines are \emph{structures} over \RR-graded one. 
The next theorem such structures in a very precise way. 

\begin{prop}\label{prop:kz}
Let \oneAr{\ple{F,f}}{\ple{\Doc,\rmod{}}}{\ple{\DMet{\Doc}{\rmod{}},\rmod{}}} be a $\GEMFun\circ\GMEFun$-coalgebra structure over \ple{\Doc,\rmod{}}. 
Then, \ple{F,f} is a left adoint of \ple{U^\Doc,u^\Doc} in the 2-category $\GMD\RR$. 
\end{prop}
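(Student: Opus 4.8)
The plan is to exhibit $\ple{F,f}\dashv\ple{U^\Doc,u^\Doc}$ as a coreflection, the way one does for algebra structures of a Kock--Z\"oberlein comonad. Since $\ple{F,f}$ is a coalgebra, the counit law reads $\ple{U^\Doc,u^\Doc}\circ\ple{F,f}=\ple{\Id{},\id{}}$; unwinding the composition in \GMD{\RR} exactly as in the proof of \refToThm{graded-adj}, this forces $f$ to be the identity, $F$ to act as the identity on arrows, and $FA=\ple{A,\delem_A}$ for an affine \Doc-distance $\delem_A$ on each object $A$ of the base \CC of \Doc, with $\Doc(A)=\Des{\delem_A}(A)$; equivalently, $\ple{\Doc,\rmod{},\delem}$ is an \RR-\lip doctrine. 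I would then take the unit of the purported adjunction to be the identity 2-cell $\ple{\Id{},\id{}}\Rightarrow\ple{U^\Doc,u^\Doc}\circ\ple{F,f}$, so that only a counit $\varepsilon\colon\ple{F,f}\circ\ple{U^\Doc,u^\Doc}\Rightarrow\Id{}$ in \GMD{\RR} remains to be produced, and the two triangle identities collapse to the requirements that the horizontal composites of $\varepsilon$ with $\ple{F,f}$ and with $\ple{U^\Doc,u^\Doc}$ are identities.

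The component of $\varepsilon$ at an object $\ple{A,\dist}$ of \Met{\Doc}{\rmod{}} must be an arrow $\ple{A,\delem_A}\to\ple{A,\dist}$, and the natural candidate for its underlying \CC-arrow is $\id{A}$. The crux is the following lemma, the only point at which the \lip structure is used essentially: for every affine \Doc-distance $\dist$ on $A$ there is $\res\in\RSet$ with $\rmod{\res}\delem_A\order\dist$, that is, $\res\tracks{}\id{A}\colon\ple{A,\delem_A}\to\ple{A,\dist}$. To prove it, apply the substitutivity axiom \refToDefItem{affine-lip}{1} of \refToDef{affine-lip} with $X:=A$ and $\alpha:=\dist\in\Doc(A\times A)$, obtaining $\res\in\RSet$ with
\[\DReIdx{\Doc}{\ple{\pi_1,\pi_2}}(\dist)\fmul\rmod{\res}\DReIdx{\Doc}{\ple{\pi_2,\pi_3}}(\delem_A)\order\DReIdx{\Doc}{\ple{\pi_1,\pi_3}}(\dist)\]
in $\Doc(A\times A\times A)$, and reindex both sides along $\Delta_A\times\id{A}\colon A\times A\to A\times A\times A$. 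Since reindexing is a monoid homomorphism and commutes with $\rmod{\res}$, the right-hand side becomes $\dist$; on the left, $\DReIdx{\Doc}{\ple{\pi_1,\pi_2}}(\dist)$ becomes $\DReIdx{\Doc}{\pi_1}\DReIdx{\Doc}{\Delta_A}(\dist)$, which satisfies $\funit\order\DReIdx{\Doc}{\pi_1}\DReIdx{\Doc}{\Delta_A}(\dist)$ by reflexivity of $\dist$, and $\rmod{\res}\DReIdx{\Doc}{\ple{\pi_2,\pi_3}}(\delem_A)$ becomes $\rmod{\res}\delem_A$; by monotonicity of $\fmul$ the reindexed left-hand side is then $\order$-above $\rmod{\res}\delem_A$, while being $\order$-below $\dist$, so $\rmod{\res}\delem_A\order\dist$ by transitivity. (Reflexivity of $\delem_A$ and neutrality of $\funit$ are used in the rewriting; affineness of $\dist$ is not needed.)

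Granting this lemma, the rest is bookkeeping on the strict shape of the coalgebra. Naturality of $\varepsilon$ is automatic since all its components have identity underlying \CC-arrow and an arrow of \Met{\Doc}{\rmod{}} is determined by its underlying \CC-arrow; and $\varepsilon$ is a 2-cell of \GMD{\RR} because, unwinding the definitions, the natural-transformation parts at $\ple{A,\dist}$ of $\ple{F,f}\circ\ple{U^\Doc,u^\Doc}$ and of $\Id{}$ are respectively the inclusion $\Des{\dist}(A)\hookrightarrow\Doc(A)$ and the identity, and $\DReIdx{\DMet{\Doc}{\rmod{}}}{\varepsilon_{\ple{A,\dist}}}$ is again that inclusion, so the defining inequality of a 2-cell holds with equality. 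Finally the horizontal composites of $\varepsilon$ with $\ple{F,f}$ and with $\ple{U^\Doc,u^\Doc}$ are identities because $F$ and $U^\Doc$ are the identity on arrows and $\varepsilon_{\ple{A,\delem_A}}=\id{\ple{A,\delem_A}}$, which closes the triangle identities. The single real obstacle is the key lemma; everything else is forced by the description of the coalgebra recalled in the first paragraph.
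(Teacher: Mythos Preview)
Your proof is correct and follows essentially the same route as the paper's: both take the unit to be the identity 2-cell (from the coalgebra counit law), construct the counit by showing that $\id{A}$ underlies an arrow $\ple{A,\delem_A}\to\ple{A,\dist}$ via the key inequality $\rmod{\res}\delem_A\order\dist$, and obtain that inequality by applying the graded substitutivity axiom to $\dist\in\Doc(A\times A)$ and reindexing along $\ple{\pi_1,\pi_1,\pi_2}=\Delta_A\times\id{A}$ together with reflexivity of $\dist$. You are slightly more explicit than the paper in checking naturality of $\varepsilon$ and the 2-cell inequality, but these are exactly the routine verifications the paper subsumes under ``the triangular laws trivially hold.''
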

\begin{proof}
Since \ple{F,f} is a coalgebra, we have that $\ple{U\Doc,u\Doc}\circ\ple{F,f} = \ple{\Id,\id{}}$, hence we take the identity to be the unit of the adunction. 
Now, given an object \ple{A,\dist} in $\Met{\Doc}{\rmod{}}$, we have that $FU^\Doc\ple{A,\dist} = \ple{A,\delem_A}$, where $\delem_A \in \Doc(A\times A)$ is a $\Doc$-distance satisfying the \RR-graded substitutivity condition because \ple{\Doc,\rmod{},\delta} is \RR-\lip by \refToThm{graded-adj}. 
Therefore, since $\dist \in \Doc(A\times A)$, we get that 
$\DReIdx{\Doc}{\ple{\pi_1,\pi_2}}(\dist) \fmul \rmod{\res} \DReIdx{\Doc}{\ple{\pi_2,\pi_3}}(\delem_A) \order \DReIdx{\Doc}{\ple{\pi_1,\pi_3}}(\dist)$, for some $\res\in\RSet$. 
Then, reindexing along \fun{\ple{\pi_1,\pi_1,\pi_2}}{A\times A}{A\times A \times A} and using reflexivity of $\dist$, we get 
$\rmod\res \delem_A \order \dist$. 
This proves that the identity on $A$ gives riseto an arrow \fun{\id{A}}{\ple{A,\delem_A}}{\ple{A,\dist}} in $\Met\Doc{\rmod{}}$, which will be the components of the counit of the adunction. 
Finally, the triangular laws of adunctions trivially holds as both the unit and the counit have the identity as underlying arrow, 
therefore we get the thesis. 
\end{proof}

In other words, \refToProp{kz} proves that the 2-comonad $\GEMFun\circ\GMEFun$ is a \emph{KZ comonad} \cite{Kock95}. 
This express in categorical terms the fact that being \RR-\lip is a \emph{property} rather than a structure. 
Indeed, \refToProp{kz} shows that every $\GEMFun\circ\GMEFun$-coalgebra structure on a \RR-graded doctrine \ple{\Doc,\rmod{}} must be left adjoint of the counit \ple{U^\Doc,u^\Doc} of the comonad, hence they are all isomorphic thanks to general properties of adunctions in 2-categories. 
Equivalently, there is at most one $\GEMFun\circ\GMEFun$-coalgebra structure on \ple{\Doc,\rmod{}} up to isomorphism. 

Comparing with the non-linear case, 
by \refToProp{graded-primary}, we know that \PD is a 2-subcategory of $\GMD{\RR}$ and also that $\ED$ is 2-subcategory of $\EGMD{\RR}$, since axioms of \RR-\lip doctrines (\cf \refToDef{affine-lip}) trivally holds for elementary ones when the modality is the identity as in \refToProp{graded-primary}. 
Then, it is easy to check that, by restricting $\GEMFun\dashv\GMEFun$ to \PD and \ED, we retrieve the 2-adjuction presented in \cite{EPR}. 
This is essentially because in a primary doctrine, viewed as a \RR-graded one, distances are actually equivalence relations and \lip arrows just preserves such relations. 
This leads us to the following commutative diagram: 
\begin{equation}\label{dia:diagram} 
\xymatrix{
\GMD{\RR} \ar@<1ex>[r]^{\GMEFun}\ar@{}[r]|-{\top} & \EGMD{\RR}\ar@<1ex>[l] \\ 
\PD \ar[u]\ar@<1ex>[r]\ar@{}[r]|-{\top} & \ED\ar[u] \ar@<1ex>[l]
}
\end{equation} 
\vspace{.5em}

The 2-functor \fun{\GMEFun}{\GMD{\RR}}{\EGMD{\RR}} can be used to construct examples of \RR-\lip doctrines,  
providing semantics for the calculus \ERPLL, 
obtained in the same way as in \refToEx{provvisorio}.

\begin{exas}\label{ex:provvisorio2}
\begin{enumerate}
\item\label{ex:provvisorio2:1} 
The \lip doctrine $\ple{\LIPDoc, \textbf{m}, \delem}$  of \refToEx{liplip} is the \lip completion of the $\RRPos$-graded doctrine \ple{\PP_{[0,\infty]},\textbf{m}} of \refToExItem{graded}{0}. 

\item\label{ex:provvisorio2:2} 
The \lip completion of the \RR-graded doctrine $\ple{K_\WW,\overline{a}}$ of \refToExItem{graded}{444} (where  $\WW$ is a monoidal Kripke frame and $a$ a lax axions of $\RR$ on $\WW$) is the \lip doctrine $\fun{\DMet{K_\WW}{\overline{a}}}{\Met{K_\WW}{\overline{a}}\op}{\Pos}$. 
Objects of the category $\Met{K_\WW}{\overline{a}}$ are pairs $\ple{X,\dist}$ where the ternary relation $\dist\subseteq X\times X\times W$ is an affine $K_\WW$-distance in the sense of \refToExItem{distance}{444}. 
An arrow $f:\ple{X,\dist}\to\ple{Y,\adist}$ is a function $f:X\to Y$ for which there is $\res\in \RSet$ such that, 
for all $x,x'\in X$ and $w\in W$, 
$\ple{x,x',w}\in\dist$ implies $\ple{f(x),f(x'),a(\res,w)}\in\adist$. 
An element in $\DMet{K_\WW}{\overline{a}}\ple{X,\dist}$ is $U\subseteq X\times W$ for which there is $\res\in \RSet$ such that, 
for all $x,x'\in X$ and $w_1,w_2\in W$, 
$\ple{x,w_1}\in U$ and $\ple{x,x',w_2}\in \dist$ implies $\ple{x', w_1\krpmul a(\res,w_2)}\in U$.

\item\label{ex:provvisorio2:3} 
The \lip completion of the \RR-graded realizability doctrine 
$\ple{\mathcal{R}_A,\rmod{}}$ of \refToExItem{graded}{1} (where $A$ be an ordered \RR-LCA) 
is the \RR-\lip doctrine \fun{\DMet{\mathcal{R}_A}{\rmod{}}}{\Met{\mathcal{R}_A}{\rmod{}}\op}{\Pos}. 
Objects of $\Met{\mathcal{R}_A}{\rmod{}}$ are pairs $\ple{X,\dist}$ such that $\fun{\dist}{X\times X}{\PP(|A|)}$ is an affine $\mathcal{R}_A$-distance as in \refToExItem{distance}{1}.
An arrow $\fun{f}{\ple{X,\dist}}{\ple{Y,\adist}}$ of $\Met{\ct{R}_A}{\rmod{}}$ is a function such that there $\res\in \RSet$ and there is a realizer in $|A|$ mapping, for every $x,x'\in X$, elements of $!_\res\dist(x,x')$ into elements of $\adist(f(x),f(x'))$. 
An element $\alpha$ in $\DMet{\mathcal{R}_A}{\rmod{}}\ple{X,\dist}$ is a function $\fun{\alpha}{X}{\PP(|A|)}$ such that there $\res\in \RSet$ and a realizer  in $|A|$ mapping, for every $x,x'\in X$, elements of 
$\{\textbf{P}ab\in |A| \mid  a\in \alpha(x), b \in !_\res\dist(x,x')\}$ into elements of 
$\alpha(x')$. 
\end{enumerate}
\end{exas}

The fact that $\GMEFun$ is a right 2-adoint allows us to characterise in a precise way model of \ERPLL theories in \RR-\lip doctrines obtained as \lip completions of an \RR-graded one. 
Indeed, if $\TT$ is a theory in \ERPLL and \ple{\Doc,\rmod{}} is an \RR-graded doctrine, the 2-adunction $\GEMFun\dashv\GMEFun$ gives us the following isomorphism between hom-categories: 
\[ \GMD\RR(\ple{\Prop_\TT,\Lbang},\ple{\Doc,\rmod{}}) \simeq \EGMD\RR(\ple{\Prop_\TT,\Lbang,\delem^\Leq},\ple{\DMet{\Doc}{\rmod{}}, \rmod{}, \delem}) \]
which says that models (and their homomorphisms) into an \RR-graded doctrine \ple{\Doc,\rmod{}}, that is, ignoring equality, are the same as models into the \lip completion of \ple{\Doc,\rmod{}} that preserve equality.


\subsection{Relating notions of equality}
\label{sect:rel-eq} 

In this section we establish a relation between the quantitative equality introduced in \refToSect{graded} and the traditional Lawvere's notion of equality defined by left adjoints.

First of all, we cast Lawvere's equality to \RR-graded doctrines. 
This is quite easy since \RR-graded doctrines are in particular \pl doctrines, hence one can always consider those that are elementary according to \refToDef{lelementary}. 
More explicitly, 
this means that, a \RR-graded doctrine \ple{\Doc,\rmod{}} is \emph{elementary} if 
for every object $A$ in the base of \Doc there is an element $\delta_A \in \Doc(A\times A)$, which is reflexive and substitutive. 

We have already seen in \refToProp{tortellini} that such an equality predicate is replicable. 
In the \RR-graded setting, this non-linearity is even more evident as detailed by the next proposition. 

Given an \RR-graded doctrine \ple{\Doc,\rmod{}}, an element 
$\alpha$ in $\Doc(A)$ is said to be \emph{$\rmod{}$-intuitionistic} if 
$\alpha \order \rmod{\res}\alpha$ for all $\res\in\RSet$. 
Intuitively, this means that $\alpha$ can provide an arbitrary amount of copies of itself, hence it can be used in an unrestricted way. 
Note that a $\rmod{}$-intuitionistic element is, in particular, affine and replicable. 
Indeed, if $\alpha$ is $\rmod{}$-intuitionistic, we have 
$\alpha \order \rmod{\rzero}\alpha \order \funit$ (by weakening) and 
$\alpha \order \rmod{\rone\rplus\rone}\alpha \order \rmod{\rone}\alpha \fmul \rmod{\rone}\alpha \order \alpha \fmul\alpha$ (by contraction and counit). 
Notice also that, by lax monoidality of $\rmod{}$, we have that $\funit$ is $\rmod{}$-intuitionistic and, 
if $\alpha$ and $\beta$ are $\rmod{}$-intuitionistic, so is $\alpha\fmul\beta$.
Then, \refToProp{tortellini} is generalised as follows.

\begin{prop}\label{prop:graded-left-adj}
Let \ple{\Doc,\rmod{}}  be an \RR-graded doctrine and $\fun{f}{X}{Y}$ an arrow of the base. 
If $\DReIdx{\Doc}{f}$ has a left adjoint $\Ex_f$, then 
$\Ex_f(\funit_X)$ is $\rmod{}$-intuitionistic. 
\end{prop}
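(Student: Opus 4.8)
The plan is to mimic the proof of \refToProp{tortellini}, transposing along the adjunction $\Ex_f\dashv\DReIdx{\Doc}{f}$ and then using the comonad-like axioms of the graded modality. Fix $\res\in\RSet$. By definition of $\rmod{}$-intuitionistic, the goal is the inequality $\Ex_f(\funit_X)\order\rmod{\res}(\Ex_f(\funit_X))$, which by the adjunction is equivalent to $\funit_X\order\DReIdx{\Doc}{f}\bigl(\rmod{\res}(\Ex_f(\funit_X))\bigr)$. So it suffices to establish this latter inequality in $\Doc(X)$.

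First I would use naturality of $\rmod{\res}$ (part of \refToDef{graded}) to rewrite the right-hand side as $\rmod{\res}\bigl(\DReIdx{\Doc}{f}(\Ex_f(\funit_X))\bigr)$. Next, from the unit of the adjunction we have $\funit_X\order\DReIdx{\Doc}{f}(\Ex_f(\funit_X))$, and applying the (naturality-induced) monotonicity of $\rmod{\res}$ yields $\rmod{\res}(\funit_X)\order\rmod{\res}\bigl(\DReIdx{\Doc}{f}(\Ex_f(\funit_X))\bigr)$. Finally, lax-monoidality-1 gives $\funit_X\order\rmod{\res}(\funit_X)$, and chaining these three facts by transitivity of $\order$ produces the desired $\funit_X\order\DReIdx{\Doc}{f}\bigl(\rmod{\res}(\Ex_f(\funit_X))\bigr)$; transposing back along the adjunction concludes.

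There is no real obstacle here: the only point requiring care is keeping track of the direction of the inequalities and invoking naturality of $\rmod{\res}$ to commute it past the reindexing $\DReIdx{\Doc}{f}$. I would also note, as already observed in the text preceding the statement, that this strengthens \refToProp{tortellini}, since every $\rmod{}$-intuitionistic element is in particular affine (via weakening) and replicable (via contraction and counit), recovering the earlier conclusion as a special case.
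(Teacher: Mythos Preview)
Your proof is correct and follows essentially the same approach as the paper: use lax-monoidality-1 to get $\funit_X\order\rmod{\res}\funit_X$, the unit of the adjunction plus monotonicity and naturality of $\rmod{\res}$ to reach $\funit_X\order\DReIdx{\Doc}{f}(\rmod{\res}\Ex_f(\funit_X))$, and then transpose back across $\Ex_f\dashv\DReIdx{\Doc}{f}$.
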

\begin{proof}
We have 
$\funit_X\order\rmod{\res}\funit_X$ by lax monoidality of $\rmod{\res}$ and 
$\funit_X\order \DReIdx{\Doc}{f}(\Ex_f(\funit_X))$ by the adjunction $\Ex_f\dashv\DReIdx{\Doc}{f}$. 
This implies
$\funit_X \order \rmod{\res}\DReIdx{\Doc}{f}(\Ex_f(\funit_X)) = \DReIdx{\Doc}{f}(\rmod{\res}\Ex_f(\funit_X))$, by naturality and monotonicity of $\rmod{\res}$. 
Then we get $\Ex_f(\funit_X) \order \rmod{\res}\Ex_f(\funit_X)$, again by the adjunction $\Ex_f\dashv\DReIdx{\Doc}{f}$, as needed. 
\end{proof}
Relying on 
 \refToProp{graded-left-adj} implies that equality predicates in an elementary \RR-graded doctrines are $\rmod{}$-intuitionistic. 

At this point, the natural question is the following: 
how does this (standard) notion of equality relate to our quantitative equality? 
To answer this question in a precise way, first of all we observe that also elementary \RR-graded doctrines can be organised in a 2-category, and then we compare it with the 2-category of \RR-\lip doctrines. 

Elementary \RR-graded doctrines are the objects of the 2-category \ElGMD{\RR} where 
a 1-arrow from \ple{\Doc,\rmod{}} to \ple{\aDoc,\armod{}} is a 1-arrow \oneAr{\ple{F,f}}{\ple{\Doc,\rmod{}}}{\ple{\aDoc,\armod{}}} in \GMD{\RR} such that  
$\Ex_{\id{FX}\times\Delta_{FA}} \circ f_{X\times A} = f_{X\times A\times A}\circ \Ex_{\id{X}\times \Delta_A}$, for all objects $X$ and $A$ in the base of \Doc. 
A 2-arrow from \ple{F,f} to \ple{G,g} is a 2-arrow \twoAr{\theta}{\ple{F,f}}{\ple{G,g}} in \GMD{\RR}. Compositions and identities are those of \GMD{\RR}. 

First of all, we observe that an elementary \RR-graded doctrine \ple{\Doc,\rmod{}} is also \RR-\lip. 
Indeed, given an object $A$ in the base of $\Doc$, the equality predicate $\delta_A$ is an affine $\Doc$-distance, by \refToProp{elem-dist} and \refToProp{graded-left-adj}, and it satisfies the graded substitutive property since we have 
\[
\DReIdx{\Doc}{\ple{\pi_1,\pi_2}}(\alpha) \fmul \rmod{\rone} \DReIdx{\Doc}{\ple{\pi_2,\pi_3}}(\delta_A) 
\order  \DReIdx{\Doc}{\ple{\pi_1,\pi_2}}(\alpha) \fmul \DReIdx{\Doc}{\ple{\pi_2,\pi_3}}(\delta_A) 
\order \DReIdx{\Doc}{\ple{\pi_1,\pi_3}}(\alpha)
\]
by counit and substitutivity of $\delta_A$ (\cf \refToDef{lelementary}). 
This observation immediately provides us with a 2-functor (actually an inclusion) 
$\ELFun:\ElGMD{\RR}\hookrightarrow\EGMD{\RR}$.

In the rest of the section we will show that the 2-functor $\ELFun$ is 2-comonadic. 
We start by constructing a candidate to be its right 2-adoint. 
Consider an \RR-\lip doctrine \ple{\Doc,\rmod{},\delem} where $\CC$ is the base of $\Doc$. 
We denote by 
\In{\CC}{\delem} the full subcategory  of \CC on those objects $A$ such that $\delem_A$ is $\rmod{}$-intuitionistic. 
This category inherits finite products from \CC. 
Indeed, the terminal object of \CC is in \In\CC\delem, as $\delem_1 = \funit$ is $\rmod{}$-intuitionistic, and 
given objects $A$ and $B$ in \In\CC\delem, the product $A\times B$ is in \In\CC\delem as well, since 
$\delem_{A\times B} = \delem_A\dmul\delem_B$ is $\rmod{}$-intuitionistic as both $\delem_A$ and $\delem_B$ are. 
Denote by \DIn{\Doc} the restriction of \Doc to \In{\CC}{\delem}; then \ple{\DIn\Doc,\rmod{}} is a \RR-graded doctrine. 

\begin{prop}\label{prop:lip-elem}
\ple{\DIn{\Doc},\rmod{}} is an elementary \RR-graded doctrine. 
\end{prop}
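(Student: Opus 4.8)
The plan is to endow $\ple{\DIn{\Doc},\rmod{}}$ with the equality predicates inherited from the \lip structure: for every object $A$ of $\In{\CC}{\delem}$ set $\delta_A := \delem_A$, and then verify the two conditions of \refToDef{lelementary} for the underlying \pl doctrine $\DIn{\Doc}$. First I would check that this assignment is well-typed. Since $A$ lies in $\In{\CC}{\delem}$, so does $A\times A$, because products of $\rmod{}$-intuitionistic objects are again $\rmod{}$-intuitionistic, as noted just before the statement; hence $\DIn{\Doc}(A\times A) = \Doc(A\times A)$ and $\delem_A$ is a genuine element of this fibre. Recall also that $\ple{\DIn{\Doc},\rmod{}}$ is already an $\RR$-graded doctrine, so only the elementary structure has to be added.

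Condition (1) of \refToDef{lelementary}, that is $\funit_A \order \DReIdx{\Doc}{\Delta_A}(\delem_A)$, is exactly the reflexivity of the $\Doc$-distance $\delem_A$. For condition (2), given $X$ in $\In{\CC}{\delem}$ and $\alpha$ in $\Doc(X\times A)$, I would invoke axiom \refToDefItem{affine-lip}{1} of the \lip doctrine $\Doc$: there is $\res\in\RSet$ such that
\[
\DReIdx{\Doc}{\ple{\pi_1,\pi_2}}(\alpha)\fmul \rmod{\res}\DReIdx{\Doc}{\ple{\pi_2,\pi_3}}(\delem_A)\order \DReIdx{\Doc}{\ple{\pi_1,\pi_3}}(\alpha).
\]
The one place where the restriction to $\In{\CC}{\delem}$ is used is now: since $A$ is an object of $\In{\CC}{\delem}$, the distance $\delem_A$ is $\rmod{}$-intuitionistic, so $\delem_A \order \rmod{\res}\delem_A$ in $\Doc(A\times A)$ for this (indeed, any) $\res$. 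Reindexing along $\ple{\pi_2,\pi_3}$ and using that $\rmod{\res}$ is monotone and stable under reindexing gives $\DReIdx{\Doc}{\ple{\pi_2,\pi_3}}(\delem_A) \order \rmod{\res}\DReIdx{\Doc}{\ple{\pi_2,\pi_3}}(\delem_A)$; combining this with monotonicity of $\fmul$ and the displayed inequality yields
\[
\DReIdx{\Doc}{\ple{\pi_1,\pi_2}}(\alpha)\fmul \DReIdx{\Doc}{\ple{\pi_2,\pi_3}}(\delem_A)\order \DReIdx{\Doc}{\ple{\pi_1,\pi_3}}(\alpha),
\]
which is condition (2). This establishes that $\ple{\DIn{\Doc},\rmod{}}$ is elementary with equality predicates $\delem_A$.

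I do not expect a genuine obstacle here: the argument is a short calculation whose whole content is the observation that, on the $\rmod{}$-intuitionistic objects, the graded substitutivity property characterising \lip doctrines degenerates into the plain linear substitutivity of \refToDef{lelementary} --- the modality can simply be absorbed. The only mild care required is bookkeeping: making sure the chosen predicates and all the reindexed elements appearing in the argument actually live in fibres of $\DIn{\Doc}$, which follows from the fact that $\In{\CC}{\delem}$ is closed under finite products.
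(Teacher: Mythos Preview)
Your proof is correct and follows essentially the same approach as the paper: take $\delta_A := \delem_A$, get reflexivity from the distance axioms, and derive the linear substitutivity of \refToDef{lelementary} from the graded one by using that $\delem_A$ is $\rmod{}$-intuitionistic on objects of $\In{\CC}{\delem}$ to absorb the modality. The paper's argument is the same chain of inequalities, just written a bit more tersely.
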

\begin{proof}
We just have to prove that \ple{\DIn\Doc,\rmod{}} is elementary according to \refToDef{lelementary}. 
Since \ple{\Doc,\rmod{},\delem} is \RR-\lip, we know that, for every object $A$ in $\In\CC\delem$, 
$\delem_A \in \Doc(A\times A) = \DIn\Doc(A\times A)$ and $\delem_A$ is reflexive as it is a $\Doc$-distance. 
The substitutivity property follows since $\delem_A$ is $\rmod{}$-intuitionistic by definition of \In\CC\delem and by the \RR-sustitutivity property of \RR-\lip doctrines as detailed below. 
Indeed, we have that, 
for every $\alpha \in \DIn\Doc(X\times A) = \Doc(X\times A)$, 
there is $\res\in \RSet$, such that 
\[
\DReIdx{\Doc}{\ple{\pi_1,\pi_2}}(\alpha)\fmul\DReIdx{\Doc}{\ple{\pi_2,\pi_3}}(\delem_A) \order 
\DReIdx{\Doc}{\ple{\pi_1,\pi_2}}(\alpha)\fmul\rmod\res\DReIdx{\Doc}{\ple{\pi_2,\pi_3}}(\delem_A) \order 
\DReIdx{\Doc}{\ple{\pi_1,\pi_3}}(\alpha) \qedhere
\]
\end{proof}

Indeed, if \oneAr{\ple{F,f}}{\ple{\Doc,\rmod{},\delem}}{\ple{\aDoc,\armod{},\adelem}} is a 1-arrow in \EGMD\RR and $\delem_A$ is $\rmod{}$-intuitionistic, then 
$\adelem_{FA}$ is $\armod{}$-intuitionistic, as 
$\adelem_{FA} =  f_{A\times A}(\delem_A) \order f_{A\times A}(\rmod{\res}\delem_A) = \armod{\res}f_{A\times A}(\delem_A) = \armod{\res} \adelem_{FA}$. 

Thanks to this observation, 
we can easily extend the construction above to a 2-funtor \fun{\LEFun}{\EGMD\RR}{\ElGMD\RR}, which just restricts 1-arrows and 2-arrows to the subcategories of shape \In\CC\delem introduced above. 
Finally, we get the following theorem. 

\begin{thm}\label{thm:elem-comonad}
The 2-functor \fun{\ELFun}{\ElGMD\RR}{\EGMD{\RR}} is 2-comonadic: 
\begin{enumerate}
\item $\LEFun$ is the right 2-adoint of $\ELFun$ and 
\item $\ElGMD\RR$ is isomorphic to the 2-category of coalgebras for the KZ 2-comonad $\ELFun\circ\LEFun$. 
\end{enumerate}
\end{thm}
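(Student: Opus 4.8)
The plan is to mirror closely the structure of the proof of \refToThm{graded-adj}, since \refToThm{elem-comonad} is its restriction along the inclusions in \refToDia{diagram}, and to reuse as much as possible the unit/counit data already described. First I would exhibit explicit unit and counit 2-natural transformations for the putative adjunction $\ELFun\dashv\LEFun$. For an elementary \RR-graded doctrine \ple{\Doc,\rmod{}} the counit component should be built from the inclusion functor $\In{\CC}{\delta}\hookrightarrow\CC$: note that, by \refToProp{graded-left-adj}, every equality predicate $\delta_A$ is $\rmod{}$-intuitionistic, so $\In{\CC}{\delta}=\CC$ and the counit $\ELFun\circ\LEFun\to\Id{\ElGMD\RR}$ is in fact an identity. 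For an \RR-\lip doctrine \ple{\Doc,\rmod{},\delem} the unit component $\Id{\EGMD\RR}\to\LEFun\circ\ELFun$ is the inclusion of \ple{\DIn{\Doc},\rmod{}} (which is elementary by \refToProp{lip-elem}, hence sits in $\EGMD\RR$ with equality $\delem$ restricted to \In{\CC}{\delem}) back into \ple{\Doc,\rmod{},\delem}: it consists of the inclusion functor $\In{\CC}{\delem}\hookrightarrow\CC$ together with the identity on the fibres, and this is a well-defined 1-arrow in \EGMD\RR because it trivially preserves the distance.

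Next I would check the triangular identities. Because the counit is an identity and the unit is, on the nose, the subcategory inclusion with identity fibre-action, both triangles reduce to the observation that composing $\In{\CC}{\delem}\hookrightarrow\CC$ with the ``forget to $\In{\CC}{\delta}$'' functor gives the identity — exactly as in the two commuting triangles drawn in the proof of \refToThm{graded-adj}. This establishes item (1). For item (2) I would produce the comparison 2-functor $\KFr\colon\ElGMD\RR\to\EGMD\RR^{\shopr{T}}$ (with $\shopr{T}=\ELFun\circ\LEFun$) sending an elementary \RR-graded doctrine to the coalgebra given by its unit, and an inverse 2-functor: given a $\shopr{T}$-coalgebra \oneAr{\ple{F,f}}{\ple{\Doc,\rmod{},\delem}}{\ple{\DIn{\Doc},\rmod{}}}, the coalgebra law forces $f$ to be an identity and $F$ to be the inclusion with $\CC=\In{\CC}{\delem}$, so $\delem_A$ is $\rmod{}$-intuitionistic for every $A$, whence \ple{\Doc,\rmod{}} is elementary; this assignment is clearly the inverse of $\KFr$ on objects, and the identity on 1- and 2-arrows since a coalgebra morphism is precisely a morphism of elementary doctrines. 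Finally, the KZ-ness of $\shopr{T}$ is proved exactly as \refToProp{kz}: any coalgebra structure on \ple{\Doc,\rmod{}} is a left adjoint to the counit component, because the inclusion $\In{\CC}{\delta}\hookrightarrow\CC$ has the identity as left adjoint whenever $\In{\CC}{\delta}=\CC$, which is the case precisely when the coalgebra exists.

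The main obstacle I anticipate is not in the abstract 2-categorical bookkeeping, which is light, but in the bookkeeping about which subcategory one actually lands in. The delicate point is the interplay between the condition ``$\delem_A$ is $\rmod{}$-intuitionistic'' defining \In{\CC}{\delem} and the condition ``$\delta_A$ is a left adjoint'' defining elementariness: one must check carefully that restricting \Doc to \In{\CC}{\delem} really yields an \emph{elementary} \RR-graded doctrine in the strict sense of \refToDef{lelementary} (done in \refToProp{lip-elem}), that the Beck--Chevalley-type naturality built into the 1-arrows of \ElGMD\RR — namely $\Ex_{\id{FX}\times\Delta_{FA}} \circ f_{X\times A} = f_{X\times A\times A}\circ \Ex_{\id{X}\times \Delta_A}$ — is automatically satisfied by the inclusion and by $\LEFun$ of an arbitrary \EGMD\RR-morphism, and that $\LEFun$ is well-defined on 1-arrows, i.e. that an \EGMD\RR-morphism \ple{F,f} restricts to the intuitionistic subcategories, which is the content of the displayed computation $\adelem_{FA}=f_{A\times A}(\delem_A)\order f_{A\times A}(\rmod{\res}\delem_A)=\armod{\res}\adelem_{FA}$ just before the statement. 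Once those compatibility checks are in place, the proof is a transcription of the proof of \refToThm{graded-adj} with ``$\DMet{\Doc}{\rmod{}}$'' replaced by ``$\DIn{\Doc}$'' and with the counit being trivial.
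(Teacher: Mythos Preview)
Your overall strategy coincides with the paper's (exhibit unit and counit, check triangle identities, then identify coalgebras), but you have the unit and counit interchanged, and as written the two natural transformations do not even type-check. For the adjunction $\ELFun\dashv\LEFun$ with $\ELFun:\ElGMD\RR\to\EGMD\RR$, the unit is $\Id{\ElGMD\RR}\Rightarrow\LEFun\circ\ELFun$ (an endotransformation on $\ElGMD\RR$) and the counit is $\ELFun\circ\LEFun\Rightarrow\Id{\EGMD\RR}$ (an endotransformation on $\EGMD\RR$). You instead place the counit in $\ElGMD\RR$ and the unit in $\EGMD\RR$; moreover your ``unit'' $\Id{\EGMD\RR}\to\LEFun\circ\ELFun$ cannot exist because $\LEFun\circ\ELFun$ is an endofunctor on $\ElGMD\RR$, not on $\EGMD\RR$. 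The paper does exactly what you describe but with the correct labelling: since $\LEFun\circ\ELFun$ is the identity on $\ElGMD\RR$, the \emph{unit} is the identity; the \emph{counit} at an \RR-\lip doctrine \ple{\Doc,\rmod{},\delem} is the inclusion $\ple{I^\Doc,i^\Doc}:\ple{\DIn\Doc,\rmod{},\delem}\to\ple{\Doc,\rmod{},\delem}$.

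Once the labels are corrected your argument goes through and is essentially the paper's. In particular your coalgebra paragraph is right: a $\shopr T$-coalgebra $\ple{F,f}:\ple{\Doc,\rmod{},\delem}\to\ple{\DIn\Doc,\rmod{}}$ composed with the counit (the inclusion) must be the identity, forcing $F$ and $f$ to be identities and hence $\CC=\In{\CC}{\delem}$, i.e.\ every $\delem_A$ is $\rmod{}$-intuitionistic and \ple{\Doc,\rmod{}} is elementary. The paper also observes that this uniqueness of the coalgebra structure already yields the KZ property directly, so there is no need to invoke \refToProp{kz} separately.
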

\begin{proof} (Sketched) 
Since $\LEFun\circ\ELFun$ is the identity, the unit is the identity as well, while 
the counit is $\ple{I^\Doc,i^\Doc}:\ple{\DIn{\Doc},\rmod{},\delem} \to \ple{\Doc,\rmod{},\delem}$,  
where $I^\Doc$ is the  inclusion of $\In{\CC}{\delem}$ into $\CC$ and $i^\Doc$ is the identity. 
Then, the two triangular identities trivially hold, proving the first item. 
Towards a proof of the second one, 
let $\ple{F,f}$ be a coalgebra on $\ple{\Doc,\rmod{},\delem}$, then the condition $\ple{I^\Doc,i^\Doc}\circ\ple{F,f}=\ple{\Id{},\id{}}$ 
leads to $\ple{F,f} = \ple{I^\Doc,i^\Doc} = \ple{\Id{},\id{}}$, since $I^\Doc$ is an inclusion and $i^\Doc$ is the identity. 
This in particular proves that the 2-comonad $\ELFun\circ\LEFun$ is KZ. 
Moreover, this implies that, 
for every object $A$ in the base of $\Doc$, 
$\delem_A$ is $\rmod{}$-intuitionistic, hence, $\ple{\Doc,\rmod{},\delem}$ already is in $\ElGMD\RR$.
The converse inclusion is straightforward. 
\end{proof}

Composing the 2-adjuntions in   \refToThm{graded-adj} and \refToThm{elem-comonad} we get a 2-adjunction between \GMD\RR and \ElGMD\RR.  Although in general 2-comonadic 2-adjunctions do not compose, this 2-adjunction turns out to be 2-comonadic and as it can be observed  following arguments similar to those already used in  \refToThm{graded-adj} and \refToThm{elem-comonad}
\[\xymatrix{
\GMD\RR \ar[r]<+.8ex>^{\GMEFun} \ar@{}[r]|{\top} & \ar[l]<+.8ex> \EGMD\RR  \ar[r]<+.8ex>^{\LEFun} \ar@{}[r]|{\top}& \ar@{_(->}[l]<+.8ex> \ElGMD\RR \\
\PD\ar@{_(->}[u] \ar[rr]<+.8ex> \ar@{}[rr]|{\top} && \ar@{_(->}[ll]<+.8ex> \ED \ar@{_(->}[u] \ar@{_(->}[ul]
}\]

The square on the left is \refToDia{diagram}. The right triangle is commutative. This is due to the fact that the composition of $\LEFun$ with the inclusion of $\ElGMD\RR$ into $\EGMD\RR$ is the identity on $\ElGMD\RR$.

\section{Richer fragments of Linear Logic}
\label{sect:altredoc}

The calculus \ERPLL, introduced 
in \refToSect{grade-syntax}, 
is the minimal core calculus for quantitative equality, as  
it only considers connectives necessary to deal with it, namely,  $\Ltensor,\Lone$ and $\Lbang_\res$.
This section extends quantitative equality to larger fragments of  Linear Logic, introducing other connective in \ERPLL. 

Each new connective comes with usual rules, hence we do not report them. 
On the other hand, we have to extend the definition of the function $\gr$ to new connectives, in order to 
determine resources they require to derive substitutions. 

Since constants do not depend on variables, there is nothing to substitute, 
hence the cost to derive a substitution is zero. 
Thus we set 
$$\gr(\Lzero,x)=\gr(\Ltop,x)=\rzero$$
A substitution in a formula built  out of binary multiplicative connectives costs the sum of the resources that the two subformulas need to derive the substitution, because both of them can be subsequently used.
Thus we set 
\[ 
\gr(\form\multimap\aform,x)=\gr(\form,x)\rplus\gr(\aform,x)
\]
A substitution in a formula built out of quantifiers costs the same resources as the subformula, only if the substituted and the quantified variable differs, otherwise the cost is zero. 
Thus we set 
\[
\gr(\forall z. \form,x)=\gr(\exists z. \form,x)= \begin{cases} 
\gr(\form,x)&\mbox{ if } x\ne z\\ 
\rzero&\mbox{ if } x=z
\end{cases}
\] 
This is because, 
since $z$ is bounded in $\forall z.\form$ and $\exists z.\form$, substitutions propagate to subformulas only if the substituted variable is different from $z$, otherwise $\forall z.\form$ and $\exists z.\form$ behave as constants. 

The extension of \ERPLL by additive connectives requires additional structure on \RR. 
A straightforward sufficient condition is that \RR has binary suprema denoted by $\vee$. 
Then, the cost of a substitution in a formula built out of binary additive connectives is the supremum of the resources that the two subformulas need to derive a substitution, 
because only one of them  can be subsequently used. 
Thus we set 
\[
\gr(\form\,\&\,\aform,x)=\gr(\form\oplus\aform,x)= \gr(\form,x)\vee \gr(\aform,x)
\] 
When all the connectives above are considered we abbreviate the resulting calculus by \IERLL. 

A way to obtain classical calculi is by adding a constant $\bot$ which is dualising, namely, it satisfies the entailment $\lcons{(\form\multimap\bot)\multimap\bot}{\form}$. 
Being a constant, $\bot$ is such that  $\gr(\bot,x) = \rzero$.
The classical version of \IERLL is denoted by \ERLL.

It is straightforward to define $\RR$-graded doctrines that correspond to the various fragments of (classical or intuitionistic) Linear Logic.  
Here we consider only the following. 

\begin{defi}\label{def:MLL-docs}
A \dfn{multiplicative (linear) doctrine} is a \pl doctrine $\ple{\Doc,\fmul,\funit}$ such that
\begin{itemize}
\item for every $A$ and every $\alpha,\beta,\gamma$ in $P(A)$ there is $\alpha\multimap \beta$ in $\Doc(A)$ such that $\gamma\le \alpha\multimap \beta$ if and only if $\gamma\fmul\alpha\le \beta$
\item for every \fun{f}{X}{A}  and every $\alpha,\beta$ in $\Doc(A)$, it holds 
 $\DReIdx{\Doc}{f}(\alpha\multimap\beta)=\DReIdx{\Doc}{f}(\alpha)\multimap\DReIdx{\Doc}{f}(\beta)$;
\end{itemize}
A \dfn{multiplicative-additive  (linear) doctrine} is a multiplicative linear  doctrine $\ple{\Doc,\fmul,\funit}$ such that
$\fun{\Doc}{\CC\op}{\Pos}$ factors through the category of lattices.

A multiplicative linear  doctrine $\Doc$ \dfn{has quantifiers} if for every projection $\pi:A\times B\to B$ the map $\Doc_\pi$ has a left adjoint $\Ex_\pi$ and a right adjoint $\Al_\pi$ and these are natural in $A$ (this naturality condition is sometimes called Beck-Chevalley condition). 

A multiplicative linear  doctrine is \dfn{classical} if for each object $A$ 
there is $\bot_A$ in $\Doc(A)$ such that $(\alpha\multimap \bot_A)\multimap \bot_A=\alpha$, which is preserved by reindexing. 

A \dfn{first order linear  doctrine} is a multiplicative-additive linear  doctrine with quantifiers. 
\end{defi}

The syntactic doctrine associated with the $(\Ltensor,\Lone,\multimap)$-fragment of ILL is a multiplicative linear doctrine;  
the one associated with the $(\Ltensor,\Lone,\multimap, \oplus,\Lzero,\LLand,\Ltop)$-fragment of ILL is a multiplicative-additive linear doctrine; 
the one associated with full first order ILL is a first order linear doctrine.
Similarly, syntactic doctrines associated with classical variants of the above fragments provides classical versions of the corresponding doctrines. 

The following propositions show that the additional structure of \RR-graded doctrines modelling larger fragments of LL is inherited by their \lip completion. 

\begin{prop}\label{prop:fragments} 
Suppose $\ple{\Doc, \rmod{}}$ is an $\RR$-graded doctrine and consider its \lip completion \fun{\DMet{\Doc}{\rmod{}}}{\Met{\Doc}{\rmod{}}\op}{\Pos}.
\begin{itemize}
\item If $\Doc$ is multiplicative, so is $\DMet{\Doc}{\rmod{}}$;
\item if $\Doc$ is multiplicative and classical, so is $\DMet{\Doc}{\rmod{}}$;
\item if $\Doc$ is multiplicative and has quantifier, so is $\DMet{\Doc}{\rmod{}}$.
\end{itemize}
\end{prop}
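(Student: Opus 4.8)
The strategy is to transport each piece of structure (linear implication $\multimap$, a dualising object $\bot$, quantifiers $\Ex_\pi,\Al_\pi$) from $\Doc$ to its \lip completion $\DMet{\Doc}{\rmod{}}$ by checking, in each case, that the relevant operation restricts to the subdoctrine of \RR-graded descent data $\Des{\dist}$, and that the universal property (adjunction, or the involution equation) is inherited since it already holds in $\Doc$ and the order on $\DMet{\Doc}{\rmod{}}\ple{A,\dist}$ is that of $\Doc(A)$. So the work is entirely: \emph{(i)} show that $\alpha\multimap\beta$, $\bot_A$, $\Ex_\pi$, $\Al_\pi$ land in the appropriate fibres of $\DMet{\Doc}{\rmod{}}$; \emph{(ii)} observe the structural equations transfer for free.

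\textbf{Multiplicative case.} Given an object $\ple{A,\dist}$ in $\Met{\Doc}{\rmod{}}$ and $\alpha,\beta\in\Des{\dist}(A)$, I would show $\alpha\multimap\beta\in\Des{\dist}(A)$, i.e.\ that there is $\res\in\RSet$ with $\DReIdx{\Doc}{\pi_1}(\alpha\multimap\beta)\fmul\rmod{\res}\dist\order\DReIdx{\Doc}{\pi_2}(\alpha\multimap\beta)$. Since reindexing preserves $\multimap$ (second clause of \refToDef{MLL-docs}), $\DReIdx{\Doc}{\pi_i}(\alpha\multimap\beta)=\DReIdx{\Doc}{\pi_i}(\alpha)\multimap\DReIdx{\Doc}{\pi_i}(\beta)$; then using $\res\tracks{\dist}\alpha$ with witness $\res_\alpha$ ``on the left'' (rewrite as $\rmod{\res_\alpha}\dist\fmul\DReIdx{\Doc}{\pi_1}(\alpha)\order\DReIdx{\Doc}{\pi_2}(\alpha)$) and $\res_\beta\tracks{\dist}\beta$, the Galois-connection manipulations plus contraction of $\rmod{}$ give $\res_\alpha\rplus\res_\beta$ as a witness for $\alpha\multimap\beta$. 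Concretely one curries: from $\gamma\fmul(\DReIdx{\Doc}{\pi_1}(\alpha)\multimap\DReIdx{\Doc}{\pi_1}(\beta))\fmul\rmod{\res_\alpha\rplus\res_\beta}\dist\fmul\DReIdx{\Doc}{\pi_1}(\alpha)$ one pushes one copy of $\dist$ (via $\res_\alpha$) onto $\alpha$ to reach $\DReIdx{\Doc}{\pi_1}(\alpha)\multimap\DReIdx{\Doc}{\pi_1}(\beta)$ applied to $\DReIdx{\Doc}{\pi_2}(\alpha)$... — I will do this carefully, but it is routine. Once $\alpha\multimap\beta$ is in $\Des{\dist}(A)$, the adjunction $\gamma\order\alpha\multimap\beta \iff \gamma\fmul\alpha\order\beta$ holds in $\DMet{\Doc}{\rmod{}}\ple{A,\dist}$ because its order and its $\fmul$ are inherited from $\Doc(A)$, and reindexing along a $\Met{\Doc}{\rmod{}}$-arrow is just reindexing in $\Doc$, so stability of $\multimap$ under reindexing is automatic.

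\textbf{Classical and quantifier cases.} If $\Doc$ is classical, for $\ple{A,\dist}$ I would show $\bot_A\in\Des{\dist}(A)$: since $\bot_A$ is preserved by reindexing, $\DReIdx{\Doc}{\pi_i}(\bot_A)=\bot_{A\times A}$, so $\DReIdx{\Doc}{\pi_1}(\bot_A)\fmul\rmod{\rzero}\dist\order\DReIdx{\Doc}{\pi_1}(\bot_A)\order\bot_{A\times A}=\DReIdx{\Doc}{\pi_2}(\bot_A)$ using weakening, giving $\rzero$ as witness; the involution $(\alpha\multimap\bot_A)\multimap\bot_A=\alpha$ is an equation in $\Doc(A)$ and hence holds in the subposet. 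For quantifiers: given a projection $\fun{\pi}{\ple{A,\dist}\times\ple{B,\adist}}{\ple{B,\adist}}$ in $\Met{\Doc}{\rmod{}}$ — note $\ple{A,\dist}\times\ple{B,\adist}=\ple{A\times B,\dist\dmul\adist}$ by \refToProp{met-cat} — and $\gamma\in\Des{\dist\dmul\adist}(A\times B)$, I claim $\Ex_\pi\gamma\in\Des{\adist}(B)$. Here the key is naturality of $\Ex_\pi$ in $A$ (the Beck--Chevalley-style condition in \refToDef{MLL-docs}): from $\res\tracks{\dist\dmul\adist}\gamma$, i.e.\ $\DReIdx{\Doc}{\pi_1}(\gamma)\fmul\rmod{\res}(\dist\dmul\adist)\order\DReIdx{\Doc}{\pi_2}(\gamma)$ over $A\times B\times A\times B$, I reindex so that the $\dist$-part becomes reflexivity-trivial along the $A$-diagonal and only the $\adist$-part survives, then apply $\Ex$ along the $A$-coordinate and use naturality to commute it past the reindexings, landing at $\DReIdx{\Doc}{\pi_1}(\Ex_\pi\gamma)\fmul\rmod{\res}\adist\order\DReIdx{\Doc}{\pi_2}(\Ex_\pi\gamma)$; the same argument with $\Al_\pi$ in place of $\Ex_\pi$ (and the adjunction run the other way, using that $\dist$ is affine so $\rmod{\rzero}\dist\order\funit$ to discard it) handles the right adjoint. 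The adjunctions $\Ex_\pi\dashv\DReIdx{\DMet{\Doc}{\rmod{}}}{\pi}\dashv\Al_\pi$ and their naturality in $A$ then hold in $\DMet{\Doc}{\rmod{}}$ because everything is computed as in $\Doc$.

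\textbf{Main obstacle.} The genuinely delicate point is the quantifier case: I must set up the reindexings along the $A$-diagonal so that the $\dist$-component of $\dist\dmul\adist$ collapses via reflexivity of $\dist$ (leaving a clean $\rmod{\res}\adist$) \emph{and} simultaneously arrange the variables so that naturality of $\Ex_\pi$/$\Al_\pi$ in $A$ actually applies to the square I need — getting the bookkeeping of the six projections of $A\times B\times A\times B$ right, and checking the right-adjoint direction really does only need affineness of $\dist$ and weakening (not something stronger), is where the care goes. The multiplicative and classical parts I expect to be short once the descent-data computations for $\multimap$ are written out.
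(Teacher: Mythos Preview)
Your approach is essentially the paper's: show each operator restricts to descent data with an explicit witness ($\res_\alpha\rplus\res_\beta$ for $\multimap$, $\rzero$ for $\bot$, and the original witness for quantifiers via naturality and a Frobenius-type step), then observe the universal properties are inherited because the order, $\fmul$, and reindexing in $\DMet{\Doc}{\rmod{}}$ are those of $\Doc$. One point to flag in your $\multimap$ sketch: after currying you must show $\DReIdx{\Doc}{\pi_2}(\alpha)\fmul\DReIdx{\Doc}{\pi_1}(\alpha\multimap\beta)\fmul\rmod{\res_\alpha\rplus\res_\beta}\dist\le\DReIdx{\Doc}{\pi_2}(\beta)$, and to feed $\DReIdx{\Doc}{\pi_2}(\alpha)$ into $\DReIdx{\Doc}{\pi_1}(\alpha)\multimap\DReIdx{\Doc}{\pi_1}(\beta)$ you need the \emph{reverse} descent step $\DReIdx{\Doc}{\pi_2}(\alpha)\fmul\rmod{\res_\alpha}\dist\le\DReIdx{\Doc}{\pi_1}(\alpha)$, which requires symmetry of $\dist$ --- the paper invokes this explicitly, while your sketch (which starts from $\DReIdx{\Doc}{\pi_1}(\alpha)$ rather than $\DReIdx{\Doc}{\pi_2}(\alpha)$) glosses over it.
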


\begin{proof}
 Suppose \Doc is multiplicative. 
Take $\ple{A,\dist}$ in $\Met{\Doc}{\rmod{}}$ and $\alpha,\beta$ in $\Des{\dist}(A)$, hence 
there are $\res,\ares$ in $\RSet$ such that $\DReIdx{\Doc}{\pi_1}(\alpha)\fmul \rmod{\res}\dist\le \DReIdx{\Doc}{\pi_2}(\alpha)$ and $\DReIdx{\Doc}{\pi_1}(\beta)\fmul \rmod{\ares}\dist\le \DReIdx{\Doc}{\pi_2}(\beta)$. 
Then, using contraction of $\rmod{}$ and symmetry of $\dist$, we have 
$\DReIdx{\Doc}{\pi_2}(\alpha)\fmul\DReIdx{\Doc}{\pi_1}(\alpha\multimap\beta)\fmul \rmod{\res\rplus\ares}\dist\le 
\DReIdx{\Doc}{\pi_2}(\alpha)\fmul\DReIdx{\Doc}{\pi_1}(\alpha\multimap\beta)\fmul \rmod{\res}\dist\fmul \rmod{\ares}\dist\le 
\DReIdx{\Doc}{\pi_1}(\alpha)\fmul\DReIdx{\Doc}{\pi_1}(\alpha\multimap\beta)\fmul \rmod{\ares}\dist\le 
\DReIdx{\Doc}{\pi_1}(\beta)\fmul \rmod{\ares}\dist\le \DReIdx{\Doc}{\pi_2}(\beta)$, showing that $\DReIdx{\Doc}{\pi_1}(\alpha\multimap\beta)\fmul \rmod{\res\rplus\ares}\dist\le \DReIdx{\Doc}{\pi_2}(\alpha\multimap\beta)$. This makes $\alpha\multimap\beta$ an element of $\Des{\dist}$.

 Suppose \Doc is multiplicative and classical and 
take $\ple{A,\dist}$ in $\Met{\Doc}{\rmod{}}$.  
By weakening of $\rmod{}$, it holds 
$\DReIdx{\Doc}{\pi_1}(\bot_A)\fmul \rmod{\rzero}\dist\le \DReIdx{\Doc}{\pi_1}(\bot_A)\fmul\funit =\DReIdx{\Doc}{\pi_1}(\bot_A) = \bot_{A\times A} = \DReIdx{\Doc}{\pi_2}(\bot_A)$ so $\bot_A$ is in $\Des{\dist}(A)$.

 Suppose $\Doc$ is multiplicative with quantifiers and take $\gamma$ in $\Des{\dist\dmul\adist}(A\times B)$. 
By naturality $\DReIdx{\Doc}{\pi_1}(\Ex_{\pi_1}\alpha)\fmul \rmod{\res}\adist=
\Ex_{\ple{\pi_1,\pi_2}}(\DReIdx{\Doc}{\ple{\pi_1,\pi_3}}(\alpha))\fmul \rmod{\res}\adist \le 
\Ex_{\ple{\pi_1,\pi_2}}(\DReIdx{\Doc}{\ple{\pi_1,\pi_3}}(\alpha)\fmul \rmod{\res}\DReIdx{\Doc}{\ple{\pi_1,\pi_2}}(\adist))\le 
\Ex_{\ple{\pi_1,\pi_2}}\DReIdx{\Doc}{\ple{\pi_2,\pi_3}}\alpha=
\DReIdx{\Doc}{\pi_2}\Ex_{\pi_1}\alpha$ for some $\res$ in $\RR$. 
So descent data are closed under left adjoints along projections. 
Similarly, for right adjoints we have 
$\DReIdx{\Doc}{\pi_1}(\Al_{\pi_1}\alpha)\fmul \rmod{\res}\adist=
\Al_{\ple{\pi_1,\pi_2}}(\DReIdx{\Doc}{\ple{\pi_1,\pi_3}}(\alpha))\fmul \rmod{\res}\adist \le 
\Al_{\ple{\pi_1,\pi_2}}(\DReIdx{\Doc}{\ple{\pi_1,\pi_3}}(\alpha)\fmul \rmod{\res}\DReIdx{\Doc}{\ple{\pi_1,\pi_2}}(\adist))\le 
\Al_{\ple{\pi_1,\pi_2}}\DReIdx{\Doc}{\ple{\pi_2,\pi_3}}\alpha=
\DReIdx{\Doc}{\pi_2}\Al_{\pi_1}\alpha$ for some $\res$ in $\RR$.  
\qedhere
\end{proof}

\begin{prop}\label{prop:fragments2} 
Let $\ple{\Doc, \rmod{}}$ be an $\RR$-graded doctrine where $\RR$ has finite suprema and consider its \lip completion \fun{\DMet{\Doc}{\rmod{}}}{\Met{\Doc}{\rmod{}}\op}{\Pos}.
\begin{itemize}
\item if $\Doc$ is multiplicative-additive, so is $\DMet{\Doc}{\rmod{}}$;
\item if  $\Doc$ is first order linear, so is $\DMet{\Doc}{\rmod{}}$.
\end{itemize}
\end{prop}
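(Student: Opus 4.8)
The plan is to build on \refToProp{fragments}, which already establishes that the multiplicative closed structure, the dualising constant $\bot$ in the classical case, and the quantifiers of $\Doc$ all restrict to the \lip completion $\DMet{\Doc}{\rmod{}}$. For the first bullet it then remains to show that the \emph{additive} part of $\Doc$ — the binary meet $\LLand$ and join $\oplus$ together with their units $\Ltop$ and $\Lzero$ — restricts to each fibre $\DMet{\Doc}{\rmod{}}\ple{A,\dist} = \Des{\dist}(A)$ and is preserved by reindexing; the second bullet will be immediate once the first is in place.

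I would begin with the units. Since $\Doc$ is multiplicative-additive, reindexing is a homomorphism for the additive structure, so $\DReIdx{\Doc}{\pi_i}(\Ltop_A) = \Ltop_{A\times A}$ and $\DReIdx{\Doc}{\pi_i}(\Lzero_A) = \Lzero_{A\times A}$. Hence $\rzero\tracks{\dist}\Ltop_A$ trivially, because $\Ltop_{A\times A}$ is the top of $\Doc(A\times A)$; and $\rzero\tracks{\dist}\Lzero_A$ because $\Lzero_{A\times A}\fmul\rmod{\rzero}\dist \order \Lzero_{A\times A}$, which follows from the closed structure of $\Doc$ (equivalently, $\Lzero_{A\times A}\order \rmod{\rzero}\dist\multimap\Lzero_{A\times A}$, as $\Lzero$ is the bottom element). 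So $\Ltop_A$ and $\Lzero_A$ lie in $\Des{\dist}(A)$.

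Next the binary operations. Given $\alpha,\beta\in\Des{\dist}(A)$ with $\res\tracks{\dist}\alpha$ and $\ares\tracks{\dist}\beta$, I claim that $\res\vee\ares$ — which exists since $\RR$ has finite suprema — tracks both $\alpha\LLand\beta$ and $\alpha\oplus\beta$. For the meet, contravariance of $\rmod{}$ (\refToDefItem{graded}{7}) gives $\rmod{\res\vee\ares}\dist\order\rmod{\res}\dist$ and $\rmod{\res\vee\ares}\dist\order\rmod{\ares}\dist$, so $\DReIdx{\Doc}{\pi_1}(\alpha\LLand\beta)\fmul\rmod{\res\vee\ares}\dist \order \DReIdx{\Doc}{\pi_1}(\alpha)\fmul\rmod{\res}\dist \order \DReIdx{\Doc}{\pi_2}(\alpha)$ and symmetrically $\order\DReIdx{\Doc}{\pi_2}(\beta)$, whence it lies below $\DReIdx{\Doc}{\pi_2}(\alpha\LLand\beta)$. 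For the join I additionally use that $\blank\fmul\gamma$ is a left adjoint (its right adjoint being $\gamma\multimap\blank$) and therefore preserves binary joins, so $\DReIdx{\Doc}{\pi_1}(\alpha\oplus\beta)\fmul\rmod{\res\vee\ares}\dist = \bigl(\DReIdx{\Doc}{\pi_1}(\alpha)\fmul\rmod{\res\vee\ares}\dist\bigr)\oplus\bigl(\DReIdx{\Doc}{\pi_1}(\beta)\fmul\rmod{\res\vee\ares}\dist\bigr)$; each disjunct is $\order\DReIdx{\Doc}{\pi_2}(\alpha\oplus\beta)$ by the previous estimate, and a join of elements below a common bound is below that bound. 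Since $\Des{\dist}(A)$ is a full sub-poset of $\Doc(A)$, the meets and joins just computed are genuinely the fibrewise ones, and reindexing $\DMet{\Doc}{\rmod{}}(f)=\DReIdx{\Doc}{f}$ preserves them because $\DReIdx{\Doc}{f}$ does. Together with \refToProp{fragments} this proves the first bullet.

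For the second bullet I recall that a first order linear doctrine is precisely a multiplicative-additive linear doctrine with quantifiers. So if $\Doc$ is first order linear, then $\DMet{\Doc}{\rmod{}}$ is multiplicative-additive by the first bullet and has quantifiers by the third item of \refToProp{fragments} (the left and right adjoints along the projections of $\Met{\Doc}{\rmod{}}$ land in the descent-data fibres and remain natural), hence is first order linear. I do not anticipate a real obstacle; the only step requiring genuine care is the join case, where the argument essentially uses distributivity of $\fmul$ over $\oplus$ — without left-adjointness of $\blank\fmul\gamma$ one cannot bound $\DReIdx{\Doc}{\pi_1}(\alpha\oplus\beta)\fmul\rmod{\res\vee\ares}\dist$ from above, and indeed $\Des{\dist}(A)$ is neither upward nor downward closed in $\Doc(A)$, so no purely order-theoretic shortcut is available.
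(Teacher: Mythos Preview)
Your proof is correct and follows essentially the same line as the paper's: closure of $\Des{\dist}(A)$ under the additive operations using $\res\vee\ares$ as the tracker, the constants handled as in the classical $\bot$ case, and the second item reduced to the first plus \refToProp{fragments}. The one place where you are more explicit than the paper is the join case, where you spell out that distributivity of $\fmul$ over $\oplus$ (via the closed structure) is what makes the ``similar argument'' go through --- a point the paper leaves implicit.
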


\begin{proof}
Suppose $\Doc$ is multiplicative-additive. 
Take $\ple{A,\dist}$ in $\Met{\Doc}{\rmod{}}$ and $\alpha,\beta$ in $\Des{\dist}(A)$, hence 
there are $\res,\ares$ in $\RSet$ such that $\DReIdx{\Doc}{\pi_1}(\alpha)\fmul \rmod{\res}\dist\le \DReIdx{\Doc}{\pi_2}(\alpha)$ and $\DReIdx{\Doc}{\pi_1}(\beta)\fmul \rmod{\ares}\dist\le \DReIdx{\Doc}{\pi_2}(\beta)$. 
Then, $\DReIdx{\Doc}{\pi_1}(\alpha\Land \beta)\fmul \rmod{\res\vee\ares}\dist\le(\DReIdx{\Doc}{\pi_1}(\alpha)\fmul \rmod{\res\vee\ares}\dist)\Land (\DReIdx{\Doc}{\pi_1}(\beta)\fmul \rmod{\res\vee\ares}\dist)$. 
Since $\rmod{\res\vee\ares}\dist\le \rmod{\res}\dist$ and  $\rmod{\res\vee\ares}\dist\le \rmod{\ares}\dist$, it follows 
$\DReIdx{\Doc}{\pi_1}(\alpha\Land \beta)\fmul \rmod{\res\vee\ares}\dist\le
(\DReIdx{\Doc}{\pi_1}(\alpha)\fmul \rmod{\res}\dist)\Land (\DReIdx{\Doc}{\pi_1}(\beta)\fmul \rmod{\ares}\dist)\le \DReIdx{\Doc}{\pi_2}(\alpha)\Land \DReIdx{\Doc}{\pi_2}(\beta)$, which proved $\Des{\dist}(A)$ is closed under binary meets. 
By a similar argument proves the closure under binary joins. 
Top and bottom elements belong to $\Des{\dist}(A)$ as one can prove using the same arguments as for $\bot$. The second item follows by previous item and \refToProp{fragments}. \qedhere
\end{proof}

\begin{cor}\label{cor:classic} 
Let $\ple{\Doc, \rmod{}}$ be an $\RR$-graded doctrine where $\RR$ has finite suprema. 
If  \Doc is classical and first order linear, so is  
the \lip completion \fun{\DMet{\Doc}{\rmod{}}}{\Met{\Doc}{\rmod{}}\op}{\Pos}. 
\end{cor}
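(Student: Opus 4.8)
The plan is to obtain the statement as a direct combination of \refToProp{fragments} and \refToProp{fragments2}. Indeed, unwinding \refToDef{MLL-docs}, a classical first order linear doctrine is exactly a multiplicative-additive linear doctrine which moreover has quantifiers and a dualising constant $\bot$, so it suffices to transfer these two features separately to the \lip completion.

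First I would use the hypothesis that $\RR$ has finite suprema together with the second item of \refToProp{fragments2}: since $\Doc$ is first order linear, so is $\DMet{\Doc}{\rmod{}}$. Concretely, this says that each fibre $\Des{\dist}(A)$ is a sublattice of $\Doc(A)$ closed under $\multimap$ and under the adjoints $\Ex_\pi,\Al_\pi$ along projections, with all these operations computed as in $\Doc$ and stable under reindexing; this is precisely what the closure computations in \refToProp{fragments} and \refToProp{fragments2} establish. Next I would invoke the second item of \refToProp{fragments}, which applies because $\Doc$ is in particular multiplicative and classical: for every object $\ple{A,\dist}$ of $\Met{\Doc}{\rmod{}}$ the element $\bot_A$ lies in $\Des{\dist}(A)$, since by weakening of $\rmod{}$ we have $\DReIdx{\Doc}{\pi_1}(\bot_A)\fmul\rmod{\rzero}\dist\order\DReIdx{\Doc}{\pi_1}(\bot_A)\fmul\funit=\bot_{A\times A}=\DReIdx{\Doc}{\pi_2}(\bot_A)$. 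The dualising equation $(\alpha\multimap\bot_A)\multimap\bot_A=\alpha$ and its stability under reindexing then transfer verbatim from $\Doc$ to $\DMet{\Doc}{\rmod{}}$, because every operation occurring in it is just the restriction of the corresponding operation on $\Doc$. Putting the two items together, $\DMet{\Doc}{\rmod{}}$ is both classical and first order linear, as required.

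I do not expect any genuine obstacle here. The only point worth a remark is that the classical negation and the multiplicative-additive/quantifier structure must agree on the subposets $\Des{\dist}(A)$; but this is automatic, since each of them is inherited as a literal restriction of the corresponding structure on $\Doc(A)$, so no new coherence condition arises beyond the closure statements already proved in the two preceding propositions.
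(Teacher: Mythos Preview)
Your proposal is correct and matches the paper's approach exactly: the paper states this as an immediate corollary of \refToProp{fragments} and \refToProp{fragments2} with no separate proof, and your argument simply makes explicit the combination of the ``classical'' item from the former with the ``first order linear'' item from the latter. The only addition you make is the remark that all the structure on $\DMet{\Doc}{\rmod{}}$ is a literal restriction of that on $\Doc$, which is precisely why no further coherence check is needed.
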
 

\refToThm{graded-adj}  shows that the notion of quantitative equality given in this paper is coalgebraic, in the sense that \lip doctrines are the coalgebras of a comonad over the category of graded doctrines. This generalizes a known situation that holds in the non-linear case, where elementary doctrines are the coalgebras of a comonad over the category of primary doctrines. This is summarised by the \refToDia{diagram} recalled below on the left.
\begin{equation}\label{eq:diagrampp} 
\xymatrix{
\GMD{\RR} \ar@<1ex>[r]^{\GMEFun}_-{\top} & \EGMD{\RR}\ar@<1ex>[l] \\ 
\PD \ar[u]\ar@<1ex>[r]_-{\top} & \ED\ar[u] \ar@<1ex>[l]
}
\ \ \ \ 
\xymatrix{
\ED\ar@{^(->}[rr]<-1.2ex>&&\ar[ll]<-1ex>^-{\top}\PD
\\
\HD\ar@{^(->}[u]\ar@{^(->}[rr]<-1.2ex>&&\ar[ll]<-1ex>^-{\top}\FOD\ar@{^(->}[u]
}
\end{equation}
In the non-linear case the coalgebraic nature of equality  behaves well with respect to other connectives or quantifiers that the doctrine may have. In particular if one consider the 2-category \FOD of first order doctrines (\ie those doctrines modelling full first order logic) and its subcategory \HD on hyperdoctrines (\ie those first order doctrines that are also elementary) one shows that \HD is the category of coalgebras of a comonad on \FOD. \refToDia{grafico1} (rewritten above on the right) shows the adjoint situation that determines the mentioned comonad.

Gluing the two diagram together suggests the existence of a cube of 2-categories of doctrines. We devote the rest of the section to complete such a cube.

Denote by \LLL{\RR} the 2-full 2-subcategory of \GMD{\RR} on first order \RR-graded  doctrines 
and 1-arrows that preserve the first order linear  structure. 
An \emph{\RR-quantitative hyperdoctrine} is 
a first order \RR-\lip. 
The 2-full 2-subcategory of \EGMD{\RR} on \RR-quantitative hyperdoctrines is \ELLL{\RR}. 
\refToProp{fragments2} shows that (if \RR has finite suprema) the 2-adjunction between \GMD{\RR} and \EGMD{\RR} restricts to a 2-adjunction between 
\LLL{\RR} and \ELLL{\RR}. 
Then, we can extend the diagrams \ref{eq:diagrampp} to the first order case, obtaining the following: 
\[
\xymatrix{
&\ED\ar[ld]\ar[rr]<-1.2ex>&&\ar[ll]<-1ex>^-{\ \ \top}_{\GMEFun}\PD\ar[ld]\\
\EGMD{\RR}\ar[rr]<-1.2ex>&&\ar[ll]<-1ex>^-{\ \ \top}_{\GMEFun}\GMD{\RR}&\\
&\HD\ar[ld]\ar@{^(->}[uu]|(.62){\hole}\ar[rr]<-1.2ex>&&\ar[ll]<-1ex>^-{\ \ \top}_{\GMEFun\ \ \ \ \ \ \ }\FOD\ar[ld]\ar[uu]\\
\ELLL{\RR}\ar[uu]\ar[rr]<-1.2ex>&&\ar[ll]<-1ex>^-{\ \ \top}_{\GMEFun}\LLL{\RR}\ar[uu]&
}\] 
The embedding of \FOD into  $\LLL{\RR}$ follows from \refToProp{graded-primary}, as doctrines in \FOD are in particular primary and the rest of the structure does not interact with the modality. 
The same applies to the embedding of \HD into $\ELLL{\RR}$, but 
here one has also to notice that, since doctrines in \HD are in particular elementary, they trivially satisfy the axioms of \lip doctrines (\cf \refToDef{affine-lip}).   
\begin{exa}
The $\RRPos$-\lip doctrine \ple{\LIPDoc,\textbf{m},\delem} of \refToEx{liplip} is an $\RRPos$-quantitative hyperdoctrine because 
it is the \lip completion of \ple{\PP_{[0,\infty]},\textbf{m}} (see \refToExItem{provvisorio2}{1}), which is first order linear as $[0,\infty]$ is a quantale, 
and $\RRPos$ has finite suprema. 
\end{exa}

Finally, note that 
any \RR-quantitative hyperdoctrine  
provides a domain for interpreting the \IERLL calculus, it suffices to extend \refToLem{sound} to all the other connectives given the definition of $\gr$ at the beginning of this section.


\section{Related and future works}\label{sect:future}

A syntactic study of equality in non-modal  Linear Logic can be found 
in \cite{Hodges1996LogicFF}. 
Affineness and replicability of equality are derived from the (non-quantitative) substitution rule. 
In \cite{coniglio}, full Linear Logic is considered. Equality is required to be intuitionistic, in the sense that  $x\Leq y\vdash \Lbang(x\Leq y)$  holds,  
implying again affineness and replicability. 
A solution to recover linearity is in \cite{Hodges1996LogicFF} and  
 is an instance of our calculus, when the semiring is $\NN^=$ and the grade of each symbol is $1$. 

The construction in \refToSect{comonad}, like that of \cite{PasqualiF:cofced,EPR}, originates from a series of works \cite{MaiettiME:quofcm,MaiettiME:exacf,MaiettiME:eleqc} on the notion of quotient in an elementary doctrine. 
An interesting direction for future work is the development of a quantitative counterpart of quotients in \lip doctrines. 

Any QET induces a monad on the category of metric spaces and non-expansive maps. 
For simple  QET, models are proved to be the algebras for such a monad  \cite{BacciMPP18,MardarePP16,MardarePP17,Adamek22}. 
This is a strong result attesting that QETs axiomatise the algebras of the monad induced by themselves. 
Similarly, in our setting, we would like to show that, under suitable assumptions, 
an equational theory in \ERPLL
 induces a monad on categories of the form \Met{\Doc}{\rmod{}}, and 
the category of models in $\DMet{\Doc}{\rmod{}}$ for such a theory is (equivalent to) the category of algebras for such a monad. 
That is,  equational theories in \ERPLL axiomatise quantitative algebras where operations are 
\lip maps.

In \cite{DahlqvistN23,DahlqvistN23mfps} the authors propose quantitative equational theories for linear $\lambda$-calculi also with graded modal types, proving these provide internal languages for autonomous categories enriched over certain metric spaces. 
We would like to see whether we can describe these theories in our calculus as well, possibly starting from the example of graded combinators in \refToSect{qet}.

Another possible application of \lip doctrines is the denotational semantics of various forms of \emph{``graded'' $\lambda$-calculi} (see, \eg, \cite{PetricekOM14,AbelB20}). 
In these calculi variables in typing contexts are annotated by grades from a semiring, describing how much the term uses each variable. 
Consequently, arrow types are labelled by a grade describing how much the function uses its input. 
These data can be interpreted in (the base of) a \lip doctrine: 
types are objects, contexts are products and terms are arrows whose \lip constants agree with the grades reported in the context. 
To interpret labelled arrow type, 
we can exploit the structure of a \lip doctrine to define a \emph{graded exponential}, that intuitively collects all \lip arrows having a fixed \lip constant. 

We also plan  to move the ideas discussed in this paper to a proof-relevant setting, studying connections with type theories. 
It is known that every type theory gives rise to a doctrine \cite{MaiettiME:quofcm,MaiettiME:exacf}. 
Similarly, linear dependent type theories, where linear and intuitionistic types are kept separated in contexts
\cite{CERVESATO200219}, 
give rise to a \pl doctrine. 
Studies of Id-Types in linear type theories are available in the literature, but, to our knowledge, they treat Id-Types as intuitionistic types (see \eg, \cite{KrishnaswamiPB15,Vakar}).
There are also dependent type theories using grading \cite{Atkey18,MoonEO21}, but they do not consider Id-Types. 
Therefore, an interesting  direction will be to use the machinery of the present paper, possibly extended to fibrations,  to study  quantitative Id-Types in linear type theories. 

Finally, in \cite{PimentelON21} it is shown that certain proof systems of intuitionistic Linear Logic with subexponentials can be used to model and reason abount concurrent programming under the processes-as-formulas interpretation. 
It would be interesting  to investigate to which extent this paradigm applies to our calculus for quantitative equality and whether \RR-\lip doctrines can be used to build new models for the associated computational paradigm.

\section*{Acknowledgment}
We are very grateful to all the anonymous referees who reviewed this paper. Their suggestions and comments have been really useful for improving our work. 

%
%

\bibliographystyle{alphaurl}
\bibliography{biblio}

\newcommand{\etalchar}[1]{$^{#1}$}
\begin{thebibliography}{dAGH{\etalchar{+}}17}

\bibitem[AB20]{AbelB20}
Andreas Abel and Jean{-}Philippe Bernardy.
\newblock A unified view of modalities in type systems.
\newblock {\em Procedings of {ACM} on Programming Languages},
  4({ICFP}):90:1--90:28, 2020.
\newblock \href {https://doi.org/10.1145/3408972} {\path{doi:10.1145/3408972}}.

\bibitem[Ad{\'{a}}22]{Adamek22}
Jir{\'{\i}} Ad{\'{a}}mek.
\newblock Varieties of quantitative algebras and their monads.
\newblock In Christel Baier and Dana Fisman, editors, {\em Proceedings of the
  37th Annual {ACM/IEEE} Symposium on Logic in Computer Science, {LICS} 2022},
  pages 9:1--9:10. {ACM}, 2022.
\newblock \href {https://doi.org/10.1145/3531130.3532405}
  {\path{doi:10.1145/3531130.3532405}}.

\bibitem[AHS02]{AbramskyHS02}
Samson Abramsky, Esfandiar Haghverdi, and Philip~J. Scott.
\newblock Geometry of interaction and linear combinatory algebras.
\newblock {\em Mathematical Structures in Computer Science}, 12(5):625--665,
  2002.
\newblock \href {https://doi.org/10.1017/S0960129502003730}
  {\path{doi:10.1017/S0960129502003730}}.

\bibitem[Atk18]{Atkey18}
Robert Atkey.
\newblock Syntax and semantics of quantitative type theory.
\newblock In Anuj Dawar and Erich Gr{\"{a}}del, editors, {\em Proceedings of
  the 33rd Annual {ACM/IEEE} Symposium on Logic in Computer Science, {LICS}
  2018}, pages 56--65. {ACM}, 2018.
\newblock \href {https://doi.org/10.1145/3209108.3209189}
  {\path{doi:10.1145/3209108.3209189}}.

\bibitem[BGHP16]{BartheGHP16}
Gilles Barthe, Marco Gaboardi, Justin Hsu, and Benjamin~C. Pierce.
\newblock Programming language techniques for differential privacy.
\newblock {\em {ACM} {SIGLOG} News}, 3(1):34--53, 2016.

\bibitem[BGMZ14]{BrunelGMZ14}
Alo{\"{\i}}s Brunel, Marco Gaboardi, Damiano Mazza, and Steve Zdancewic.
\newblock A core quantitative coeffect calculus.
\newblock In Zhong Shao, editor, {\em Programming Languages and Systems - 23rd
  European Symposium on Programming, {ESOP} 2014}, volume 8410 of {\em Lecture
  Notes in Computer Science}, pages 351--370. Springer, 2014.
\newblock \href {https://doi.org/10.1007/978-3-642-54833-8\_19}
  {\path{doi:10.1007/978-3-642-54833-8\_19}}.

\bibitem[BKOB13]{BartheKOB13}
Gilles Barthe, Boris K{\"{o}}pf, Federico Olmedo, and Santiago~Zanella
  B{\'{e}}guelin.
\newblock Probabilistic relational reasoning for differential privacy.
\newblock {\em {ACM} Transactions on Programming Languages and Systems},
  35(3):9:1--9:49, 2013.
\newblock \href {https://doi.org/10.1145/2492061} {\path{doi:10.1145/2492061}}.

\bibitem[BM94]{BaierM94}
Christel Baier and Mila~E. Majster{-}Cederbaum.
\newblock Denotational semantics in the {CPO} and metric approach.
\newblock {\em Theoretical Computer Science}, 135(2):171--220, 1994.
\newblock \href {https://doi.org/10.1016/0304-3975(94)00046-8}
  {\path{doi:10.1016/0304-3975(94)00046-8}}.

\bibitem[BMPP18]{BacciMPP18}
Giorgio Bacci, Radu Mardare, Prakash Panangaden, and Gordon~D. Plotkin.
\newblock An algebraic theory of markov processes.
\newblock In Anuj Dawar and Erich Gr{\"{a}}del, editors, {\em Proceedings of
  the 33rd Annual {ACM/IEEE} Symposium on Logic in Computer Science, {LICS}
  2018}, pages 679--688. {ACM}, 2018.
\newblock \href {https://doi.org/10.1145/3209108.3209177}
  {\path{doi:10.1145/3209108.3209177}}.

\bibitem[BP15]{BreuvartP15}
Flavien Breuvart and Michele Pagani.
\newblock Modelling coeffects in the relational semantics of linear logic.
\newblock In Stephan Kreutzer, editor, {\em 24th {EACSL} Annual Conference on
  Computer Science Logic, {CSL} 2015}, volume~41 of {\em LIPIcs}, pages
  567--581. Schloss Dagstuhl - Leibniz-Zentrum fuer Informatik, 2015.
\newblock \href {https://doi.org/10.4230/LIPIcs.CSL.2015.567}
  {\path{doi:10.4230/LIPIcs.CSL.2015.567}}.

\bibitem[CD17]{CrubilleDL17}
Rapha{\"{e}}lle Crubill{\'{e}} and Ugo {Dal Lago}.
\newblock Metric reasoning about {\textbackslash}lambda -terms: The general
  case.
\newblock In Hongseok Yang, editor, {\em Programming Languages and Systems -
  26rd European Symposium on Programming, {ESOP} 2017}, volume 10201 of {\em
  Lecture Notes in Computer Science}, pages 341--367. Springer, 2017.
\newblock \href {https://doi.org/10.1007/978-3-662-54434-1\_13}
  {\path{doi:10.1007/978-3-662-54434-1\_13}}.

\bibitem[CGPX14]{ChatzikokolakisGPX14}
Konstantinos Chatzikokolakis, Daniel Gebler, Catuscia Palamidessi, and Lili Xu.
\newblock Generalized bisimulation metrics.
\newblock In Paolo Baldan and Daniele Gorla, editors, {\em Concurrency Theory -
  25th International Conference, {CONCUR} 2014}, volume 8704 of {\em Lecture
  Notes in Computer Science}, pages 32--46. Springer, 2014.
\newblock \href {https://doi.org/10.1007/978-3-662-44584-6\_4}
  {\path{doi:10.1007/978-3-662-44584-6\_4}}.

\bibitem[CM96]{coniglio}
Marcelo~E. Coniglio and Francisco Miraglia.
\newblock Equality in linear logic.
\newblock {\em Logique et Analyse}, 39(153/154):113--151, 1996.
\newblock URL: \url{http://www.jstor.org/stable/44084560}.

\bibitem[CP02]{CERVESATO200219}
Iliano Cervesato and Frank Pfenning.
\newblock A linear logical framework.
\newblock {\em Inform. and Comput.}, 179(1):19--75, 2002.
\newblock \href {https://doi.org/10.1006/inco.2001.2951}
  {\path{doi:10.1006/inco.2001.2951}}.

\bibitem[dAGH{\etalchar{+}}17]{AmorimGHKC17}
Arthur~Azevedo de~Amorim, Marco Gaboardi, Justin Hsu, Shin{-}ya Katsumata, and
  Ikram Cherigui.
\newblock A semantic account of metric preservation.
\newblock In Giuseppe Castagna and Andrew~D. Gordon, editors, {\em Proceedings
  of the 44th Annual {ACM} Symposium on Principles of Programming Languages,
  {POPL}'17}, pages 545--556. {ACM}, 2017.

\bibitem[DG22]{DalLagoG22}
Ugo {Dal Lago} and Francesco Gavazzo.
\newblock A relational theory of effects and coeffects.
\newblock {\em Procedings of {ACM} on Programming Languages}, 6({POPL}):1--28,
  2022.
\newblock \href {https://doi.org/10.1145/3498692} {\path{doi:10.1145/3498692}}.

\bibitem[DGJP04]{DesharnaisGJP04}
Jos{\'{e}}e Desharnais, Vineet Gupta, Radha Jagadeesan, and Prakash Panangaden.
\newblock Metrics for labelled markov processes.
\newblock {\em Theoretical Computer Science}, 318(3):323--354, 2004.
\newblock \href {https://doi.org/10.1016/j.tcs.2003.09.013}
  {\path{doi:10.1016/j.tcs.2003.09.013}}.

\bibitem[DGY19]{DalLagoGY19}
Ugo {Dal Lago}, Francesco Gavazzo, and Akira Yoshimizu.
\newblock Differential logical relations, part {I:} the simply-typed case.
\newblock In Christel Baier, Ioannis Chatzigiannakis, Paola Flocchini, and
  Stefano Leonardi, editors, {\em 46th International Colloquium on Automata,
  Languages, and Programming, {ICALP} 2019}, volume 132 of {\em LIPIcs}, pages
  111:1--111:14. Schloss Dagstuhl - Leibniz-Zentrum fuer Informatik, 2019.
\newblock \href {https://doi.org/10.4230/LIPIcs.ICALP.2019.111}
  {\path{doi:10.4230/LIPIcs.ICALP.2019.111}}.

\bibitem[DN23a]{DahlqvistN23mfps}
Fredrik Dahlqvist and Renato Neves.
\newblock A complete v-equational system for graded lambda-calculus.
\newblock In Marie Kerjean and Paul~Blain Levy, editors, {\em Proceedings of
  the 39th Conference on the Mathematical Foundations of Programming Semantics,
  {MFPS} XXXIX}, volume~3 of {\em {EPTICS}}. EpiSciences, 2023.
\newblock \href {https://doi.org/10.46298/ENTICS.12299}
  {\path{doi:10.46298/ENTICS.12299}}.

\bibitem[DN23b]{DahlqvistN23}
Fredrik Dahlqvist and Renato Neves.
\newblock The syntactic side of autonomous categories enriched over generalised
  metric spaces.
\newblock {\em Logical Methods in Computer Science}, 19(4), 2023.
\newblock \href {https://doi.org/10.46298/LMCS-19(4:31)2023}
  {\path{doi:10.46298/LMCS-19(4:31)2023}}.

\bibitem[Doz96]{Hodges1996LogicFF}
Kosta Dozen.
\newblock Equality in substructural logics.
\newblock In W.~Hodges, M.~Hyland, C.~Steinhorn, and J.~K. Truss, editors, {\em
  Logic: from foundations to applications: European Logic Colloquium}, 1996.

\bibitem[DP22]{DagninoP22}
Francesco Dagnino and Fabio Pasquali.
\newblock Logical foundations of quantitative equality.
\newblock In Christel Baier and Dana Fisman, editors, {\em Proceedings of the
  37th Annual {ACM/IEEE} Symposium on Logic in Computer Science, {LICS} 2022},
  pages 16:1--16:13. {ACM}, 2022.
\newblock \href {https://doi.org/10.1145/3531130.3533337}
  {\path{doi:10.1145/3531130.3533337}}.

\bibitem[DR21]{DagninoR21}
Francesco Dagnino and Giuseppe Rosolini.
\newblock Doctrines, modalities and comonads.
\newblock {\em Mathematical Structures in Computer Science}, page 1–30, 2021.
\newblock \href {https://doi.org/10.1017/S0960129521000207}
  {\path{doi:10.1017/S0960129521000207}}.

\bibitem[EPR20]{EPR}
Jacopo Emmenegger, Fabio Pasquali, and Giuseppe Rosolini.
\newblock Elementary doctrines as coalgebras.
\newblock {\em Journal of Pure and Applied Algebra}, 224(12):106445, 2020.
\newblock \href {https://doi.org/https://doi.org/10.1016/j.jpaa.2020.106445}
  {\path{doi:https://doi.org/10.1016/j.jpaa.2020.106445}}.

\bibitem[Gav18]{Gavazzo18}
Francesco Gavazzo.
\newblock Quantitative behavioural reasoning for higher-order effectful
  programs: Applicative distances.
\newblock In Anuj Dawar and Erich Gr{\"{a}}del, editors, {\em Proceedings of
  the 33rd Annual {ACM/IEEE} Symposium on Logic in Computer Science, {LICS}
  2018}, pages 452--461. {ACM}, 2018.
\newblock \href {https://doi.org/10.1145/3209108.3209149}
  {\path{doi:10.1145/3209108.3209149}}.

\bibitem[GF22]{Gavazzo23}
Francesco Gavazzo and Cecilia~Di Florio.
\newblock Quantitative and metric rewriting: Abstract, non-expansive, and
  graded systems.
\newblock {\em CoRR}, abs/2206.13610, 2022.
\newblock \href {http://arxiv.org/abs/2206.13610} {\path{arXiv:2206.13610}},
  \href {https://doi.org/10.48550/arXiv.2206.13610}
  {\path{doi:10.48550/arXiv.2206.13610}}.

\bibitem[Gir87]{Girard87}
Jean{-}Yves Girard.
\newblock Linear logic.
\newblock {\em Theoretical Computer Science}, 50:1--102, 1987.
\newblock \href {https://doi.org/10.1016/0304-3975(87)90045-4}
  {\path{doi:10.1016/0304-3975(87)90045-4}}.

\bibitem[GKO{\etalchar{+}}16]{GaboardiKOBU16}
Marco Gaboardi, Shin{-}ya Katsumata, Dominic~A. Orchard, Flavien Breuvart, and
  Tarmo Uustalu.
\newblock Combining effects and coeffects via grading.
\newblock In Jacques Garrigue, Gabriele Keller, and Eijiro Sumii, editors, {\em
  Proceedings of the 21st {ACM} International Conference on Functional
  Programming, {ICFP} 2016}, pages 476--489. {ACM}, 2016.
\newblock \href {https://doi.org/10.1145/2951913.2951939}
  {\path{doi:10.1145/2951913.2951939}}.

\bibitem[GSS92]{GirardSS92}
Jean{-}Yves Girard, Andre Scedrov, and Philip~J. Scott.
\newblock Bounded linear logic: {A} modular approach to polynomial-time
  computability.
\newblock {\em Theoretical Computer Science}, 97(1):1--66, 1992.
\newblock \href {https://doi.org/10.1016/0304-3975(92)90386-T}
  {\path{doi:10.1016/0304-3975(92)90386-T}}.

\bibitem[Hos07]{Hoshino}
Naohiko Hoshino.
\newblock Linear realizability.
\newblock In Jacques Duparc and Thomas~A. Henzinger, editors, {\em Computer
  Science Logic}, pages 420--434, Berlin, Heidelberg, 2007. Springer Berlin
  Heidelberg.

\bibitem[Jac01]{JacobsB:catltt}
Bart P.~F. Jacobs.
\newblock {\em Categorical Logic and Type Theory}, volume 141 of {\em Studies
  in logic and the foundations of mathematics}.
\newblock North-Holland, 2001.

\bibitem[Kat18]{Katsumata18}
Shin{-}ya Katsumata.
\newblock A double category theoretic analysis of graded linear exponential
  comonads.
\newblock In Christel Baier and Ugo {Dal Lago}, editors, {\em Foundations of
  Software Science and Computation Structures - 21st International Conference,
  {FOSSACS} 2018}, volume 10803 of {\em Lecture Notes in Computer Science},
  pages 110--127. Springer, 2018.
\newblock \href {https://doi.org/10.1007/978-3-319-89366-2\_6}
  {\path{doi:10.1007/978-3-319-89366-2\_6}}.

\bibitem[Koc95]{Kock95}
Anders Kock.
\newblock Monads for which structures are adjoint to units.
\newblock {\em J. Pure Appl. Algebra}, 104:41--59, 1995.
\newblock \href {https://doi.org/10.1016/0022-4049(94)00111-U}
  {\path{doi:10.1016/0022-4049(94)00111-U}}.

\bibitem[KPB15]{KrishnaswamiPB15}
Neelakantan~R. Krishnaswami, Pierre Pradic, and Nick Benton.
\newblock Integrating linear and dependent types.
\newblock In Sriram~K. Rajamani and David Walker, editors, {\em Proceedings of
  the 42nd Annual {ACM} Symposium on Principles of Programming Languages,
  {POPL}'15}, pages 17--30. {ACM}, 2015.
\newblock \href {https://doi.org/10.1145/2676726.2676969}
  {\path{doi:10.1145/2676726.2676969}}.

\bibitem[Law69]{LawvereF:adjif}
F.~William Lawvere.
\newblock Adjointness in foundations.
\newblock {\em Dialectica}, 23:281--296, 1969.

\bibitem[Law70]{LawvereF:equhcs}
F.~William Lawvere.
\newblock Equality in hyperdoctrines and comprehension schema as an adjoint
  functor.
\newblock In A.~Heller, editor, {\em Proc. {N}ew {Y}ork {S}ymposium on
  {A}pplication of {C}ategorical {A}lgebra}, pages 1--14. Amer.{M}ath.{S}oc.,
  1970.

\bibitem[Law73]{LawvereF:metsgl}
{F. William} Lawvere.
\newblock Metric spaces, generalized logic, and closed categories.
\newblock {\em Rend. Sem. Mat. Fis. Milano}, 43:135--166, 1973.

\bibitem[MIO21]{MoonEO21}
Benjamin Moon, Harley~Eades III, and Dominic Orchard.
\newblock Graded modal dependent type theory.
\newblock In Nobuko Yoshida, editor, {\em Programming Languages and Systems -
  30th European Symposium on Programming, {ESOP} 2021}, volume 12648 of {\em
  Lecture Notes in Computer Science}, pages 462--490. Springer, 2021.
\newblock \href {https://doi.org/10.1007/978-3-030-72019-3\_17}
  {\path{doi:10.1007/978-3-030-72019-3\_17}}.

\bibitem[MPP16]{MardarePP16}
Radu Mardare, Prakash Panangaden, and Gordon~D. Plotkin.
\newblock Quantitative algebraic reasoning.
\newblock In Martin Grohe, Eric Koskinen, and Natarajan Shankar, editors, {\em
  Proceedings of the 31st Annual {ACM/IEEE} Symposium on Logic in Computer
  Science, {LICS} 2016}, pages 700--709. {ACM}, 2016.
\newblock \href {https://doi.org/10.1145/2933575.2934518}
  {\path{doi:10.1145/2933575.2934518}}.

\bibitem[MPP17]{MardarePP17}
Radu Mardare, Prakash Panangaden, and Gordon~D. Plotkin.
\newblock On the axiomatizability of quantitative algebras.
\newblock In {\em Proceedings of the 32nd Annual {ACM/IEEE} Symposium on Logic
  in Computer Science, {LICS} 2017}, pages 1--12. {IEEE} Computer Society,
  2017.
\newblock \href {https://doi.org/10.1109/LICS.2017.8005102}
  {\path{doi:10.1109/LICS.2017.8005102}}.

\bibitem[MPP21]{MardarePP21}
Radu Mardare, Prakash Panangaden, and Gordon~D. Plotkin.
\newblock Fixed-points for quantitative equational logics.
\newblock In {\em Proceedings of the 36th Annual {ACM/IEEE} Symposium on Logic
  in Computer Science, {LICS} 2021}, pages 1--13. {IEEE}, 2021.
\newblock \href {https://doi.org/10.1109/LICS52264.2021.9470662}
  {\path{doi:10.1109/LICS52264.2021.9470662}}.

\bibitem[MR13a]{MaiettiME:eleqc}
{Maria Emilia} Maietti and Giuseppe Rosolini.
\newblock Elementary quotient completion.
\newblock {\em Theory Appl. Categ.}, 27(17):445--463, 2013.

\bibitem[MR13b]{MaiettiME:quofcm}
Maria~Emilia Maietti and Giuseppe Rosolini.
\newblock Quotient completion for the foundation of constructive mathematics.
\newblock {\em Log. Univers.}, 7(3):371--402, 2013.
\newblock \href {https://doi.org/10.1007/s11787-013-0080-2}
  {\path{doi:10.1007/s11787-013-0080-2}}.

\bibitem[MR15]{MaiettiME:exacf}
Maria~Emilia Maietti and Giuseppe Rosolini.
\newblock Unifying exact completions.
\newblock {\em Appl. Categ. Structures}, 23(1):43--52, 2015.
\newblock \href {https://doi.org/10.1007/s10485-013-9360-5}
  {\path{doi:10.1007/s10485-013-9360-5}}.

\bibitem[OLI19]{OrchardLE19}
Dominic Orchard, Vilem{-}Benjamin Liepelt, and Harley~Eades III.
\newblock Quantitative program reasoning with graded modal types.
\newblock {\em Procedings of {ACM} on Programming Languages},
  3({ICFP}):110:1--110:30, 2019.
\newblock \href {https://doi.org/10.1145/3341714} {\path{doi:10.1145/3341714}}.

\bibitem[Pas15]{PasqualiF:cofced}
Fabio Pasquali.
\newblock A co-free construction for elementary doctrines.
\newblock {\em Appl. Categ. Structures}, 23:29--41, Feb 2015.
\newblock \href {https://doi.org/10.1007/s10485-013-9358-z}
  {\path{doi:10.1007/s10485-013-9358-z}}.

\bibitem[Pit00]{PittsCL}
Andrew~M. Pitts.
\newblock Categorical logic.
\newblock In {\em Handbook of logic in computer science, {V}ol. 5}, volume~5 of
  {\em Handbook of Logic in Computer Science}, pages 39--128. Oxford Univ.
  Press, New York, 2000.

\bibitem[POM14]{PetricekOM14}
Tomas Petricek, Dominic~A. Orchard, and Alan Mycroft.
\newblock Coeffects: a calculus of context-dependent computation.
\newblock In Johan Jeuring and Manuel M.~T. Chakravarty, editors, {\em
  Proceedings of the 19th {ACM} International Conference on Functional
  programming, {ICFP} 2014}, pages 123--135. {ACM}, 2014.
\newblock \href {https://doi.org/10.1145/2628136.2628160}
  {\path{doi:10.1145/2628136.2628160}}.

\bibitem[PON21]{PimentelON21}
Elaine Pimentel, Carlos Olarte, and Vivek Nigam.
\newblock {Process-As-Formula Interpretation: A Substructural Multimodal View}.
\newblock In Naoki Kobayashi, editor, {\em 6th International Conference on
  Formal Structures for Computation and Deduction (FSCD 2021)}, volume 195 of
  {\em (LIPIcs)}, pages 3:1--3:21, Dagstuhl, Germany, 2021. Schloss Dagstuhl -
  Leibniz-Zentrum fuer Informatik.
\newblock \href {https://doi.org/10.4230/LIPIcs.FSCD.2021.3}
  {\path{doi:10.4230/LIPIcs.FSCD.2021.3}}.

\bibitem[RP10]{ReedP10}
Jason Reed and Benjamin~C. Pierce.
\newblock Distance makes the types grow stronger: a calculus for differential
  privacy.
\newblock In Paul Hudak and Stephanie Weirich, editors, {\em Proceedings of the
  15th {ACM} International Conference on Functional programming, {ICFP} 2010},
  pages 157--168. {ACM}, 2010.
\newblock \href {https://doi.org/10.1145/1863543.1863568}
  {\path{doi:10.1145/1863543.1863568}}.

\bibitem[See89]{seely1989linear}
Robert Seely.
\newblock Linear logic, *-autonomous categories and cofree coalgebras.
\newblock {\em Contemporary Mathematics}, 92, 1989.

\bibitem[V{\'{a}}k15]{Vakar}
Matthijs V{\'{a}}k{\'{a}}r.
\newblock A categorical semantics for linear logical frameworks.
\newblock In Andrew~M. Pitts, editor, {\em Foundations of Software Science and
  Computation Structures - 18th International Conference, {FoSSaCS} 2015},
  volume 9034 of {\em Lecture Notes in Computer Science}, pages 102--116.
  Springer, 2015.
\newblock \href {https://doi.org/10.1007/978-3-662-46678-0\_7}
  {\path{doi:10.1007/978-3-662-46678-0\_7}}.

\bibitem[vBW05]{BreugelW05}
Franck van Breugel and James Worrell.
\newblock A behavioural pseudometric for probabilistic transition systems.
\newblock {\em Theoretical Computer Science}, 331(1):115--142, 2005.
\newblock \href {https://doi.org/10.1016/j.tcs.2004.09.035}
  {\path{doi:10.1016/j.tcs.2004.09.035}}.

\bibitem[vO08]{OostenJ:reaait}
Jap van Oosten.
\newblock {\em Realizability: {An Introduction to its Categorical S}ide},
  volume 152 of {\em Studies in Logic and the Foundations of Mathematics}.
\newblock North Holland Publishing Company, 2008.

\bibitem[XCL14]{XuCL14}
Lili Xu, Konstantinos Chatzikokolakis, and Huimin Lin.
\newblock Metrics for differential privacy in concurrent systems.
\newblock In Erika {\'{A}}brah{\'{a}}m and Catuscia Palamidessi, editors, {\em
  Formal Techniques for Distributed Objects, Components, and Systems - 34th
  {IFIP} International Conference, {FORTE} 2014}, volume 8461 of {\em Lecture
  Notes in Computer Science}, pages 199--215. Springer, 2014.
\newblock \href {https://doi.org/10.1007/978-3-662-43613-4\_13}
  {\path{doi:10.1007/978-3-662-43613-4\_13}}.

\end{thebibliography}

\end{document}